\titlespacing*{\section} {0pt}{2.5ex plus 1ex minus .2ex}{1.5ex plus .2ex}
\titlespacing*{\subsection} {0pt}{2.25ex plus 1ex minus .2ex}{1ex plus .2ex}
\titlespacing*{\paragraph} {0pt}{2.25ex plus 1ex minus .2ex}{1em}
\definecolor{DarkBlue}{rgb}{0.1,0.1,0.5}
\definecolor{DarkGreen}{rgb}{0.1,0.5,0.1}
\renewcommand*{\backref}[1]{}
\renewcommand*{\backrefalt}[4]{%
    \ifcase #1 (Not cited.)%
    \or        (Cited on page~#2)%
    \else      (Cited on pages~#2)%
    \fi}
\newcommand{\extra}[1]{}
\newtheorem{theorem}{Theorem}
\newtheorem{definition}[theorem]{Definition}
\newtheorem{lemma}[theorem]{Lemma}
\def\squareforqed{\hbox{\rlap{$\sqcap$}$\sqcup$}}
\def\qed{\ifmmode\squareforqed\else{\unskip\nobreak\hfil
\penalty50\hskip1em\null\nobreak\hfil\squareforqed
\parfillskip=0pt\finalhyphendemerits=0\endgraf}\fi}
\def\endenv{\ifmmode\;\else{\unskip\nobreak\hfil
\penalty50\hskip1em\null\nobreak\hfil\;
\parfillskip=0pt\finalhyphendemerits=0\endgraf}\fi}
\renewenvironment{proof}{\noindent \textbf{{Proof~} }}{\qed\medskip}
\newenvironment{proof+}[1]{\noindent \textbf{{Proof #1~} }}{\qed\medskip}
\newenvironment{remark}{\noindent \textit{{Remark.~}}}{\qed}
\newenvironment{example}{\noindent \textbf{{Example~}}}{\qed}
\mathchardef\ordinarycolon\mathcode`\:
\def\vcentcolon{\mathrel{\mathop\ordinarycolon}}
\mathchardef\mhyphen="2D
\newcommand{\G}{\mathcal{G}}
\newcommand{\I}{\mathcal{I}}
\newcommand{\A}{\mathbb{A}}
\newcommand{\B}{\mathbb{B}}
\newcommand{\C}{\mathbb{C}}
\renewcommand{\S}{\mathbb{S}}
\newcommand{\T}{\mathbb{T}}
\newcommand{\E}{\mathbb{E}}
\DeclareMathOperator{\lw}{LW}
\title{Throttling Equilibria in Auction Markets
\author{
\sf Xi Chen\footnote{Supported by NSF grants IIS-1838154 and  CCF-1703925.}\\
\sf Columbia University \\
\small\texttt{xichen@cs.columbia.edu}
\and
\sf Christian Kroer \\
\sf Columbia University\\
\small\texttt{christian.kroer@columbia.edu}
\and
\sf Rachitesh Kumar\\
\sf Columbia University \\
\small\texttt{rk3068@columbia.edu}}

\date{\vspace{-2em}}
}
\begin{document}

\maketitle

\begin{abstract}
Throttling is a popular method of budget management for online ad auctions in which the platform modulates the participation probability of an advertiser in order to smoothly spend her budget across many auctions. In this work, we investigate the setting in which all of the advertisers simultaneously employ throttling to manage their budgets, and we do so for both first-price and second-price auctions. We analyze the structural and computational properties of the resulting equilibria. For first-price auctions, we show that a unique equilibrium always exists, is well-behaved and can be computed efficiently via t\^atonnement-style decentralized dynamics. In contrast, for second-price auctions, we prove that even though an equilibrium always exists, the problem of finding an equilibrium is PPAD-complete, there can be multiple equilibria, and it is NP-hard to find the revenue maximizing one. We also compare the equilibrium outcomes of throttling to those of multiplicative pacing, which is the other most popular and well-studied method of budget management. Finally, we characterize the Price of Anarchy of these equilibria for liquid welfare by showing that it is at most 2 for both first-price and second-price auctions, and demonstrating that our bound is tight.
\end{abstract}
\section{Introduction}

Online ad auctions are the workhorse of the internet advertising industry; when a user visits an internet-based platform, an auction is run among the interested advertisers to determine the ad to be displayed to the user. Due to the large volume of these auctions, many advertisers are budget constrained: if they are allowed to participate in all the auctions they are interested in, they would end up spending more than their budget. This motivates the platforms to offer budget-management services. The focus of this paper is a popular budget-management service known as \emph{throttling} (or alternatively as \emph{probabilistic pacing}) which is offered by internet giants like Facebook~\citep{facebookguide}, Google~\citep{karande2013optimizing} and LinkedIn~\citep{agarwal2014budget}. Throttling manages the expenditure of an advertiser by controlling the probability with which she participates in each individual auction.

The use of participation probability as a control lever allows the platform to evenly spread out an advertiser's expenditure throughout her advertising campaign, while ensuring that she does not spend more than her budget. Furthermore, in contrast to other budget-management methods like multiplicative pacing, throttling does not modify the bids of the advertisers to achieve this, which is essential for advertisers aiming to maintain a stable cost-per-opportunity~\citep{facebookguide}. Additionally, in practice, many advertisers do not opt into budget-management services that modify their bids, forcing the platform to satisfy their budget constraint by only controlling their participation probability, as in throttling~\citep{karande2013optimizing}. Importantly, throttling also gives advertisers a more representative sample of users for which they are eligible and their bid is competitive~\citep{karande2013optimizing}. 
This is in contrast to budget-management approaches that modify bids, such as multiplicative pacing, which biases the allocation towards users where the advertiser has a high probability of getting a click, relative to other advertisers.
% This is because, under throttling, advertisers tend to have a positive probability of winning any auction for which they submit a positive bid. 
% Somewhat counterintuitively, advertisers may not want to introduce this bias, preferring instead that their bids remain unmodified.
% The above considerations make throttling a prevalent budget-management in practice, as evidenced by the fact that it is offered by most major platforms.
Many advertisers place a premium on the predictability and representative samples offered by unmodified bids, motivating the platforms to offer throttling as a budget-management option. 
% To the best of our knowledge, no other popular budget-management strategy achieves this property. 

Throttling has received significant attention in previous work, the vast majority of which studies it from the perspective of a single buyer participating in repeated generalized second-price auctions (see Section~\ref{sec:related-works}). In contrast, the scenario where all of the buyers simultaneously employ throttling to manage their budgets, and the resulting system-level properties, have received very little attention. In this paper, we attempt to remedy this situation by providing a structural and computational analysis of simultaneous multi-buyer throttling for both first-price and second-price auctions. More specifically, we analyze the resulting games, with an emphasis on equilibria and repeated play.

\subsection{Main Contributions}

We define a \emph{throttling game} with budget-constrained buyers (advertisers) and stochastic good types (user types), in which each buyer chooses the probability with which she participates in the auction, with the goal of maximizing her expected utility while satisfying her budget constraint in expectation. Repeated play of this throttling game captures the repeated online ad auction setting in which each buyer employs throttling to manage their budget. Furthermore, we define the concept of \emph{throttling equilibrium} for this game, show its equivalence to pure strategy Nash equilibrium, and analyze it with an emphasis on its structural and computational properties. We summarize our results below.

\textbf{First-price Auctions:} We show that a throttling equilibrium always exists, and characterize it as the maximal element in the set of participation probabilities that result in all buyers satisfying their budgets (Theorem~\ref{thm:existence-first-price}). Furthermore, we use this characterization to establish its uniqueness.  On the computational front, we describe decentralized dynamics in which buyers repeatedly play the throttling game and make simple t\^{a}tonnement-style adjustments to their participation probabilities based on their expected expenditure (Algorithm~\ref{alg:dynamics}). We show that these t\^{a}tonnement-style dynamics converge to an approximate throttling equilibrium in polynomial time (Theorem~\ref{thm:first-price-dynamics}).

\textbf{Second-price Auctions:} We begin by establishing that a throttling equilibrium always exists for second-price auctions (Theorem~\ref{theorem:existence_second_price}), but find that it may not be unique, and for some games all throttling equilibria can be irrational. Next, we prove results about the computational complexity of finding throttling equilibria, which requires the use of terminology from computational complexity theory. Before summarizing those results, we make a note for readers who may not be familiar with complexity theory: In order to make our results more accessible, we provide an informal description of them at the head of every subsection, in an attempt to avoid letting complexity-theoretic terminology obscure the conclusions derived from the result. 
Continuing on with the summary of our results, we prove that the problem of computing approximate throttling equilibria is PPAD-hard even when 
% (i) each buyer is interested in at most a constant number of goods, and (ii)
each good has at most three bids (Theorem~\ref{thm:second-price-PPAD-hard}), by showing a reduction from the PPAD-hard problem of computing an approximate equilibrium of a threshold game~\citep{PB2021}. As a consequence, we show that, unlike first-price auctions, no dynamics can converge in polynomial time to a second-price throttling equilibrium (assuming PPAD-complete problems cannot be solved in polynomial time). Furthermore, we place the problem of computing approximate throttling equilibria  in the class PPAD by showing a reduction to the problem of finding a Brouwer fixed point of a Lipschitz mapping from a unit hypercube to itself (Theorem~\ref{thm:second-price-PPAD-membership}); the latter is known to be in PPAD via Sperner's lemma (e.g. see \citealt{chen2009settling}). We provide additional evidence of the computational challenges that afflict throttling for second-price auctions by proving the NP-hardness of finding a revenue-maximizing approximate throttling equilibrium (Theorem~\ref{thm:revenue_np_hard}). We complement these hardness results by describing a polynomial-time algorithm for computing throttling equilibria for the special case in which there are at most two bids on each good (Algorithm~\ref{alg:two_buyer}), thereby precisely delineating the boundary of tractability.

\textbf{Comparing Pacing and Throttling:} As mentioned earlier, multiplicative pacing is another popular method of budget management, where buyers shade their bids to smoothly spend their budgets. In contrast to throttling, equilibria and dynamics in settings where all of the buyers use pacing have received significant attention~\citep{borgs2007dynamics, balseiro2019learning, conitzer2018multiplicative, conitzer2019pacing, chen2021complexity}. This allows us to compare two of the most popular methods of budget management~\citep{informsarticle}. We show that, for first-price auctions, the revenue of the unique throttling equilibrium and the unique pacing equilibrium, although incomparable directly, are always within a factor of 2 of each other (Theorem~\ref{thm:revenue-comparison}). Moreover, we find that pacing and throttling equilibria share a remarkably similar computational and structural landscape, as summarized in Table~\ref{table:first_price} and Table~\ref{table:second_price}. In view of this comparison, our work can be seen as providing the analogous set of results for throttling equilibria that \citet{borgs2007dynamics, conitzer2018multiplicative, conitzer2019pacing, chen2021complexity} proved for pacing equilibria. Our results reaffirm what the analysis of pacing suggested: budget management for first-price auctions is more well-behaved as compared to second-price auctions.

\begin{table}[H]
\scriptsize
\centering
% \begin{tabular}{llllll}
%                                 & Existence                   & Rationality			&Multiplicity                     & Computational Complexity                        & Efficient Dynamics                              \\ 
% \cline{2-6}
% \multicolumn{1}{l|}{Throttling} & \multicolumn{1}{l|}{Always} & \multicolumn{1}{l|}{Not always} & \multicolumn{1}{l|}{Always unique} & \multicolumn{1}{l|}{Poly.-time for approx. eq.} & \multicolumn{1}{l|}{For approx. eq.}  \\ 
% \cline{2-6}
% \multicolumn{1}{l|}{Pacing}     & \multicolumn{1}{l|}{Always~\cite{conitzer2019pacing}} & \multicolumn{1}{l|}{Always~\cite{conitzer2019pacing}}  & \multicolumn{1}{l|}{Always unique~\cite{conitzer2019pacing}}   & \multicolumn{1}{l|}{Poly.-time for exact eq. ~\cite{conitzer2019pacing}}   & \multicolumn{1}{l|}{For approx. eq.~\cite{borgs2007dynamics}}  \\ 
% \cline{2-6}
%                                 &                             &                                 &                                                 &                                       \\
%                                 &                             &                                 &                                                 &                                      
% \end{tabular}\vspace{-0.5cm}

\begin{tabular}{|l|l|l|l|l|}
\hline
\textbf{Existence}                                   & \textbf{Rationality}                                 & \textbf{Multiplicity}                                & \textbf{Computational Complexity}                    & \textbf{Efficient Dynamics}                        \\ \hline
Always                                      & Not always                                  & Always unique                               & Poly.-time for approx. eq.                  & For approx. eq.                           \\ \hline
Always                                      & Always                                      & Always unique                               & Poly.-time for exact eq.                    & For approx. eq.                           \\
\citep{conitzer2019pacing} & \citep{conitzer2019pacing} & \citep{conitzer2019pacing} & \citep{conitzer2019pacing} & \citep{borgs2007dynamics} \\ \hline
\end{tabular}

\caption{Comparison of throttling (top row) and pacing equilibria (bottom row) for \textbf{first-price} auctions.}
\label{table:first_price}
\end{table}

\begin{table}[H]
\scriptsize
\centering

\begin{tabular}{|l|l|l|l|l|}
\hline
\textbf{Existence} & \textbf{Rationality} & \textbf{Multiplicity} & \textbf{Computational Complexity}                     & \textbf{Revenue Max.} \\ \hline
Always    & Not always  & Possibly infinite     & PPAD-complete for approx. eq                 & NP-hard      \\ \hline
Always    & Always      & Possible     & PPAD-complete for both exact & NP-hard      \\
\citep{conitzer2018multiplicative}     & \citep{chen2021complexity}        & \citep{conitzer2018multiplicative}        & and approx. eq. \citep{chen2021complexity}                                         & \citep{conitzer2018multiplicative}        \\ \hline
\end{tabular}

\caption{Comparison of throttling equilibria (top row) and pacing equilibria (bottom row) for \textbf{second-price} auctions.}% under the assumption that PPAD-hard problems cannot be solved in polynomial time.}
\label{table:second_price}
\end{table}

\textbf{Price of Anarchy:}  Liquid welfare \citep{dobzinski2014efficiency, azar2017liquid} is a measure of efficiency for settings with budget constraints, like the one considered in this work. It corresponds to the maximum revenue (liquidity) that can be extracted with full knowledge of the buyers' values/bids, and reduces to social welfare when the budgets are non-binding. We show that the liquid welfare under any throttling equilibrium is at most a factor of 2 away from the liquid welfare that can be obtained by a central planner with complete information of the buyers bids/values, i.e., the Price of Anarchy is at most 2. We do so for both first-price and second-price auctions. Moreover, we provide examples to show that this bound is tight for both auction formats.

\normalsize

\subsection{Additional Related Work}\label{sec:related-works}

Budget management in online ad auctions has received widespread attention in the literature. Here, we review the papers which most closely relate to ours. We begin by reviewing the work on throttling, then give a broad overview of the other models for bidding under budget constraints considered in the literature, and finally conclude with a discussion of the work on multiplicative pacing, which we use in our comparisons.

Among work on throttling,
\citet{balseiro2021budget} is the closest to ours, and acts as the inspiration for our model and terminology. In \citet{balseiro2021budget}, the authors study system equilibria under various budget management strategies in second-price auctions, one of them being throttling. They show that throttling satisfies desirable incentive properties, and, under the special case when buyer values are symmetric, they compare equilibrium seller revenue and total welfare for throttling to a variety of other budget management techniques. Crucially, unlike our work, \citet{balseiro2021budget} lacks a computational analysis and focuses solely on second-price auctions. Throttling has also received significant attention in other lines of research. \citet{agarwal2014budget} study throttling in generalized second-price (GSP) auctions from the perspective of a single buyer, provide an algorithm which determines the participation probability based on user traffic forecasts, and analyze its performance empirically on real data from LinkedIn. Similarly, \citet{xu2015smart} provide and empirically evaluate practical algorithms for throttling on data from demand-side platforms. \citet{karande2013optimizing} use throttling (under the name Vanilla Probabilistic Throttling) as the benchmark in the GSP auction setting to evaluate the budget management algorithm they describe on data from Google, and find that it empirically outperforms throttling on the metrics they study. Importantly, they do not engage in an equilibrium analysis, and their algorithm does not provide a representative sample of the traffic to advertisers.

There is also a significant body of work which proposes alternatives to pacing and throttling methods.
\citet{charles2013budget} study regret-free budget-smoothing policies in which the platform selects the random subset of buyers that participate in the GSP auction for each good. They show that such policies always exist, and, under the small-bids assumption, give an efficient algorithm for the special case of second-price auctions. 
% But, as opposed to simple throttling policies, their policies require complete information about the distribution of good types, do not yield a representative sample of the traffic, and maintain a large amount of information in the form of subsets of participating advertisers for each good type. 
There is also a significant body of work that models budget management in first-price auctions as an allocation problem under stochastic and adversarial input, starting with the work of \citet{mehta2007adwords}. See \citet{mehta2013online} for a survey.
We emphasize that in this work, our goal is to study the budget-management mechanisms that are employed in practice, which is why we focus on comparing pacing and throttling.

Unlike throttling, equilibrium analysis for the setting where all buyers simultaneously use multiplicative pacing is well-understood. \citet{conitzer2018multiplicative} define and study pacing equilibria in second-price auctions. They show that it always exists and study its structural properties. \citet{chen2021complexity} recently analyzed the computational complexity of finding pacing equilibria in second-price auctions, and proved that the problem is PPAD-complete. For first-price auctions, \citet{borgs2007dynamics} describe a simple t\^{a}tonnement-style dynamics and prove its efficient convergence to a pacing equilibrium. \citet{conitzer2019pacing} characterize the structural properties of pacing equilibria in first-price auctions and give a market-equilibrium based algorithm for computing it. A summary of these results can be found in the second row of Table~\ref{table:first_price} and Table~\ref{table:second_price}.

Finally, Price of Anarchy for liquid welfare has been studied in a variety of settings with budget-constrained buyers. \citet{azar2017liquid} show that the Price of Anarchy of liquid welfare is 2 for simultaneous first-price and second-price auctions in the multi-item setting. For mixed-strategy Nash equilibria they showed an upper bound of  51.5, with both results requiring the ``no over-budgeting" assumption that requires the sum of bids to be less than the budget. Their results cannot be compared to ours because we only require the budget constraints to be satisfied in expectation, and more generally the settings have different action sets (throttling parameters versus bids). \citet{balseiro2021contextual} study a contextual-value auction model with strategic agents and establish the existence of pacing-based equilibria for all standard auction formats, including first-price and second-price auctions as special cases. They go on to show that pacing equilibria have a Price of Anarchy of at most 2 for all standard auctions in their model. \citet{gaitonde2022budget} study the repeated-auction setting and show that the Price of Anarchy is bounded above by 2 even if the system is not in equilibrium, provided that all of the buyers use a generalization of the dual-descent-based pacing algorithm introduced in \citet{balseiro2019learning}.

\section{Model}

Consider a seller who has $m$ types of goods to sell, and $n$ budget constrained buyers who are interested in buying these goods. The seller runs an auction amongst the buyers in order to make the sale. We assume that the type of good to be sold is drawn from some known distribution $d = (d_1, \dots, d_m)$, i.e., the good to be sold is of type $j$ with probability $d_j$. Buyer $i$ bids $\tilde{b}_{ij}$ on good type $j$, for all $i \in [n], j \in [m]$, and has a per-auction budget of $B_i > 0$. 
To control their budget expenditure, each buyer $i$ is associated with a \emph{throttling parameter} $\theta_i \in [0,1]$, which represents the probability with which she participates in the auction: each buyer $i$ independently flips a biased coin which comes up heads with probability $\theta_i$, and submits their bid $\tilde{b}_{ij}$ if the coin comes up heads, while submitting no bid if the coin comes up tails. 

We focus on the setting where each buyer wishes to satisfy her budget constraint in expectation, i.e., buyer $i$ would like to spend less than $B_i$ in expectation over the good types and participation coin flips of all buyers. Requiring the budget constraints to be satisfied in expectation draws its motivation from the large number of auctions that are run by online-advertising platforms, in conjunction with concentration arguments, and has been employed by previous works on budget management in online auctions (see, e.g.,~\cite{gummadi2012repeated,abhishek2013optimal,balseiro2015repeated,balseiro2017budget,balseiro2019learning,conitzer2018multiplicative}). Additionally, in this paper, we restrict our attention to the two most commonly-used auction formats in online advertising: first-price auctions and second-price auctions. In a first-price auction, the participating buyer with the highest bid wins the good and pays her bid, whereas in a second-price auction, the participating buyer with the highest bid wins the good and pays the second-highest bid among the participating buyers. Our model  can be interpreted as a discrete version of the one defined in \cite{balseiro2021budget}.

Before proceeding further, we introduce some additional notation that allows us to capture the
stochastic nature of the good types via a rescaling of the bids, thereby allowing us to analyze the setting as a deterministic multi-good auction problem:
Set $b_{ij} \coloneqq d_j \tilde{b}_{ij}$ for all $i \in [n], j \in [m]$. Since the participation of buyers is independent of the good type and we are only concerned with expected payments, the good type distribution $d = (d_j)_j$ and the bids $\{\tilde{b}_{ij}\}_{i,j}$ are consequential only insofar as they determine $\{b_{ij}\}_{i,j}$. Therefore, with some abuse of terminology, going forward, we refer to $b_{ij}$ as the bid of buyer $i$ on good $j$ (instead of $\tilde{b}_{ij}$, which will no longer be used).\footnote{This deterministic view is equivalent to the  model of \citet{conitzer2018multiplicative}, except that we focus on probabilistic throttling for managing budgets, whereas they focus on multiplicative pacing. Their model can similarly be viewed as a stochastic setting.}
Furthermore, to simplify our analysis, we will assume that ties are broken lexicographically, i.e., the smaller buyer number wins in case of a tie. Our results continue to hold for all other tie-breaking priority orders over the buyers (even when they are different for each good). The lexicographic tie-breaking rule allows for simplified notation, albeit with some abuse: We will write $b_{ij} > b_{kj}$ to mean that either $b_{ij}$ is strictly greater than $b_{kj}$, or $b_{ij} = b_{kj}$ and $i<k$. Finally, we refer to any tuple $\left(n, m, (b_{ij}), (B_i)\right)$ %which specifies an auction setting with budget-constrained buyers 
as a \emph{throttling game}. 

In online-advertising auctions, the buyers (or, more typically in practice, the platform on behalf of the buyers) attempt to satisfy their budget constraints by adjusting their throttling parameters. This naturally leads to a game where each buyer's strategy is her throttling parameter. We use $p(\theta)_{ij}$ to denote the expected payment of buyer $i$ on good $j$ when buyers use $\theta = (\theta_1, \dots, \theta_n)$ to decide their participation probabilities. Let $X_i$ be a random variable such that $X_i = 1$ if buyer $i$ participates and $X_i = 0$ if buyer $i$ does not participate. Then, by our modeling assumptions, $X_i$ is a Bernoulli$(\theta_i)$ random variable. More concretely, $p(\theta)_{ij}$ can be defined as follows:
\begin{itemize}
	\item First-price auction: $p(\theta)_{ij} = \E \left[X_i b_{ij} \prod_{k: b_{kj} > b_{ij}} (1 - X_k)  \right] = \theta_i b_{ij} \prod_{k: b_{kj} > b_{ij}} (1 - \theta_k)$
	\item Second-price auction:
	\begin{align*}
	    p(\theta)_{ij} &= \E \left[ \sum_{\ell: b_{\ell j} < b_{ij}}  b_{\ell j} X_i  X_\ell  \prod_{k \neq i: b_{kj} > b_{\ell j}} (1 - X_k) \right] 
	     = \sum_{\ell: b_{\ell j} < b_{ij}}  b_{\ell j} \theta_i  \theta_\ell  \prod_{k \neq i: b_{kj} > b_{\ell j}} (1 - \theta_k)
	\end{align*}
\end{itemize}

We overload $p(\theta)_{ij}$ to represent the expected payment in both auction formats; the auction format will be clear from the context. We assume here that the participation probability of a buyer across goods is perfectly correlated for simplicity ($X_i$ is the same for all $j$). Any other correlation structure, e.g. independent across goods, would also lead to the same results due to linearity of expectation. Next, we define the equilibrium concept which will be the main object of study in this work.

\begin{definition}[Throttling Equilibrium]\label{def:exact_throt_eq}
	Given a throttling game $\left(n, m, (b_{ij}), (B_i)\right)$, a vector of throttling parameters $\theta = (\theta_1, \dots, \theta_n) \in [0,1]^n$ is called an $\delta$-approximate throttling equilibrium if:
    \begin{enumerate}
        \item Budget constraints are satisfied: $\sum_j p(\theta)_{ij} \leq B_i$ for all $i \in [n]$
        \item No unnecessary throttling occurs: If $\sum_j p(\theta)_{ij} < B_i$, then $\theta_i = 1$
    \end{enumerate}
\end{definition}

The above definition applies to both first-price and second-price auctions using the corresponding payment rule $p(\theta)_{ij}$. Definition~\ref{def:exact_throt_eq} draws its motivation from the fact that, in a natural utility model, throttling equilibria are essentially equivalent to pure Nash Equilibria, which we describe next. Consider a throttling game $\left(n, m, (b_{ij}), (B_i)\right)$. Fix an auction format: either first-price or second-price. Suppose buyer $i$ has value $v_{ij}$ for good type $j$ for all $i \in [n], j \in [m]$. We make the natural assumption that the buyers bid less than their value, i.e., $d_i v_{ij} \geq b_{ij}$ for all $i \in [n], j \in [m]$ in second-price auctions and strictly less than their value, i.e., $d_i v_{ij} > b_{ij}$ for all $i \in [n], j \in [m]$ in first-price auctions. Define a new game $G$ in which each buyer $i$'s strategy is her throttling parameter $\theta_i$ and her utility function is given by 
\begin{align*}
	u_i(\theta) = \begin{cases}
		\sum_j \left(v_{ij} d_i  \theta_i \prod_{k: b_{kj} > b_{ij}} (1 - \theta_k)  - p(\theta)_{ij} \right) & \text{if } \sum_j p(\theta)_{ij} \leq B_i\\
		-\infty &\text{otherwise}
	\end{cases}	
\end{align*}
This utility is simply the expected quasi-linear utility modified to ascribe a utility value of negative infinity for budget violations. Since all of the buyers get a non-negative utility from winning any good, increasing the throttling parameter improves utility so long as the budget constraint is satisfied. This makes it easy to see that every throttling equilibrium of the throttling game $\left(n, m, (b_{ij}), (B_i)\right)$ is a Nash equilibrium of the corresponding game $G$. Furthermore, it is also straightforward to check that a pure Nash equilibrium of $G$ is not a throttling equilibrium only in the following scenario: There is a buyer who spends 0 in the Nash equilibrium and has a throttling parameter strictly less than 1. In this scenario, setting the throttling parameter of all such buyers to 1 yields a throttling equilibrium with exactly the same expected allocation and payment for all the buyers as under the Nash equilibrium. Hence, given a throttling game $\left(n, m, (b_{ij}), (B_i)\right)$, for every pure Nash equilibrium of the corresponding game $G$, there is a throttling equilibrium with the same expected allocation and revenue.

We conclude this section by defining an approximate version of throttling equilibrium, which allows us to side-step issues of irrationality that can plague exact equilibria (see Example~\ref{example:first-price-irrational} and Example~\ref{example:irrational_eq}).

\begin{definition}[Approximate Throttling Equilibrium]\label{def:throt_eq}
    Given a throttling game $\left(n, m, (b_{ij}), (B_i)\right)$, a vector of throttling parameters $\theta = (\theta_1, \dots, \theta_n)$ is called an $\delta$-approximate throttling equilibrium if:
    \begin{enumerate}
        \item Budget constraints are satisfied: $\sum_j p(\theta)_{ij} \leq B_i$ for all $i \in [n]$
        \item Not too much unnecessary throttling occurs: If $\sum_j p(\theta)_{ij} < (1- \delta) B_i$, then $\theta_i \geq 1- \delta$
    \end{enumerate}
\end{definition}
\section{Throttling in First-price Auctions}

We begin by studying throttling equilibria in the first-price auction setting. We start by showing that, for first-price auctions, there always exists a unique throttling equilibrium. We then describe a simple and efficient t\^{a}tonnement-style algorithm for  approximate throttling equilibria.

\subsection{Existence of First-Price Throttling Equilibria}

To show existence, we will characterize the throttling equilibrium as a component-wise maximum of the set of all budget-feasible throttling parameters. This argument is inspired from the technique used in \citet{conitzer2019pacing} for pacing equilibria in first-price auctions. We use the following definition to make the argument precise. 

\begin{definition}[Budget-feasible Throttling Parameters]
	Given a throttling game $\left(n, m, (b_{ij}), (B_i)\right)$, a vector of throttling parameters $\theta \in [0,1]^n$ is called budget-feasible if every buyer satisfies her budget constraints, i.e., $\sum_j p(\theta)_{ij} \leq B_i$ for all buyers $i \in [n]$.
\end{definition}

The following lemma captures the crucial fact that the component-wise maximum of two budget-feasible throttling parameters is also budget-feasible.

\begin{lemma}\label{lemma:pairwise_max}
	Given a throttling game $\left(n, m, (b_{ij}), (B_i)\right)$, if $\theta, \tilde{\theta} \in [0,1]^n$ are budget-feasible, then $\max(\theta, \tilde{\theta}) \coloneqq (\max(\theta_i, \tilde{\theta}_i))_i$ is also budget-feasible.
\end{lemma}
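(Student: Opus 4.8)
The plan is to fix an arbitrary buyer $i$ and show that her expenditure under $\theta^* := \max(\theta, \tilde\theta)$ is at most her expenditure under whichever of $\theta, \tilde\theta$ assigns her the larger throttling parameter — say WLOG $\theta^*_i = \theta_i$ (i.e.\ $\theta_i \geq \tilde\theta_i$). Since $\theta$ is budget-feasible, this gives $\sum_j p(\theta^*)_{ij} \leq \sum_j p(\theta)_{ij} \leq B_i$, which is exactly what we need. So the heart of the matter is the monotonicity claim: $p(\theta^*)_{ij} \leq p(\theta)_{ij}$ for every good $j$, given that $\theta^*_i = \theta_i$ and $\theta^*_k \geq \theta_k$ for all $k$.

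To see this, recall the first-price payment formula $p(\theta)_{ij} = \theta_i b_{ij} \prod_{k : b_{kj} > b_{ij}} (1 - \theta_k)$. Comparing $p(\theta^*)_{ij}$ with $p(\theta)_{ij}$: the leading factor $\theta_i b_{ij}$ is unchanged because $\theta^*_i = \theta_i$; and each factor $(1 - \theta^*_k)$ in the product is at most the corresponding factor $(1 - \theta_k)$ because $\theta^*_k \geq \theta_k$. Since every factor is nonnegative (all throttling parameters lie in $[0,1]$ and $b_{ij} \geq 0$), the product can only shrink, so $p(\theta^*)_{ij} \leq p(\theta)_{ij}$. Summing over $j$ and invoking budget-feasibility of $\theta$ completes the argument; by symmetry the same works if instead $\theta^*_i = \tilde\theta_i$, using budget-feasibility of $\tilde\theta$.

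I do not anticipate a real obstacle here — the first-price payment is a product of a single own-term times competitors' non-participation probabilities, so raising a buyer's own parameter while (weakly) raising everyone else's only hurts that buyer. The one point requiring a sliver of care is the book-keeping with the lexicographic tie-breaking convention hidden inside ``$b_{kj} > b_{ij}$'': one should note that the index set $\{k : b_{kj} > b_{ij}\}$ over which the product is taken depends only on the bids, not on the throttling parameters, so it is identical for $\theta$ and $\theta^*$, and the factor-by-factor comparison is legitimate. (This is in contrast to the second-price payment, whose more intricate form is presumably why the lemma is stated only for the quantity $p(\theta)_{ij}$ appearing in the definition — but for the first-price rule the monotonicity is immediate.)
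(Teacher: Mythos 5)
Your proposal is correct and follows essentially the same argument as the paper: fix a buyer $i$, assume WLOG $\max(\theta_i,\tilde\theta_i)=\theta_i$, compare $p(\max(\theta,\tilde\theta))_{ij}$ to $p(\theta)_{ij}$ factor-by-factor using $(1-\max(\theta_k,\tilde\theta_k))\le(1-\theta_k)$, then sum over goods and invoke budget-feasibility of $\theta$. The extra remark about the index set $\{k: b_{kj}>b_{ij}\}$ being independent of the throttling parameters is a fine, if unnecessary, clarification.
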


\begin{proof}
	Fix $i \in [m]$ and $j \in [m]$. Without loss of generality, we assume that $\theta_i \geq \tilde{\theta}_i$. Observe that
	\begin{align*}
		p(\max(\theta, \tilde{\theta}))_{ij} 	= \prod_{k: b_{kj} > b_{ij}} (1 - \max(\theta_k, \tilde{\theta}_k)) \theta_i b_{ij} \leq \prod_{k: b_{kj} > b_{ij}} (1 - \theta_k) \theta_i b_{ij} = p(\theta)_{ij}
	\end{align*}
	Summing over all goods $j \in [m]$ completes the proof.
\end{proof}

The maximality property shown in \cref{lemma:pairwise_max} is analogous to the maximality property of multiplicative pacing: there it is also the case that component-wise maxima over pacing vectors preserves budget feasibility for first-price auctions, and this was used by \citet{conitzer2019pacing} to show several structural properties of pacing equilibria.
Next we show that the maximality property allows us to establish the existence and uniqueness of throttling equilibria for first-price auction.

\begin{theorem}\label{thm:existence-first-price}
	For every throttling game $\left(n, m, (b_{ij}), (B_i)\right)$, there exists a unique throttling equilibrium $\theta^* \in [0,1]^n$ which is given by
 $
		\theta^*_i = \sup\{\theta_i \in [0,1] \mid \exists\ \theta_{-i} \text{ such that } \theta = (\theta_i, \theta_{-i}) \text{ is budget-feasible}\}.$ 
%	\end{align*}%\footnote{Rachitesh: do we need to define the $(\theta_i, \theta_{-i})$ notation or is it standard?}
\end{theorem}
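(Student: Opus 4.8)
The plan is to define $\theta^*$ by the displayed supremum formula and verify the two things needed: (i) $\theta^*$ is itself a throttling equilibrium, and (ii) any throttling equilibrium must equal $\theta^*$. The key structural fact is \cref{lemma:pairwise_max}, which says the component-wise maximum of two budget-feasible vectors is budget-feasible; I would first upgrade this to the statement that $\theta^*$ is the component-wise supremum of the entire set $\mathcal{F}$ of budget-feasible vectors, and that $\theta^* \in \mathcal{F}$.

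\emph{Step 1: $\theta^* \in \mathcal{F}$ and $\theta^* = \sup \mathcal{F}$ component-wise.} For each coordinate $i$ pick a sequence $\theta^{(i,t)} \in \mathcal{F}$ with $\theta^{(i,t)}_i \to \theta^*_i$. Using \cref{lemma:pairwise_max} repeatedly, $\theta^{(t)} := \max_i \theta^{(i,t)}$ is in $\mathcal{F}$ and has $\theta^{(t)}_j \ge \theta^{(j,t)}_j$ for every $j$; since $[0,1]^n$ is compact, pass to a convergent subsequence $\theta^{(t)} \to \hat\theta$. Each $p(\theta)_{ij}$ is a polynomial, hence continuous, so $\sum_j p(\hat\theta)_{ij} \le B_i$, i.e. $\hat\theta \in \mathcal{F}$; and $\hat\theta_i \ge \theta^*_i$ by construction while $\hat\theta_i \le \theta^*_i$ by definition of the sup, so $\hat\theta = \theta^*$. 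In particular the supremum defining $\theta^*_i$ is attained, $\theta^* \in \mathcal{F}$, and $\theta^*$ dominates every element of $\mathcal{F}$ coordinatewise.

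\emph{Step 2: $\theta^*$ is a throttling equilibrium.} Condition 1 of \cref{def:exact_throt_eq} holds since $\theta^* \in \mathcal{F}$. For condition 2, suppose $\sum_j p(\theta^*)_{ij} < B_i$ for some $i$ but $\theta^*_i < 1$. I want to raise $\theta^*_i$ slightly while keeping all buyers feasible, contradicting maximality. Raising $\theta^*_i$ to $\theta^*_i + \varepsilon$ only affects payments on goods $j$: it can only increase buyer $i$'s own payment (continuously, so still $< B_i$ for small $\varepsilon$), and for any other buyer $k$ it changes the factor $(1-\theta^*_i)$ appearing in $p(\theta^*)_{kj}$ only when $b_{ij} > b_{kj}$, in which case the payment \emph{decreases}; buyers $k$ for whom $i$ is irrelevant on good $j$ are unchanged. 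Hence the perturbed vector is still in $\mathcal{F}$ with strictly larger $i$-th coordinate, contradicting $\theta^*_i = \sup\{\dots\}$.

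\emph{Step 3: uniqueness.} Let $\theta$ be any throttling equilibrium; then $\theta \in \mathcal{F}$, so $\theta \le \theta^*$ coordinatewise by Step 1. Suppose $\theta \ne \theta^*$, and let $i$ be a coordinate with $\theta_i < \theta^*_i$. Since $\theta$ satisfies condition 2 of \cref{def:exact_throt_eq} and $\theta_i < 1$, buyer $i$ must spend exactly her budget: $\sum_j p(\theta)_{ij} = B_i$. On the other hand, compare $p(\theta)_{ij}$ with $p(\theta^*)_{ij}$: for each $j$, $p(\theta)_{ij} = \theta_i b_{ij}\prod_{k: b_{kj}>b_{ij}}(1-\theta_k) \le \theta^*_i b_{ij}\prod_{k:b_{kj}>b_{ij}}(1-\theta^*_k) = p(\theta^*)_{ij}$ because $\theta_i \le \theta^*_i$ and $\theta_k \le \theta^*_k$ for all $k$; moreover on any good where $b_{ij}$ is maximal the inequality $\theta_i < \theta^*_i$ is strict, and such a good exists (the highest bidder on some good, or more carefully: if $p(\theta)_{ij} = p(\theta^*)_{ij}$ for all $j$ then summing gives $B_i \le \sum_j p(\theta^*)_{ij} \le B_i$, forcing $\sum_j p(\theta^*)_{ij} = B_i$, and one then argues the factor-by-factor equalities propagate to $\theta_k = \theta^*_k$ for the relevant $k$, eventually reaching $\theta_i = \theta^*_i$, a contradiction). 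Thus $\sum_j p(\theta^*)_{ij} \ge \sum_j p(\theta)_{ij} = B_i$, combined with feasibility gives $\sum_j p(\theta^*)_{ij} = B_i$; but then Step 2's perturbation argument shows we could have raised $\theta^*_i$ — wait, here $\theta^*_i$ need not be $<1$. Instead the clean finish: the set of coordinates where $\theta < \theta^*$ is nonempty; take such $i$, and the inequality chain shows buyer $i$'s spend under $\theta^*$ is at least $B_i$, hence exactly $B_i$, and all the per-good inequalities are equalities, which (by induction downward through the bid order, since each equality $\theta_i b_{ij}\prod(1-\theta_k) = \theta^*_i b_{ij}\prod(1-\theta^*_k)$ with $\theta \le \theta^*$ forces equality of each factor) forces $\theta_k = \theta^*_k$ for every $k$ with $b_{kj} > b_{ij}$ on a good where $i$ bids, and ultimately $\theta_i = \theta^*_i$ — contradiction. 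Hence $\theta = \theta^*$.

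The main obstacle I anticipate is Step 3: turning "payments are equal $\Rightarrow$ throttling parameters are equal" into a rigorous argument, since the payment of buyer $i$ is insensitive to $\theta_k$ once some higher bidder already has $\theta = 1$, so strict monotonicity can fail in degenerate configurations. The fix is to argue by a minimal-counterexample / downward induction on the bid ordering, or to sidestep it by instead showing directly that any $\theta \in \mathcal{F}$ with $\theta \lneq \theta^*$ admits a coordinate that can be increased while staying in $\mathcal{F}$ (mirroring Step 2), so $\theta$ violates condition 2 and cannot be an equilibrium.
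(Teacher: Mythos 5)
Your Steps 1 and 2 are correct and essentially reproduce the paper's argument (iterate \cref{lemma:pairwise_max}, use closedness/compactness to get $\theta^*$ budget-feasible and coordinatewise maximal, then a single-coordinate perturbation for the no-unnecessary-throttling condition). The genuine gap is in Step 3. The central inequality you assert, $p(\theta)_{ij}=\theta_i b_{ij}\prod_{k:b_{kj}>b_{ij}}(1-\theta_k)\le \theta^*_i b_{ij}\prod_{k:b_{kj}>b_{ij}}(1-\theta^*_k)=p(\theta^*)_{ij}$ ``because $\theta\le\theta^*$,'' is false: $\theta_k\le\theta^*_k$ makes the factors $1-\theta_k$ \emph{larger} under $\theta$, so the product moves in the opposite direction, and an individual buyer's payment can strictly drop when all coordinates increase. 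For instance, with one good and $b_{1j}=2>b_{2j}=1$, moving from $(\theta_1,\theta_2)=(0.1,0.5)$ to $(1,0.6)$ sends buyer $2$'s payment from $0.45$ to $0$. Hence the conclusion $\sum_j p(\theta^*)_{ij}\ge\sum_j p(\theta)_{ij}=B_i$, and the entire equality-propagation argument built on it, do not stand. Your proposed sidestep fails for a related reason: at any putative second equilibrium $\theta$, every buyer with $\theta_i<1$ spends exactly $B_i>0$, and her spend is linear and strictly increasing in her own $\theta_i$, so no \emph{single} coordinate can be raised while staying budget-feasible (and coordinates at $1$ cannot be raised at all); a second equilibrium is exactly a ``locally maximal'' feasible point, so coordinatewise perturbations cannot rule it out.

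The paper's fix is an aggregate argument over the set $C=\{i:\theta_i<\theta^*_i\}$, which is nonempty and consists of buyers with $\theta_i<1$, hence (by condition 2 of \cref{def:exact_throt_eq}) of buyers spending exactly their budgets under $\theta$, for a group total of $\sum_{i\in C}B_i$. Passing from $\theta$ to $\theta^*$ changes only the coordinates in $C$, and the \emph{total} payment of the group $C$ on each good is nondecreasing (indeed strictly increasing overall) in each such coordinate: raising $\theta_l$ for $l\in C$ gains $b_{lj}\prod_{k:b_{kj}>b_{lj}}(1-\theta_k)$ for buyer $l$ while the payments it destroys belong to group members bidding strictly below $b_{lj}$, and their expected total is bounded by those smaller bids times the same higher-bidder survival factor. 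So under $\theta^*$ the group would spend strictly more than $\sum_{i\in C}B_i$, contradicting the budget-feasibility of $\theta^*$. In short: the per-buyer, per-good monotonicity you rely on is simply not available, and uniqueness has to be argued at the level of the group $C$'s total expenditure.
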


\begin{proof}
	Set $\theta^*_i = \sup\{\theta_i \in [0,1] \mid \exists\ \theta_{-i} \text{ such that } \theta = (\theta_i, \theta_{-i}) \text{ is budget-feasible}\}$ for all $i \in [n]$. First, we show that $\theta_i^*$ is budget-feasible. Observe that the function $\theta \mapsto \left(\sum_j p(\theta)_{1j}, \dots, \sum_j p(\theta)_{nj} \right)$ is continuous. Therefore, the pre-image of the set $\bigtimes_{i=1}^n[0,B_i]$ under this function is closed. In other words, the set of all budget-feasible throttling parameters is closed. Fix $\epsilon > 0$. For all $i \in [n]$, by the definition of $\theta^*_i$, there exists $\theta^{(i)} \in [0,1]^n$ which is budget feasible and $\theta^{(i)}_i > \theta_i^* - \epsilon$. Iterative application of Lemma~\ref{lemma:pairwise_max} yields the budget-feasibility of the vector $\theta$ defined by $\theta_i = \max_{k \in [n]} \theta^{(k)}_i$. Moreover, $\theta_i > \theta^*_i - \epsilon$ for all $i \in [n]$ because $\theta^{(i)}_i > \theta_i^* - \epsilon$ for all $i \in [n]$. Since $\epsilon> 0$ was arbitrary, we have shown that there exists a sequence of budget-feasible throttling parameters which converges to $\theta^*$, which implies that $\theta^*$ is budget-feasible because the set of budget-feasible throttling parameters is closed.
	
	Next, we show that $\theta^*$ also satisfies the `No unnecessary pacing' property. Suppose there exists $i \in [n]$ such that $\sum_j p(\theta^*)_{ij} < B_i$ and $\theta^*_i < 1$. Then, by the continuity of $\theta \mapsto \sum_j p(\theta)_{ij}$, there exists $\theta_i$ such that $\theta^*_i < \theta_i< 1$ and $\sum_j p(\theta_i, \theta^*_{-i})_{ij} < B_i$, which contradicts the definition of $\theta^*$. Therefore, for all $i \in [n]$,  we have $\sum_j p(\theta^*)_{ij} < B_i$ implies $\theta^*_i = 1$. Thus, $\theta^*$ is a throttling equilibrium.
	
	Finally, we prove uniqueness of $\theta^*$. Suppose there is a throttling equilibrium $\theta \in [0,1]^n$ such that $\theta_i \neq \theta^*_i$ for some $i \in [n]$. Then, the set of buyers $C \subset [n]$ for whom $\theta_i^* > \theta_i$ is non-empty. Note that $\theta_i < 1$ for all $i \in C$. Hence, every buyer in $C$ spends her entire budget under $\theta$. On the other hand, since $\theta_i^* > \theta_i$ for all $i \in C$ and $\theta_i^* = \theta_i$ for $i\notin C$, the total payment made by buyers in $C$ under $\theta^*$ is strictly greater than the payment under $\theta$, which contradicts the budget-feasibility of $\theta^*$. Therefore, $\theta^*$ is the unique throttling equilibrium.
\end{proof}

We conclude this subsection by noting that in Appendix~\ref{appendix:irrationality} we describe a throttling game for which the unique throttling equilibrium has irrational throttling parameters for some buyers. In other words, rational throttling equilibrium need not always exist. Since irrational numbers cannot be represented exactly with a finite number of bits in the standard floating point representation, it leads us to consider algorithms for finding \emph{approximate} throttling equilibrium instead.

\subsection{An Algorithm for Computing Approximate First-Price Throttling Equilibria}

In this subsection, we define a simple t\^atonnement-style algorithm and prove that it yields an approximate throttling equilibrium in polynomial time.

\begin{algorithm}[H] 
   \caption{Dynamics for First-price Auction}
   \label{alg:dynamics}
   %\textbf{Round}$(\alpha^*,x^*)$
    \begin{algorithmic}\vspace{0.08cm}
            \item[\textbf{Input:}] Throttling game $\left(n, m, (b_{ij}), (B_i)\right)$ and approximation parameter $\delta \in (0,1/2)$
            \item[\textbf{Initialize:}] $\theta_i = \min\{B_i/ (2\sum_j b_{ij}), 1\}$ for all $i \in [n]$
            \item[\textbf{While}] there exists a buyer $i \in [n]$
            such that $\theta_i < 1 - \delta$ and $\sum_j p(\theta)_{ij} < (1 - \delta) B_i$:
            \begin{itemize}
                \item For all $i \in [n]$ such that $\theta_i < 1 - \delta$ and $\sum_j p(\theta)_{ij} < (1 - \delta) B_i$, set $\theta_i \leftarrow \theta_i/(1 - \delta)$ 
            \end{itemize}
            \item[\textbf{Return:}] $\theta$
    \end{algorithmic}
\end{algorithm}

Before proceeding to prove the correctness and efficiency of Algorithm~\ref{alg:dynamics}, we note some of its properties. Typically, in online advertising auctions, buyers participate in a large number of auctions throughout their campaign, and the platform periodically updates their throttling parameters to ensure that they don't finish their budgets prematurely and loose out on valuable advertising opportunity. The above algorithm is especially suited for this setting due to its decentralized and easy-to-implement updates to the throttling parameter. 
Moreover, it also lends credence to the notion of throttling equilibrium as a solution concept because the update step in Algorithm~\ref{alg:dynamics} can be implemented independently by the buyers, resulting in decentralized dynamics which converge to a throttling equilibrium in polynomially-many steps. We refer the reader to \cite{borgs2007dynamics} for a detailed model under which Algorithm~\ref{alg:dynamics} can be naturally interpreted as decentralized dynamics for online advertising auctions.

In the next lemma, we show that, throughout the run of Algorithm~\ref{alg:dynamics}, all the buyers always satisfy their budget constraints.

\begin{lemma}\label{lemma:no_budget_violation}
	Consider the run of Algorithm~\ref{alg:dynamics} on the throttling game $\left(n, m, (b_{ij}), (B_i)\right)$ and approximation parameter $\delta \in (0,1/2)$. Then, after every iteration of the while loop, we have $\sum_{j} p(\theta)_{ij} \leq B_i$ for all $i \in [n]$. 
\end{lemma}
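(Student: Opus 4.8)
The plan is to argue by induction on the number of completed iterations of the while loop. The workhorse is the first-price monotonicity already used in \cref{lemma:pairwise_max}: raising any buyer's throttling parameter weakly decreases $p(\theta)_{ij}$ for every \emph{other} buyer $i$ and every good $j$, because $p(\theta)_{ij} = \theta_i b_{ij}\prod_{k: b_{kj}>b_{ij}}(1-\theta_k)$ and the index $i$ itself never appears in that product (even under lexicographic tie-breaking, since $i \not< i$).

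For the base case, after initialization $\theta_i \le B_i/(2\sum_j b_{ij})$ (or $\theta_i=1$, which forces $\sum_j b_{ij}\le B_i/2$), so using $p(\theta)_{ij}\le \theta_i b_{ij}$ we get $\sum_j p(\theta)_{ij}\le \theta_i\sum_j b_{ij}\le B_i/2 < B_i$; the degenerate case $\sum_j b_{ij}=0$ is trivial since then buyer $i$ never pays. For the inductive step, suppose $\sum_j p(\theta)_{ij}\le B_i$ for all $i$ at the start of an iteration, let $S\subseteq[n]$ be the set of buyers updated in that iteration, and let $\theta'$ be the post-update vector, so $\theta'_i=\theta_i/(1-\delta)$ for $i\in S$ and $\theta'_i=\theta_i$ otherwise. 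Since $i\in S$ implies $\theta_i<1-\delta$, we have $\theta'\in[0,1]^n$. Every coordinate weakly increased, so $1-\theta'_k\le 1-\theta_k$ for all $k$; hence for each good $j$, $p(\theta')_{ij}\le p(\theta)_{ij}$ when $i\notin S$, and $p(\theta')_{ij}\le \frac{1}{1-\delta}p(\theta)_{ij}$ when $i\in S$. Summing over $j$: if $i\notin S$ then $\sum_j p(\theta')_{ij}\le \sum_j p(\theta)_{ij}\le B_i$; if $i\in S$ then, using the loop condition $\sum_j p(\theta)_{ij} < (1-\delta)B_i$, we get $\sum_j p(\theta')_{ij}\le \frac{1}{1-\delta}\sum_j p(\theta)_{ij} < B_i$. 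This closes the induction.

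The argument is mostly bookkeeping; the one point that needs care — the ``obstacle'' here — is that the rescaling $\theta_i\mapsto \theta_i/(1-\delta)$ inflates buyer $i$'s expenditure by a factor $1/(1-\delta)$, and this must be exactly absorbed by the strict slack $\sum_j p(\theta)_{ij} < (1-\delta)B_i$ enforced by the while-loop guard, while simultaneously the (possibly many) concurrent increases of other buyers' parameters must not tip any buyer over budget — which is precisely where first-price monotonicity is invoked.
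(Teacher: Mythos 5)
Your proof is correct and follows essentially the same route as the paper: induction over while-loop iterations, using first-price monotonicity (other buyers' increased parameters only shrink buyer $i$'s payment) together with the loop guard $\sum_j p(\theta)_{ij} < (1-\delta)B_i$ to absorb the $1/(1-\delta)$ inflation for updated buyers. The only cosmetic difference is that you case-split on whether buyer $i$ is in the updated set, while the paper splits on whether her spend is below $(1-\delta)B_i$; both case analyses cover the same ground, and your explicit verification of the base case and of $\theta'\in[0,1]^n$ is a harmless addition.
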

\begin{proof}
	We prove the lemma using induction on the number of iterations of the while loop. Note that $\sum_{j} p(\theta)_{ij} \leq B_i$ for all $i \in [n]$ before the first iteration of the while loop by virtue of our initialization. Let $\theta$ and $\theta'$ represent the throttling parameters after the $t$-th iteration and the $(t+1)$-th iteration of the while loop. Suppose $\sum_{j} p(\theta)_{ij} \leq B_i$ for all $i \in [n]$. To complete the proof by induction, we need to show that $\sum_{j} p(\theta')_{ij} \leq B_i$ for all $i \in [n]$. Consider a buyer $i$. Suppose $\sum_{j} p(\theta)_{ij} \geq (1 - \delta) B_i$. By the update step of the algorithm, $\theta'_i = \theta_i$ and $\theta'_j \geq \theta_j$ for $j \neq i$. Therefore,
	\begin{align*}
 		\sum_{j} p(\theta')_{ij} = 	\sum_j \prod_{k: b_{kj} > b_{ij}} (1 - \theta'_k) \theta'_i b_{ij} \leq 	\sum_j \prod_{k: b_{kj} > b_{ij}} (1 - \theta_k) \theta_i b_{ij} = \sum_{j} p(\theta)_{ij} \leq B_i
	\end{align*}
	On the other hand, suppose $\sum_{j} p(\theta)_{ij} < (1 - \delta) B_i$. Then, by the update step of the algorithm, $\theta'_i \leq \theta_i/(1- \delta)$ and $\theta'_j \geq \theta_j$ for $j \neq i$. Therefore,
	\begin{align*}
		\sum_{j} p(\theta')_{ij} = 	\sum_j \prod_{k: b_{kj} > b_{ij}} (1 - \theta'_k) \theta'_i b_{ij} \leq \frac{1}{1-\delta} \cdot\sum_j \prod_{k: b_{kj} > b_{ij}} (1 - \theta_k) \theta_i b_{ij} = \frac{1}{1-\delta} \cdot\sum_{j} p(\theta)_{ij} < B_i
	\end{align*}
	This completes the proof of $\sum_{j} p(\theta')_{ij} \leq B_i$ for all buyers $i \in [n]$.
\end{proof}

We conclude this subsection by proving the correctness and efficiency of Algorithm~\ref{alg:dynamics}.

\begin{theorem}\label{thm:first-price-dynamics}
	Given a throttling game $\left(n, m, (b_{ij}), (B_i)\right)$ and an approximation parameter $\delta \in (0,1/2)$ as input, Algorithm~\ref{alg:dynamics} returns a $\delta$-approximate throttling equilibrium in polynomial time. %run-time which is polynomial in the input size.
\end{theorem}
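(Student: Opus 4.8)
The plan is to establish two things: (i) Algorithm~\ref{alg:dynamics} terminates after polynomially many iterations of the while loop, and (ii) upon termination the returned vector $\theta$ is a $\delta$-approximate throttling equilibrium. The second point is almost immediate: Lemma~\ref{lemma:no_budget_violation} guarantees that the budget constraints $\sum_j p(\theta)_{ij} \le B_i$ hold after every iteration, so they hold at termination, giving property~1 of Definition~\ref{def:throt_eq}. Property~2 is exactly the negation of the while-loop condition: the loop exits precisely when there is no buyer $i$ with $\theta_i < 1-\delta$ and $\sum_j p(\theta)_{ij} < (1-\delta)B_i$, i.e.\ every buyer with $\sum_j p(\theta)_{ij} < (1-\delta)B_i$ has $\theta_i \ge 1-\delta$. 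So the bulk of the work is the polynomial iteration bound.

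For the iteration count, I would track each coordinate $\theta_i$ separately. Each time buyer $i$'s coordinate is updated it is multiplied by $1/(1-\delta) > 1$, so after $t_i$ updates of coordinate $i$ we have $\theta_i = \theta_i^{(0)} \cdot (1-\delta)^{-t_i}$, where $\theta_i^{(0)} = \min\{B_i/(2\sum_j b_{ij}), 1\}$ is the initial value. A coordinate is only ever updated while $\theta_i < 1-\delta$, so $\theta_i$ stays below $1$ throughout, which bounds the number of updates of coordinate $i$ by $t_i \le \log_{1/(1-\delta)}\!\bigl(1/\theta_i^{(0)}\bigr) = \log_{1/(1-\delta)}\!\bigl(\max\{2\sum_j b_{ij}/B_i,\ 1\}\bigr)$. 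Using $\log(1/(1-\delta)) \ge \delta$, this is $O(\delta^{-1}\log(\sum_j b_{ij}/B_i))$, which is polynomial in the input size and $1/\delta$. Finally, every iteration of the while loop updates at least one coordinate (the buyer witnessing the loop condition), so the total number of iterations is at most $\sum_i t_i$, which is polynomial; each iteration costs polynomial time to evaluate all the $p(\theta)_{ij}$ and check the condition. Combining, the total running time is polynomial.

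The main subtlety — and the step I would be most careful about — is the interaction between coordinates: when coordinate $i$ is frozen in a given iteration (because $\sum_j p(\theta)_{ij} \ge (1-\delta)B_i$), other coordinates $\theta_k$ may still increase, which could push $\sum_j p(\theta)_{ij}$ for a first-price buyer \emph{down} (since $p(\theta)_{ij}$ is decreasing in $\theta_k$ for $k$ with $b_{kj} > b_{ij}$), potentially re-activating buyer $i$ later. So one cannot argue that a buyer, once frozen, stays frozen. The fix is precisely the per-coordinate potential argument above: regardless of how often a coordinate toggles between active and frozen, it can only be \emph{updated} (multiplied up) while it is below $1-\delta$, and each such update is a fixed multiplicative increase, so the number of updates per coordinate is bounded by the geometric/logarithmic argument — and that is all we need, since iterations are charged to updates. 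The monotonicity direction of $p(\theta)_{ij}$ in the first-price payment formula (decreasing in higher bidders' parameters, the same fact used in Lemma~\ref{lemma:pairwise_max} and Lemma~\ref{lemma:no_budget_violation}) is what makes the freezing condition ``sticky enough'' that budgets are never violated, but it is the update-counting, not a stickiness claim, that delivers the polynomial bound.
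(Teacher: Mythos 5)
Your proposal is correct and follows essentially the same route as the paper: budget feasibility at termination via Lemma~\ref{lemma:no_budget_violation}, the ``not too much unnecessary throttling'' condition directly from the exit condition of the while loop, and a polynomial iteration bound obtained by charging each iteration to a multiplicative $1/(1-\delta)$ update of some coordinate that starts at $\min\{B_i/(2\sum_j b_{ij}),1\}$ and never exceeds $1$. Your per-coordinate summation of update counts is only a cosmetic variant of the paper's pigeonhole bound $T \le n\log(1/c)/\delta$, and your remark that re-activation of frozen buyers is harmless is exactly why the update-counting argument suffices.
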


\begin{proof}
	Since each iteration of the while loop only performs basic arithmetic operations, to establish a polynomial run-time complexity, it suffices to show that the while loop terminates in polynomially-many steps. Note that $c = \min_i \min\{B_i/ (2\sum_j b_{ij}), 1\}$ is a lower bound on the initial value of the throttling parameter of every buyer. Due to the termination condition of the while loop and the update step, at each iteration of the while loop, there exists a buyer $i \in [n]$ whose throttling parameter is updated, i.e., there exists $i \in [n]$ such that $\theta_i < 1- \delta$ and $\theta_i \leftarrow \theta_i/(1 - \delta)$. Therefore, the number of iteration of the while loop $T$ satisfies the following relationships:
	\begin{align*}
		\frac{c}{(1 -\delta)^{T/n}} \leq 1 \iff T \leq \frac{n \log(1/c)}{\log(1/ (1-\delta))} \leq \frac{n\log(1/c)}{\delta}
	\end{align*}
	The second sequence of inequalities upper bounds the number of iterations of the while loop by a polynomial function of the problem size.
	
	To complete the proof, it suffices to show that at the termination of the while loop, $\theta$ is a $\delta$-approximate throttling equilibrium. Budget-feasibility follows  from Lemma~\ref{lemma:no_budget_violation}, and the `Not too much unnecessary throttling' condition is satisfied by virtue of the termination condition. 
\end{proof}
\section{Throttling in Second-price Auctions}

In this section, we study throttling equilibria in second-price auctions. We begin by establishing the existence of exact throttling equilibria. Next, we show that, in contrast to first-price auctions, it is PPAD-hard to compute an approximate throttling equilibrium. To complete the characterization of its complexity, we then place the problem in PPAD. Moreover, we also show that, unlike first-price auctions, multiple throttling-equilibria can exist for second-price auctions, and finding the revenue-maximizing one is NP-hard. Finally, we complement these negative results with an efficient algorithm for the case when each good has at most two positive bids.

\subsection{Existence of Second-Price Throttling Equilibria}

The following theorem establishes the existence of an exact throttling equilibrium for every bid profile by invoking Brouwer's fixed-point theorem on an appropriately defined function.

\begin{theorem}\label{theorem:existence_second_price}
	For every throttling game $\left(n, m, (b_{ij}), (B_i)\right)$, there exists a throttling equilibrium $\theta^* \in [0,1]^n$.
\end{theorem}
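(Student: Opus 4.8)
The plan is to apply Brouwer's fixed-point theorem to a carefully constructed self-map on the cube $[0,1]^n$, whose fixed points correspond exactly to throttling equilibria. The natural candidate is a map $F : [0,1]^n \to [0,1]^n$ that, in coordinate $i$, tries to push $\theta_i$ up toward $1$ but damps it back down when buyer $i$ is overspending. A concrete choice is
\begin{align*}
F(\theta)_i = \min\!\Bigl\{1,\ \theta_i \cdot \frac{B_i}{\max\{B_i,\ \sum_j p(\theta)_{ij}\}} + \bigl(1 - g_i(\theta)\bigr)\Bigr\},
\end{align*}
or, more simply, one can separate the two effects: set $\theta_i' = \min\{1, \theta_i + \epsilon\}$ to model "no unnecessary throttling" pressure and then rescale by $B_i / \max\{B_i, \sum_j p(\cdot)_{ij}\}$ to enforce feasibility. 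Since each $p(\theta)_{ij}$ is a polynomial in $\theta$, the map $\theta \mapsto \sum_j p(\theta)_{ij}$ is continuous, hence $F$ is continuous, and Brouwer gives a fixed point $\theta^*$.

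The core of the argument is then to verify that any fixed point $\theta^*$ of $F$ satisfies the two conditions of Definition~\ref{def:exact_throt_eq}. For budget-feasibility: if $\sum_j p(\theta^*)_{ij} > B_i$ for some $i$, then the rescaling factor is strictly less than $1$, which (together with the $\min$ with $1$) forces $F(\theta^*)_i < \theta^*_i$ unless $\theta^*_i = 0$; but if $\theta^*_i = 0$ then $p(\theta^*)_{ij} = 0$ for all $j$ (every payment term in both auction formats carries a factor $\theta_i$), contradicting overspending. So $\sum_j p(\theta^*)_{ij} \le B_i$. For "no unnecessary throttling": if $\sum_j p(\theta^*)_{ij} < B_i$ and $\theta^*_i < 1$, the upward-pressure term must be nullified at the fixed point, which should only happen when $\theta^*_i = 1$ — so the map must be designed so that strict budget slack combined with $\theta^*_i<1$ is incompatible with $F(\theta^*)_i = \theta^*_i$. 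Getting these two requirements to coexist in a single continuous $F$ is the delicate part.

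The main obstacle I anticipate is exactly this simultaneous encoding: a naive additive "+$\epsilon$" term does not vanish at $\theta_i$ slightly below $1$, so a pure fixed point of such a map need not satisfy condition 2; and a naive multiplicative map $\theta_i \mapsto \theta_i \cdot B_i/\max\{B_i, \sum_j p\}$ has every point with $\sum_j p(\theta)_{ij}\le B_i$ as a "fixed direction" but does not drive $\theta_i$ up to $1$. The fix is to make the upward push depend on the slack in a way that is continuous and shuts off precisely at $\theta_i = 1$: e.g. replace $\theta_i$ by $\theta_i + (1-\theta_i)\cdot h\!\bigl(B_i - \sum_j p(\theta)_{ij}\bigr)$ before rescaling, where $h$ is a continuous function with $h(t) = 0$ for $t \le 0$ and $h(t) > 0$ for $t > 0$. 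Then at a fixed point, $\theta^*_i < 1$ and positive slack would strictly increase the pre-rescaling value, and after rescaling (which is $\ge$ identity when there is slack, i.e. the factor is $1$) we still get $F(\theta^*)_i > \theta^*_i$, a contradiction — forcing either $\theta^*_i = 1$ or zero slack. One then checks the boundary/edge cases ($\theta^*_i \in \{0,1\}$, ties handled by the lexicographic convention, which does not affect the expected-payment polynomials since tie probability-zero events are irrelevant to the continuous $p(\theta)_{ij}$ expressions) to conclude $\theta^*$ is a throttling equilibrium.
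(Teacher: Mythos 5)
Your proposal is correct and rests on the same skeleton as the paper's proof: construct a continuous self-map of $[0,1]^n$ whose fixed points are exactly throttling equilibria and invoke Brouwer. The difference lies in the map. The paper exploits the identity $p(\theta)_{ij} = \theta_i\, p(1,\theta_{-i})_{ij}$ (a buyer's spend is linear in her own throttling parameter) and takes $f_i(\theta)=\min\bigl\{B_i/\sum_j p(1,\theta_{-i})_{ij},\,1\bigr\}$, i.e.\ the largest budget-feasible response to $\theta_{-i}$; with that choice the fixed-point condition translates into the two equilibrium conditions by an almost immediate chain of equivalences. Your map $\theta_i \mapsto \bigl[\theta_i+(1-\theta_i)\,h\bigl(B_i-\sum_j p(\theta)_{ij}\bigr)\bigr]\cdot B_i/\max\bigl\{B_i,\sum_j p(\theta)_{ij}\bigr\}$ never uses that linearity and instead checks the two conditions at a fixed point directly: overspending forces the multiplicative factor strictly below $1$, hence $\theta^*_i=0$, which kills all of buyer $i$'s payments and contradicts overspending; strict slack with $\theta^*_i<1$ makes the push term strictly positive while the factor equals $1$, again contradicting fixedness. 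That verification is sound; the only loose ends are cosmetic: you must cap $h$ (e.g.\ $h(t)=\min\{\max\{t,0\},1\}$) or keep the $\min\{1,\cdot\}$ from your first formulation so the map genuinely lands in $[0,1]^n$, and the aside about ties is off the mark (ties are not probability-zero events here; they are resolved deterministically inside the definition of $p(\theta)_{ij}$, which is a fixed polynomial either way) but harmless. Net comparison: the paper's best-response map yields a slightly shorter verification because the equilibrium conditions are built into the map, while your adjustment map is a touch more generic in that it would apply verbatim even without the factorization of the spend.
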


\begin{proof}
	First, observe that we can write the expected payment of buyer $i$ on good $j$ under $\theta$ as 
	\begin{align}\label{eq:exp_pay}
		p(\theta)_{ij} = \sum_{\ell: b_{\ell j} < b_{ij}}  b_{\ell j} \theta_i  \theta_\ell  \prod_{k \neq i: b_{kj} > b_{\ell j}} (1 - \theta_k) = \theta_i \cdot \sum_{\ell: b_{\ell j} < b_{ij}}  b_{\ell j}  \theta_\ell  \prod_{k \neq i: b_{kj} > b_{\ell j}} (1 - \theta_k)	 = \theta_i \cdot p(1, \theta_{-i})_{ij}
	\end{align}

	Define $f: [0,1]^n \to [0,1]^n$ as
	\begin{align*}
    	f_i(\theta) = \min\left\{ \frac{B_i}{\sum_j p(1, \theta_{-i})_{ij}}, 1 \right\} \quad \forall \theta \in [0,1]^n
	\end{align*}
	where, we assume that $f_i(\theta) = 1$ if $\sum_j p(1, \theta_{-i})_{ij} = 0$. Note that $p(1, \theta_{-i})_{ij}$ is a continuous function of $\theta$ because it is a polynomial. Therefore, $f$ is continuous as a function of $\theta$ and hence, by Brouwer's fixed-point theorem, there exists a $\theta^*$ such that $f(\theta^*) = \theta^*$. As a consequence, for all buyers $i \in [n]$, we get the following equivalent statements
	\begin{align*}
		f_i(\theta^*) = \theta^*_i \iff \theta^*_i \cdot \sum_j p(1, \theta^*_{-i})_{ij} < B_i \text{ implies } \theta^*_i = 1 \iff \sum_j p(\theta^*)_{ij} < B_i \text{ implies } \theta^*_i = 1
	\end{align*}
	where the last equivalence follows from equation~\ref{eq:exp_pay}. Moreover, by definition of $f$, we get
	\begin{align*}
		\theta^*_i = f_i(\theta^*) \leq 	\frac{B_i}{\sum_j p(1, \theta_{-i})_{ij}}
	\end{align*}
	which in conjunction with equation~\ref{eq:exp_pay} implies $\sum_j p(\theta^*)_{ij} \leq B_i$. Therefore,  $\theta^*$ is a throttling equilibrium.
\end{proof}

Even though the above theorem establishes the existence of a throttling equilibrium for every throttling game, in Appendix~\ref{appendix:irrationality} we give an example of a throttling game for which all equilibria have a buyer with an irrational throttling parameter. This prompts us to study the problem of computing \emph{approximate} throttling equilibria, which we do in the following subsections.

\subsection{Complexity of Finding Approximate Second-Price Throttling Equilibria}
 
In the previous subsection, by way of our existence proof, we reduced the problem of finding an approximate throttling equilibrium to that of finding a Brouwer fixed point of the function $f$; but this is of little use if we want to actually compute an approximate throttling equilibrium: no known algorithm can compute a Brouwer fixed point in polynomial time and it is believed to be a hard problem. This is because the problem of computing an approximate Brouwer fixed point is a complete problem for the class PPAD \citep{papadimitriou1994complexity}; informally, this means that it is as hard as any other problem in the class, such as computing Nash equilibria of bimatrix games~\citep{daskalakis2009complexity, chen2009settling} or computing a market equilibrium under piece-wise linear concave utilities~\citep{chen2009spending, vazirani2011market}. These problems have eluded a polynomial-time algorithm for decades despite intensive effort.

However, through our reduction we have only shown that the problem of computing an approximate throttling equilibrium is easier than the problem of computing a Brouwer fixed point by showing that any algorithm for the latter can be employed to solve the former. Perhaps, computing an approximate throttling equilibrium is strictly easier? Unfortunately, this is not the case and the goal of this subsection is to prove it. More precisely, we show that the problem of finding an approximate throttling equilibrium is PPAD-hard, which in informal terms means that it is as hard as any other problem in the class PPAD, in particular that of computing a Brouwer fixed point. Before stating the hardness result itself, we note a consequence of particular importance: Under the assumption that PPAD-hard problems cannot be solved in polynomial time, no dynamics can efficiently converge to an approximate throttling equilibrium in polynomial time, which is in stark contrast to throttling in first-price auctions. Now, we state the main result of the section.
\color{black}
\begin{theorem}\label{thm:second-price-PPAD-hard}
There is a positive constant $\delta<1$ such that the problem
  of finding a $\delta$-approximate throttling equilibrium in a throttling game is PPAD-hard. 
This holds even when the number of buyers with non-zero bids 
  for each good is at most three.
\end{theorem}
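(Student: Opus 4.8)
The plan is to reduce from the PPAD-hard problem of computing an approximate equilibrium of a \emph{threshold game}, as studied in~\citep{PB2021}. The key conceptual idea is that a second-price auction with at most three bidders on a good can implement a ``threshold gate'': if buyer $i$ bids just above a reference price set by another buyer's presence, then buyer $i$'s expected payment on that good depends on whether the throttling parameter of a third buyer (who sets the competing high bid) is above or below a critical threshold. Concretely, on a good $j$ with bidders $a,b,c$ satisfying $b_{aj}>b_{bj}>b_{cj}$, the payment $p(\theta)_{bj} = \theta_b\,\theta_c\, b_{cj}\,(1-\theta_a)$ is a product that is ``switched off'' when $\theta_a\to 1$; by choosing bid magnitudes carefully, whether buyer $b$ can afford to set $\theta_b=1$ hinges on a thresholded condition involving $\theta_a$. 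The budget-feasibility-with-no-unnecessary-throttling condition in Definition~\ref{def:exact_throt_eq} then forces exactly the kind of thresholded fixed-point behavior that a threshold game encodes.

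First I would recall the precise definition of threshold games from~\citep{PB2021} and the exact statement of their PPAD-hardness (including the constant involved and the approximation regime), and identify the minimal gadget set needed: presumably a thresholding/comparison gadget, plus gadgets to copy a value, to take affine combinations, and to feed outputs back as inputs. Second, for each such gadget I would construct a small collection of goods, each with at most three non-zero bidders, together with budgets, so that in any $\delta$-approximate throttling equilibrium the throttling parameter of a designated ``output'' buyer is (approximately) the thresholded function of the ``input'' buyers' parameters. Third, I would wire these gadgets together according to the threshold-game instance, being careful that a single buyer may serve as input to several gadgets simultaneously --- this is fine since her throttling parameter is a single scalar --- and that the budgets aggregate correctly across the goods a buyer participates in. Fourth, I would verify the reduction is polynomial-time (gadget sizes and bit-complexity of bids/budgets stay polynomial) and that it is approximation-preserving: a $\delta$-approximate throttling equilibrium of the constructed game maps back to an $\epsilon(\delta)$-approximate equilibrium of the threshold game for a suitable constant $\delta$, and conversely.

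The main obstacle I anticipate is controlling \emph{error propagation} through the gadget network. Unlike exact reductions, here each gadget only enforces its relation up to the slack allowed by $\delta$-approximation (both in the budget constraint and in the ``not too much unnecessary throttling'' clause of Definition~\ref{def:throt_eq}), and these errors can compound as signals pass through a long chain of gates. The standard fix is to design gadgets with ``gain'' or to rescale so that each gate is robust --- i.e. it cleanly outputs a near-$0$ or near-$1$ value whenever its input is within the approximation tolerance of a clean value --- and to show that the threshold-game hardness instance of~\citep{PB2021} can be taken to have bounded fan-out and logarithmic-depth circuitry, or is otherwise already robust to such perturbations. A secondary technical point is handling ties and the lexicographic tie-breaking rule so that the ``$>$'' comparisons in the payment formulas behave as intended; I would choose all gadget bids to be generically separated so that ties never arise in the constructed instance. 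Once the robust gadgets are in place and the error budget is shown to close, assembling them into the global reduction and checking the at-most-three-bidders-per-good property is routine.
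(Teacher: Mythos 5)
You start from the right source problem---the threshold games of \citet{PB2021}, which is also what the paper uses---and your observation that second-price payments of the form $\theta_b\theta_c\, b_{cj}(1-\theta_a)$ act as switches is the correct raw material. But the proposal stops short of the one construction the reduction actually hinges on. The threshold condition requires comparing $\sum_{j\in N_i}x_j$ to $0.5$, and the only way a throttling equilibrium can perform a comparison is through a buyer's budget constraint. If a ``threshold'' buyer $T(i)$ reads the in-neighbors by bidding on one good per $j\in N_i$ on which the neighbor's buyer places the top bid, then $T(i)$'s total spend comes out proportional to $\theta_{T(i)}\theta_{S(i)}\sum_{j\in N_i}(1-\theta_{S(j)})$: the quantity you want is polluted by the multiplicative factor $\theta_{T(i)}\theta_{S(i)}$, which itself varies across equilibria. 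Making the budget comparison encode the threshold therefore requires pinning this product to a known constant. The paper does this with a dedicated reciprocal good per node, on which $S(i)$ bids $M|O_i|+1$ and $T(i)$ bids $M|O_i|$ for a large constant $M$; since $S(i)$'s budget is $M|O_i|/2$ and her remaining payments are only $O(|O_i|)$, budget feasibility plus the no-unnecessary-throttling condition force $\theta_{S(i)}\theta_{T(i)}\approx 1/2$ in any $\delta$-approximate equilibrium, after which $T(i)$'s budget of $1/2$ implements exactly the $0.5$ threshold, with the extracted strategy $x_i=\min\{2(1-\theta_{S(i)}),1\}$. Your proposal never identifies this normalization problem or a mechanism to solve it, and without it the claim that ``whether buyer $b$ can afford $\theta_b=1$ hinges on a thresholded condition involving $\theta_a$'' does not yield the required comparison of a sum of neighbor values against a fixed constant.

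A secondary misdirection: you plan a library of circuit-style gadgets (copy, affine combination, fan-out control) and worry about error compounding along long gate chains. That is the right worry for a generalized-circuit reduction, but it is beside the point here and is precisely what reducing from threshold games avoids: the equilibrium condition of a threshold game is local (each node compares the sum of its at most three in-neighbors to $0.5$), summation comes for free from linearity of expected payments across goods, and the approximation analysis is a constant-factor, per-node calculation relating $\delta$ to $\epsilon$ (the paper takes $\delta=\min\{\epsilon/(3+\epsilon),\,\epsilon/2,\,1/4\}$), with no propagation of error through depth. Also, for PPAD-hardness you only need the forward direction of the correspondence---every $\delta$-approximate throttling equilibrium of the constructed game yields an $\epsilon$-approximate equilibrium of the threshold game---since existence of throttling equilibria is guaranteed separately; the ``and conversely'' verification you list is unnecessary.
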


  The proof of Theorem~\ref{thm:second-price-PPAD-hard} uses \emph{threshold games}, introduced recently by \citet{PB2021}.
They showed that the problem of finding an approximate equilibrium in a threshold game is PPAD-complete. %
%  for some positive constant $\eps_0$. %Formally,
\begin{definition}[Threshold game of \citealt{PB2021}]
	\label{def:threshold}
	A threshold game is defined over a directed graph $\G = ([n], E)$. Each node $i\in [n]$ represents a player with strategy space $x_i\in [0,1]$. 
	Let $N_i$ be the set of nodes~$j\in [n]$ with $(j,i)\in E$.
Then $ (x_i:i\in [n]) \in [0, 1]^{n}$ is an $\epsilon$-\emph{approximate equilibrium} if every $x_i$ satisfies
%	\begin{align*}
%	x_v =  	\begin{cases}
%	0 & \sum_{u \in N_v} x_u > 0.5\\
%	1 & \sum_{u \in N_v} x_u < 0.5 \\
%	 \text{arbitrary} & \sum_{u \in N_v}x_u = 0.5 	\end{cases} 	\end{align*}
%Note that
 % $x_v$ still needs to be in $[0,1]$ when $\sum_{u\in N_v} x_u=0.5$.
% 
% Moreover,  $x = (x_v:v\in V) \in [0, 1]^{V}$ is a $\kappa$-\emph{approximate equilibrium} if  every $x_v$ satisfies
	\begin{align*}
	x_i \in  
	\begin{cases}
	[0,\epsilon] & \sum_{j \in N_i} x_j > 0.5 + \epsilon\\
	[1-\epsilon,1] & \sum_{j \in N_i} x_j < 0.5 - \epsilon\\
	[0,1] & \sum_{j \in N_i}x_j \in[ 0.5 - \epsilon, 0.5+\epsilon] 
	\end{cases} %\qquad\text{for every $i\in [n]$.}
	\end{align*}
\end{definition}

\begin{theorem}[Theorem 4.7 of \citealt{PB2021}]
	\label{thm:ppad-threshold}
	There is a positive constant $\epsilon<1$ such that the problem of 
  finding an $\epsilon$-approximate equilibrium in a threshold game is PPAD-hard. 
This holds even when the in-degree and out-degree of each node is at most three in the threshold game.
\end{theorem}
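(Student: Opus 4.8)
The plan is to establish PPAD-hardness by reducing from a problem already known to be PPAD-hard at a \emph{fixed constant} level of approximation. The natural source is the $\epsilon_0$-approximate generalized circuit problem $\mathsf{GCircuit}$, which is PPAD-complete for an absolute constant $\epsilon_0>0$ (Rubinstein, building on Chen, Deng and Teng); a cleaner alternative is the $\mathsf{PureCircuit}$ problem of Deligkas, Fearnley, Hollender and Melissourgos, whose three-valued $\{0,1,\bot\}$ semantics matches the threshold game's ``anything in $[0,1]$'' middle case especially well. Given a circuit $\mathcal{C}$ over nodes carrying values in $[0,1]$ with the standard finite gate set ($G_\zeta$, $G_{\times\zeta}$, $G_=$, $G_+$, $G_-$, $G_<$, $G_\vee$, $G_\wedge$, $G_\neg$), I would construct a threshold game $\G=([n],E)$ together with, for each circuit node, a designated game node, so that every $\epsilon$-approximate equilibrium of $\G$ decodes (by reading off the designated nodes) to an $\epsilon_0$-approximate solution of $\mathcal{C}$, while every solution of $\mathcal{C}$ extends to an $\epsilon$-approximate equilibrium. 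Here $\epsilon$ is a fixed constant, a small fraction of $\epsilon_0$.

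The engine of the reduction is the observation that a threshold node is a robust ``negated threshold-at-$\tfrac12$ applied to an unweighted sum'': its value is pushed into $[0,\epsilon]$ when the in-sum exceeds $\tfrac12+\epsilon$ and into $[1-\epsilon,1]$ when it falls below $\tfrac12-\epsilon$. From this primitive I would assemble the gate library. A source node ($N_i=\emptyset$) has in-sum $0$, hence is an approximate constant $1$; routing it forward yields an approximate constant $0$; combining duplicates of these realizes rational constants $G_\zeta$. A single-input node is a clean $G_\neg$ on $\{0,1\}$-valued inputs, and two such in series form a ``copy-and-sharpen'' gadget that replicates a value while dragging it toward $\{0,1\}$ --- this also serves for fan-out. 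Boolean $G_\vee$ and $G_\wedge$ are majority-type thresholds on duplicated $\{0,1\}$-copies, composed with negations. The comparator $G_<$ and the real-arithmetic gates $G_+,G_-,G_{\times\zeta}$ are the subtle ones: they are realized by feeding a node several duplicated copies of its inputs together with copies of $0$, $1$, and --- through feedback edges --- of the node's own value or complement, arranged so that the node's unweighted in-sum crosses $\tfrac12$ exactly at the intended algebraic relation and is monotone in the fed-back value, yielding a unique approximate fixed point.

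The reduction then instantiates one gadget per gate of $\mathcal{C}$, wiring each gadget's output node to the input nodes of its successors, so that equilibrium of $\G$ simultaneously enforces all gate relations up to error. Two routine finishing steps remain. First, the in/out-degree bound of three: replace any node with large fan-out by a depth-$\lceil\log k\rceil$ binary tree of copy-and-sharpen gadgets, and any large fan-in by a balanced tree of two-input partial-sum gadgets; both expansions are bounded-degree and increase the instance size only polynomially. Second, fix the constants, choosing $\epsilon$ small enough relative to $\epsilon_0$ and to the (constant) number of gadget layers separating consecutive circuit nodes. I expect the main obstacle to be precisely this error analysis: each gadget introduces slack, and careless composition would let error accumulate with circuit depth, so every gadget must be ``self-correcting'' --- outputting a value strictly closer to its ideal than its input's error --- and the arithmetic and comparator gadgets must reconcile the hard-wired $\tfrac12$ threshold and the unweighted-sum restriction with arbitrary real inputs while keeping the fixed point unique. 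This is the technical heart of the construction, and it is exactly what starting from $\mathsf{PureCircuit}$ would sidestep, since there one only needs robust $\mathsf{NOR}$ and $\mathsf{PURIFY}$ gates, both of which follow immediately from the threshold primitive together with the copy-and-sharpen gadget.
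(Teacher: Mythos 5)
First, note the ground truth here: the paper does not prove this statement at all --- it is imported verbatim as Theorem~4.7 of \citet{PB2021} (the paper that introduced threshold games), and the present authors only use it as the source of their reduction. So there is no in-paper proof to match; what you have written is an attempt to reprove an external result, and it has to be judged on its own merits.

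Judged that way, there is a genuine gap at what you yourself call the technical heart. Your main route goes through $\epsilon_0$-\textsf{GCircuit} and therefore needs real-arithmetic gates ($G_+$, $G_-$, $G_{\times\zeta}$, $G_<$) realized by threshold nodes, which you propose to do by feeding a node duplicated inputs, constants, and feedback copies of itself ``so that the in-sum crosses $\tfrac12$ exactly at the intended algebraic relation,'' yielding ``a unique approximate fixed point.'' This cannot work as described: the defining condition of an $\epsilon$-approximate equilibrium in a threshold game is one-sided and purely Boolean in character. A node's value is forced near $0$ or near $1$ only when its in-sum lies outside the band $[\tfrac12-\epsilon,\tfrac12+\epsilon]$; whenever the in-sum lies inside that band, the node may take \emph{any} value in $[0,1]$. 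Consequently no gadget can pin a node to an intermediate real value such as $\tfrac12(a+b)$ or $\zeta a$ up to constant accuracy, and feedback edges do not help, because approximate equilibria are not required to be fixed points of any contraction --- uniqueness is simply not enforceable through the equilibrium condition. A generalized-circuit reduction that relies on faithful real-valued arithmetic at constant precision therefore collapses exactly at the gates you flagged as subtle.

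The viable route is the one you mention only in passing: reduce from a purely Boolean-with-garbage problem (\textsf{PureCircuit}-style, or a Boolean restriction of generalized circuits), where a threshold node immediately gives a robust \textsf{NOT}/\textsf{NOR} and the only nontrivial gadget is purification. But that step does not ``follow immediately'': with the threshold hard-wired at $\tfrac12$ of an \emph{unweighted} sum, building a \textsf{PURIFY} gate (two outputs, at least one of which is pure for every input, including inputs near $\tfrac12$, and both faithful on pure inputs) requires engineering two different effective thresholds out of duplicated copies whose own fidelity breaks down precisely for mid-range inputs; your ``copy-and-sharpen'' chain outputs garbage exactly when its input is near $\tfrac12$, so it cannot be invoked circularly to create the faithful duplicates the gadget needs. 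Until that gadget (or an equivalent device in the \citet{PB2021} construction) is made explicit and its error analysis carried out, the proposal is a plan rather than a proof. The degree-reduction and constant-bookkeeping parts of your sketch are fine and routine, as you say.
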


%n $\epsilon$-approximate threshold game on graph $\G(V,E)$ with threshold $t = 1/2$ and $|V| = n$. We will use
Given a threshold game $\G=([n],E)$, we let $O_i$ denote the set of nodes $j\in V$ with $(i,j)\in E$. 
So $|N_i|,|O_i|\le 3$ for every $i\in [n]$. To prove Theorem~\ref{thm:second-price-PPAD-hard}, we need to construct a throttling game $\I_\G$ from $\G$ such that any approximate throttling equilibrium of $\I_\G$ corresponds to an approximate equilibrium of the threshold game. Before rigorously diving into the construction, we give an informal description to build intuition.

 {With each node of $\G$, we will associate a collection of buyers and goods, with the goal of capturing the corresponding strategy and equilibrium conditions of the threshold game. Fix a node $i \in [n]$. We will define a strategy buyer $S(i)$ and set the strategy $x_i$ for node $i$ to be proportional to $1 - \theta_{S(i)}$. Next, in order to implement the equilibrium condition of the threshold game, our goal will be to define buyers and goods such that the linear form $\sum_{j \in N_i} x_j$ ends up being proportional to the total payment of a buyer who we will refer to as the threshold buyer $T(i)$. For each in-neighbour $j \in N_i$, we will define a neighbour good $G(i,j)$, for which the strategy buyer of the neighbour $S(j)$ places the highest bid of $6$ and the threshold buyer $T(i)$ places a bid of 5. Furthermore, for a reason that will become clear shortly, the strategy buyer $S(i)$ places a bid of 4 on $G(i,j)$. With these bids, the payment made by the threshold buyer $T(i)$ on all the neighbour goods $\{G(i,j)\}_j$ is proportional to $\theta_{T(i)} \theta_{S(i)} \sum_{j \in N_i} (1 - \theta_S(j)) = \theta_{T(i)} \theta_{S(i)} \sum_{j \in N_i} x_j$. Now, if we are somehow able to ensure that the throttling parameter of the strategy buyer $\theta_{S(i)}$ is inversely proportional to the throttling parameter of the threshold buyer $\theta_{T(i)}$, then the payment made by the threshold buyer on all the neighbour goods will be proportional to $\sum_{j \in N_i} x_j$, as desired. To achieve this, we define a reciprocal good $R(i)$. Finally, we set the budget of the threshold buyer $T(i)$ in such a way that comparing it to her payment, which is proportional to $\sum_{j \in N_i} x_j$, is tantamount to making a comparison between $\sum_{j \in N_i} x_j$ and $0.5$. The challenging part of the reduction lies in setting up the bids and budgets in a way that ensures that this comparison leads to an enforcement of the equilibrium condition of the threshold game.}

\begin{figure}
    \centering
    \includegraphics[width = 3.5in, angle = 0]{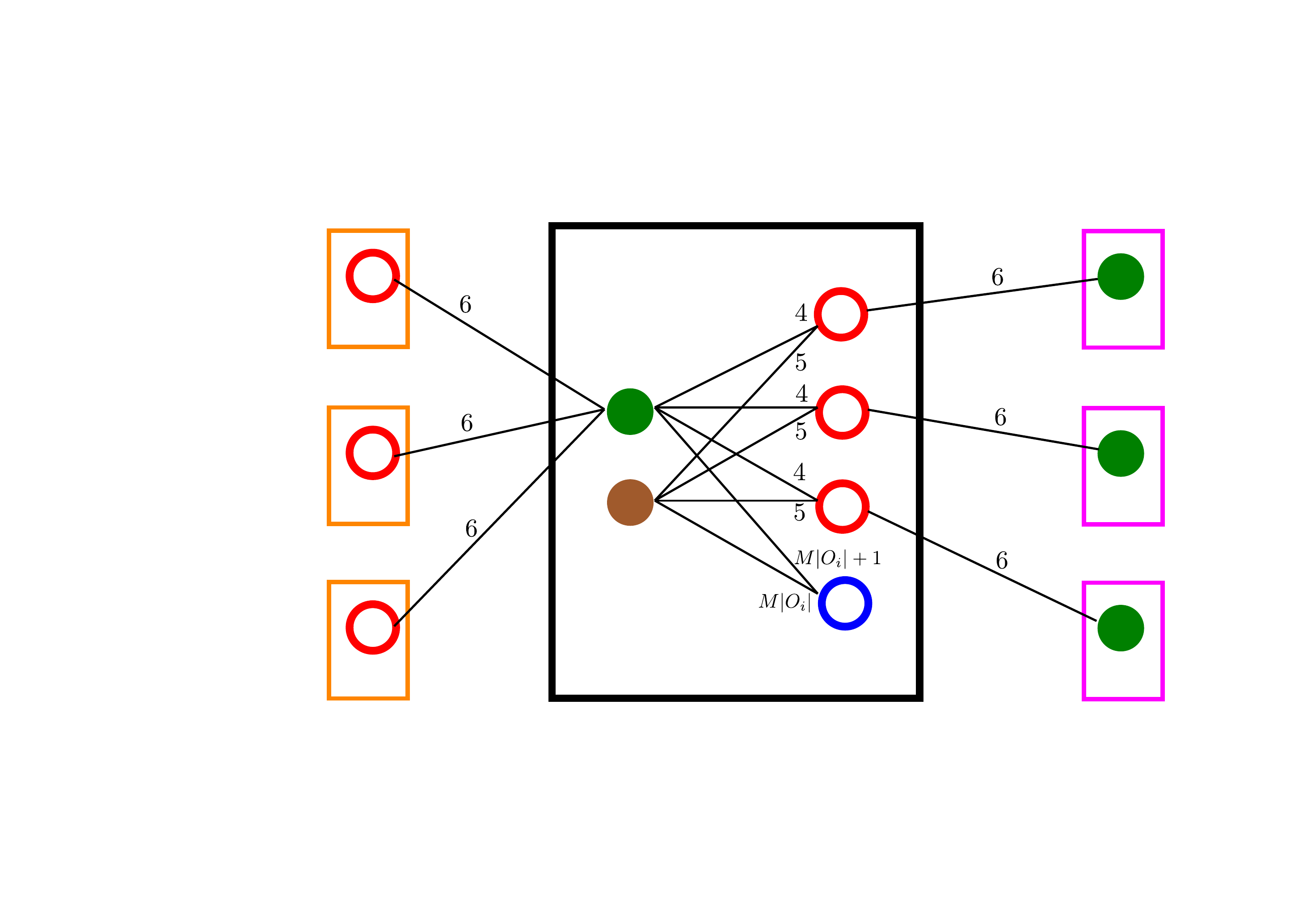}
    \caption{A diagrammatic representation of the non-zero bids made and received by the buyers and goods corresponding to a particular node of the threshold game. Consider a particular node of the threshold game, represented here by the black rectangle. It has three \textcolor{orange}{out neighbours}, depicted as orange rectangles, and three \textcolor{magenta}{in neighbours}, depicted as pink rectangles. The hollow circles represent goods and the solid circles represent buyers. Corresponding to each node of the threshold game, there is one \textcolor{OliveGreen}{strategic buyer}, shown here in green, and one \textcolor{brown}{threshold buyer}, shown in brown. Furthermore, for each node of the threshold game, there are three \textcolor{red}{neighbour goods}, shown in red, and one \textcolor{blue}{reciprocal good}, shown in blue. The lines represent non-zero bids.}
    \label{fig:PPAD}
\end{figure}

With this high level overview of the reduction in place, we move on to the rigorous construction of the throttling game $\I_\G$ from $\G$. First, $\I_\G$ contains the following set of goods:
\begin{itemize}
    \item For each $i \in [n]$ and $j \in N_i$, there is a \emph{neighbor good} $G(i,j)$.
    \item For each $i \in [n]$, there is a \emph{reciprocal good} $R(i)$.
\end{itemize}
Next, setting two constants $M$ and $\delta$ as $ M = {160}/{\delta}$ and 
$\delta=\min  \{\epsilon/(3 + \epsilon), \epsilon/2, 1/4  \},$
the throttling game $\I_\G$ has the following set of buyers:
\begin{flushleft}\begin{itemize}
    \item For each $i\in [n]$, there is a \emph{threshold buyer} $T(i)$ who has
    budget $1/2$ and has non-zero bids only on the following goods:
    $b(T(i), G(i,j)) = 5$ for all $j \in N_i$; and
     $b(T(i), R(i)) = M|O_i|$.
   
    %The budget of $T(i)$ is $1/2$.
    \item For each $i \in [n]$, there is a \emph{strategy buyer} $S(i)$
    who has budget $M|O_i|/2$ and has non-zero bids only on the following goods:
     $b(S(i), R(i)) = M|O_i| + 1$;
      $b(S(i), G(i,j)) = 4$ for all $j \in N_i$; and
       $b(S(i), G(j, i)) = 6$ for all $j \in O_i$.
    %The budget of $S(i)$ is .
\end{itemize}\end{flushleft}
It is clear that $\I_\G$ can be constructed from $\G$ in polynomial time, and the number of buyers with non-zero bids
for each good is at most three.
Let $\theta$  be any $\delta$-approximate throttling equilibrium of $\I_\G$ 
and use it to define $(x_i:i\in [n])\in [0,1]^n$ as follows: 
\begin{equation}\label{heheeq}
x_i = \min\big\{2(1 - \theta_{S(i)}),1\big\},\quad\text{for all $i \in [n]$.}
\end{equation}
To complete the reduction, we will show that $  (x_i:i\in [n])$ is an $\epsilon$-approximate equilibrium of the threshold game $\G$. Since we are considering a particular $\theta$, we will suppress the dependence on $\theta$ of the payment made by buyer $B$ on good $G$ and simply denote it by $p(B,G)$. The next lemma notes the payment terms of buyers in $\I_\G$.

\begin{lemma}\label{lemma:buyer_payments}
    For all $i \in [n]$, we have
    \begin{flushleft}\begin{enumerate}
        \item $p(T(i), G(i,j)) = \left(1 - \theta_{S(j)}\right)\theta_{T(i)}\theta_{S(i)} 4$, for all $j \in N_i$; and $p(T(i), R(i)) = 0$.
        \item $p(S(i), R(i)) = \theta_{S(i)} \theta_{T(i)} M |O_i|$;
        $p(S(i), G(i,j)) = 0$, for all $j \in N_i$; and
        $$p(S(i), G(j,i)) = \theta_{S(i)} \left[\theta_{T(j)} 5\ +\ (1 - \theta_{T(j)}) \theta_{S(j)} 4 \right],\quad\text{for all $j \in O(i)$.}$$
    \end{enumerate}\end{flushleft}
\end{lemma}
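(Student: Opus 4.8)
The plan is to compute each of the six stated payment terms directly from the definitions of $p(\theta)_{ij}$ for second-price auctions, using the bid structure of $\I_\G$. The key observation throughout is that on each relevant good there are at most three buyers with non-zero bids, so the products $\prod_{k\neq i: b_{kj}>b_{\ell j}}(1-\theta_k)$ collapse to very short products (often empty, giving value $1$). I would organize the computation good-by-good, exploiting the fact that each good in $\I_\G$ is either a neighbor good $G(i,j)$ (with three bidders: $S(j)$ at $6$, $T(i)$ at $5$, $S(i)$ at $4$) or a reciprocal good $R(i)$ (with two bidders: $S(i)$ at $M|O_i|+1$, $T(i)$ at $M|O_i|$).

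First I would handle $T(i)$'s payments. On $R(i)$, the buyer $T(i)$ bids $M|O_i|$ but there is no buyer with a strictly smaller non-zero bid on $R(i)$ (the only other bidder $S(i)$ bids higher), so the sum $\sum_{\ell: b_{\ell,R(i)}<b_{T(i),R(i)}}$ is empty and $p(T(i),R(i))=0$. On a neighbor good $G(i,j)$, $T(i)$ bids $5$, and the only buyer below is $S(i)$ at $4$; for $T(i)$ to pay, both $T(i)$ and $S(i)$ must participate, and every buyer with bid above $S(i)$'s $4$ except $T(i)$ itself must \emph{not} participate — that is only $S(j)$ at $6$. Hence $p(T(i),G(i,j)) = \theta_{T(i)}\,\theta_{S(i)}\,(1-\theta_{S(j)})\cdot 4$, as claimed. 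Next, for $S(i)$'s payments: on $G(i,j)$, $S(i)$ bids $4$ and there is no non-zero bid strictly below it, so $p(S(i),G(i,j))=0$. On $R(i)$, $S(i)$ bids $M|O_i|+1$ and the unique buyer below is $T(i)$ at $M|O_i|$; there is no buyer above $T(i)$'s bid other than $S(i)$ itself, so the product is empty and $p(S(i),R(i)) = \theta_{S(i)}\,\theta_{T(i)}\,M|O_i|$. Finally, the interesting term: on $G(j,i)$ with $j\in O_i$, buyer $S(i)$ is the top bidder at $6$, and the two buyers below are $T(j)$ at $5$ and $S(j)$ at $4$. Summing the second-price contribution over $\ell\in\{T(j),S(j)\}$: the $\ell=T(j)$ term requires $S(i)$ and $T(j)$ to participate and no one above $T(j)$ other than $S(i)$ — vacuous — contributing $\theta_{S(i)}\theta_{T(j)}\cdot 5$; the $\ell=S(j)$ term requires $S(i)$ and $S(j)$ to participate and everyone above $S(j)$ except $S(i)$ to abstain, i.e.\ $T(j)$ abstains, contributing $\theta_{S(i)}\theta_{S(j)}(1-\theta_{T(j)})\cdot 4$. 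Adding these gives exactly $\theta_{S(i)}\big[\theta_{T(j)}\,5 + (1-\theta_{T(j)})\theta_{S(j)}\,4\big]$.

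There is no real obstacle here; the statement is a bookkeeping lemma. The only thing requiring care is making sure the index sets in the products and sums are read correctly — in particular remembering that the product in the second-price formula excludes $k=i$ (the winner), which is why the products come out empty in several cases — and correctly using lexicographic tie-breaking, though since all the non-zero bids on any given good are distinct integers ($4,5,6$ or $M|O_i|, M|O_i|+1$), ties never actually arise and tie-breaking plays no role. I would present the argument as a short case analysis over the two types of goods, stating for each buyer which goods they can possibly pay on and then reading off the payment from the formula.
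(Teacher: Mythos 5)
Your computation is correct and is exactly the routine verification the paper intends: it states this lemma without proof, treating it as a direct bookkeeping consequence of the construction (each neighbor good has the three bidders $S(j)$ at $6$, $T(i)$ at $5$, $S(i)$ at $4$, and each reciprocal good has only $S(i)$ and $T(i)$), and your good-by-good evaluation of the second-price payment formula matches all six claimed terms.
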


In the next lemma, we bound the total payment made by strategy buyer $S(i)$ on the neighbor goods and provide lower bounds on the throttling parameters:

\begin{lemma} \label{lemma:neighbour_payment_bound}
    For all $i \in [n]$, we have $\theta_{S(i)} \geq (1 - 2 \delta)/2$, $\theta_{T(i)} \geq 1/32 $, and
    the total payment of $S(i)$ on the neighbor goods satisfies:
            \begin{align*}
                |O_i| \theta_{S(i)} \leq \sum_{j \in N_i} p(S(i), G(i,j)) + \sum_{j \in O_i} p(S(i), G(j,i)) \leq 5 |O_i| \theta_{S(i)}
            \end{align*}
%        \item.
    %\end{itemize}
    %
\end{lemma}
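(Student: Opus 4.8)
The whole lemma will follow from the payment formulas in \cref{lemma:buyer_payments} together with the two conditions defining a $\delta$-approximate throttling equilibrium (\cref{def:throt_eq}); the key realization is that the lower bounds on the throttling parameters must be extracted from the ``not too much unnecessary throttling'' clause rather than from budget feasibility. I would prove the four inequalities in dependency order, using the trivial bound $0\le\theta\le 1$ for every throttling parameter other than the one under analysis. The first step is the upper bound of the displayed chain: by \cref{lemma:buyer_payments}, $p(S(i),G(i,j))=0$, so the middle sum equals $\sum_{j\in O_i}p(S(i),G(j,i))$, and since the bracketed factor satisfies $\theta_{T(j)}5+(1-\theta_{T(j)})\theta_{S(j)}4\le 4+\theta_{T(j)}\le 5$, each term is at most $5\theta_{S(i)}$; this bound is unconditional. (I will assume without loss of generality that every node of $\G$ has out-degree at least one, so $|O_i|\ge 1$; otherwise $S(i)$ has budget $0$, its parameter is unconstrained, but such a node is irrelevant to $\G$ and can be discarded.)

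Next I would prove $\theta_{S(i)}\ge(1-2\delta)/2$. If $\theta_{S(i)}\ge 1-\delta$ this is immediate since $1-\delta\ge(1-2\delta)/2$. Otherwise the unnecessary-throttling clause forces $S(i)$'s total payment to be at least $(1-\delta)B_{S(i)}=(1-\delta)M|O_i|/2$, while \cref{lemma:buyer_payments} and the upper bound just proved give a total payment of $\theta_{S(i)}\theta_{T(i)}M|O_i|+\sum_{j\in O_i}p(S(i),G(j,i))\le \theta_{S(i)}M|O_i|+5|O_i|$. Dividing by $M|O_i|$ and using $5/M=\delta/32$ yields $(1-\delta)/2\le\theta_{S(i)}+\delta/32$, hence $\theta_{S(i)}\ge 1/2-17\delta/32\ge 1/2-\delta=(1-2\delta)/2$.

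For $\theta_{T(i)}\ge 1/32$: if $\theta_{T(i)}\ge 1-\delta\ge 3/4$ we are done; otherwise $T(i)$'s total payment is at least $(1-\delta)/2$, but by \cref{lemma:buyer_payments} it equals $4\theta_{T(i)}\theta_{S(i)}\sum_{j\in N_i}(1-\theta_{S(j)})\le 12\theta_{T(i)}$ (using $|N_i|\le 3$), so $\theta_{T(i)}\ge(1-\delta)/24\ge 1/32$ by $\delta\le 1/4$. Finally, for the lower bound of the displayed chain, note that by the previous step and $\delta\le 1/4$ every $\theta_{S(j)}\ge 1/4$, so the bracketed factor is $\ge 5\theta_{T(j)}+(1-\theta_{T(j)})=1+4\theta_{T(j)}\ge 1$, giving $p(S(i),G(j,i))\ge\theta_{S(i)}$ and therefore $\sum_{j\in O_i}p(S(i),G(j,i))\ge|O_i|\theta_{S(i)}$.

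There is no deep obstacle here; the one point that needs care is recognizing that budget feasibility alone is too weak to lower-bound $\theta_{S(i)}$ and $\theta_{T(i)}$ — those bounds come from applying the unnecessary-throttling clause to $S(i)$ and $T(i)$ respectively, with all other payments bounded crudely — and then verifying that the specific choices $M=160/\delta$ and $\delta\le 1/4$ make the target constants $(1-2\delta)/2$ and $1/32$ come out exactly. The degree-$0$ edge case flagged above is the only other subtlety.
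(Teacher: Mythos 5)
Your proof is correct and follows essentially the same route as the paper's: the upper bound comes directly from the payment formulas of Lemma~\ref{lemma:buyer_payments}, the lower bounds on $\theta_{S(i)}$ and $\theta_{T(i)}$ come from the ``not too much unnecessary throttling'' clause combined with crude payment bounds and the choices $M=160/\delta$, $\delta\le 1/4$, and the lower bound of the displayed chain then follows from $\theta_{S(j)}\ge 1/4$. The only difference is cosmetic (contrapositive framing instead of contradiction, and your explicit out-degree-$\ge 1$ caveat, which the paper leaves implicit).
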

\begin{proof}
    For $i \in [n]$, using Lemma~\ref{lemma:buyer_payments}, we get
    \begin{align*}
        \sum_{j \in N_i} p(S(i), G(i,j)) + \sum_{j \in O_i} p(S(i), G(j,i)) =  \sum_{j \in O_i} \theta_{S(i)} \left[\theta_{T(j)} 5\ +\ (1 - \theta_{T(j)}) \theta_{S(j)} 4 \right] \leq 5 |O_i| \theta_{S(i)}.
    \end{align*}
    Suppose $\theta_{S(i)} < (1 - 2\delta)/2$ for some $i \in [n]$. Then, the total payment made by $S(i)$ is at most %the following upper bound
    \begin{align*}
        \theta_{S(i)} \theta_{T(i)} M |O_i| + 5|O_i|\theta_{S(i)}
        %\sum_{j \in N_i} p(S(i), G(i,j)) + \sum_{j \in O_i} p(S(i), G(j,i))
        < \frac{(1 - 2 \delta)M|O_i|}{2} + 5|O_i|
        < (1 - \delta) \cdot \frac{M|O_i|}{2}
    \end{align*}
    using $M>10/\delta$ by our choice of $M$.
    This contradicts the definition of $\delta$-approximate throttling equilibrium because $S(i)$ has a budget of $M|O_i|/2$. Therefore, we have $\theta_{S(i)}\ge (1-2\delta)/2$ and in particular, $\theta_{S(i)} \ge 1/4$ for all $i \in [n]$ using $\delta\le 1/4$. Hence,
    \begin{align*}
        \sum_{j \in N_i} p(S(i), G(i,j)) + \sum_{j \in O_i} p(S(i), G(j,i)) &=  \sum_{j \in O_i} \theta_{S(i)} \left[\theta_{T(j)} 5\ +\ (1 - \theta_{T(j)}) \theta_{S(j)} 4 \right]\\
        &> \sum_{j \in O_i} \theta_{S(i)} \left[\theta_{T(j)} \theta_{S(j)} 4\ +\ (1 - \theta_{T(j)}) \theta_{S(j)} 4 \right]\\
        &= 4\theta_{S(i)} \cdot \sum_{j \in O_i}  \theta_{S(j)} 
         \geq %4\theta_{S(i)} \cdot \frac{1}{4} |O_i|\\
          |O_i|\theta_{S(i)}.
    \end{align*}
    %for all $i \in [n]$. 
    
    Suppose, $\theta_{T(i)} < 1/32$. By Lemma~\ref{lemma:buyer_payments} the total payment of threshold buyer $T(i)$ is at most $$\sum_{j\in N_i} 4\theta_{T(i)}\le 12 \theta_{T(i)}<\frac{3}{8}\le \frac{1-\delta}{2}$$ using $|N_i|\le 3$ and $\delta\ge 1/4$. 
    %which is strictly less than $(1 -\delta)/2$. 
    Hence, we have obtained a contradiction with the definition of $\delta$-approximate throttling equilibria. Therefore, $\theta_{T(i)} \geq 1/32$.
\end{proof}

We are now ready to complete the reduction.

\begin{lemma}
 $(x_i:i\in [n])$, as defined in (\ref{heheeq}), is an $\epsilon$-approximate equilibrium of the threshold game $\G$.
\end{lemma}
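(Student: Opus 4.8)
The plan is to check, node by node, that the vector $(x_i:i\in[n])$ defined in~(\ref{heheeq}) satisfies the three cases of Definition~\ref{def:threshold}, using two translation facts between the throttling-equilibrium quantities of $\I_\G$ and the threshold-game quantities.

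First I would record the ``dictionary''. (a) Since Lemma~\ref{lemma:neighbour_payment_bound} gives $\theta_{S(j)}\ge(1-2\delta)/2$, a one-line case split on whether $\theta_{S(j)}\ge\tfrac12$ shows
\[
\tfrac12\,x_j\ \le\ 1-\theta_{S(j)}\ \le\ \tfrac{1+2\delta}{2}\,x_j\qquad\text{for every }j\in[n].
\]
(b) From the budget constraint of $S(i)$ --- whose budget is $M|O_i|/2$, whose payment on $R(i)$ is $\theta_{S(i)}\theta_{T(i)}M|O_i|$, and whose total payment on the neighbor goods lies in $[\,|O_i|\theta_{S(i)},\,5|O_i|\theta_{S(i)}\,]$ by Lemmas~\ref{lemma:buyer_payments}--\ref{lemma:neighbour_payment_bound} --- I get the ``reciprocal'' bound $\theta_{S(i)}\theta_{T(i)}\le\tfrac12$ always, together with $\theta_{S(i)}\theta_{T(i)}\ge\tfrac{1-\delta}{2}-\tfrac{5}{M}=\tfrac{1-\delta}{2}-\tfrac{\delta}{32}$ whenever $\theta_{S(i)}<1-\delta$ (the latter from the ``not too much unnecessary throttling'' clause: $\theta_{S(i)}<1-\delta$ forces $S(i)$ to spend at least $(1-\delta)M|O_i|/2$). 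The point of the construction is that, by Lemma~\ref{lemma:buyer_payments}, the total payment of $T(i)$ equals $4\,\theta_{T(i)}\theta_{S(i)}\sum_{j\in N_i}(1-\theta_{S(j)})$ (its payment on $R(i)$ is $0$), so by (a) and (b) this payment is comparable to $\sum_{j\in N_i}x_j$, and comparing it against $T(i)$'s budget $1/2$ is exactly what encodes the threshold test ``$\sum_{j\in N_i}x_j$ versus $0.5$''.

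Case 1 ($\sum_{j\in N_i}x_j>0.5+\epsilon$): I would assume for contradiction that $x_i>\epsilon$, which forces $\theta_{S(i)}<1-\epsilon/2\le 1-\delta$ (using $\delta\le\epsilon/2$), activate the lower reciprocal bound from (b), and combine it with $1-\theta_{S(j)}\ge x_j/2$ from (a) to conclude that $T(i)$'s total payment exceeds $\big(1-\tfrac{17\delta}{16}\big)(0.5+\epsilon)$; a direct estimate using $\delta\le\epsilon/2$ and $\epsilon<1$ shows the right side is larger than $1/2$, contradicting budget feasibility of $T(i)$, so $x_i\le\epsilon$. Case 2 ($\sum_{j\in N_i}x_j<0.5-\epsilon$): here the argument runs in reverse. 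Assume $x_i<1-\epsilon$, i.e.\ $\theta_{S(i)}>(1+\epsilon)/2$. Using only $\theta_{S(i)}\theta_{T(i)}\le\tfrac12$ and the upper bound $1-\theta_{S(j)}\le\tfrac{1+2\delta}{2}x_j$ from (a), the payment of $T(i)$ is at most $(1+2\delta)(0.5-\epsilon)<0.5-\epsilon+\delta$, which is strictly below $(1-\delta)/2$ because $\delta\le\epsilon/2$ gives $3\delta/2<\epsilon$. The ``not too much unnecessary throttling'' clause applied to $T(i)$ then forces $\theta_{T(i)}\ge 1-\delta$, and feeding this back into $S(i)$'s budget constraint (dropping the nonnegative neighbor terms) yields $\theta_{S(i)}\le\frac{1}{2(1-\delta)}\le\frac{1+\epsilon}{2}$, the last step being exactly $\delta\le\epsilon/(1+\epsilon)$, which is implied by $\delta\le\epsilon/(3+\epsilon)$. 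This contradicts $\theta_{S(i)}>(1+\epsilon)/2$, so $x_i\ge 1-\epsilon$. Case 3 is immediate since $x_i\in[0,1]$ by~(\ref{heheeq}).

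The main obstacle is bookkeeping the constants: the reduction is tight, so one must carry $\delta$ and $M=160/\delta$ through every estimate and confirm that each defining condition on $\delta$ pulls its weight --- $\delta\le\epsilon/2$ for the step ``$\theta_{S(i)}<1-\epsilon/2\Rightarrow\theta_{S(i)}<1-\delta$'' and for pushing $T(i)$'s payment below $(1-\delta)/2$ in Case 2; $\delta\le\epsilon/(3+\epsilon)$ for the final $\theta_{S(i)}\le(1+\epsilon)/2$ step; and $\delta\le1/4$ (inherited via Lemma~\ref{lemma:neighbour_payment_bound}) for the background estimates $\theta_{S(j)}\ge(1-2\delta)/2\ge\tfrac14$. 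A minor but necessary point is to handle the truncation $\min\{\cdot,1\}$ in the definition of $x_i$, which only ever helps: $2(1-\theta_{S(i)})\ge 1$ already gives $x_i=1\ge1-\epsilon$, and $2(1-\theta_{S(i)})\le\epsilon<1$ is precisely the statement $x_i\le\epsilon$.
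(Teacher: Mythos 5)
Your proposal is correct and follows essentially the same route as the paper's proof: it invokes Lemma~\ref{lemma:buyer_payments} and Lemma~\ref{lemma:neighbour_payment_bound}, uses the reciprocal good $R(i)$ together with $S(i)$'s budget and the no-unnecessary-throttling clause to pin $\theta_{S(i)}\theta_{T(i)}$ near $1/2$, and compares $T(i)$'s payment $4\theta_{T(i)}\theta_{S(i)}\sum_{j\in N_i}(1-\theta_{S(j)})$ against its budget $1/2$ in the two threshold cases. The only deviations are cosmetic constant bookkeeping (an additive lower bound $\tfrac{1-\delta}{2}-\tfrac{\delta}{32}$ in place of the paper's multiplicative bound via $\theta_{T(i)}\ge 1/32$, and a contrapositive packaging of Case 2), and these check out.
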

\begin{proof}
    Fix an $i \in [n]$. First consider the case when $\sum_{j \in N_i} x_j >0.5 + \epsilon$. Assume for a contradiction that $x_i > \epsilon$. Then, $\theta(S(i)) < 1 - (\epsilon/2) \leq 1 -\delta$ using $\delta\le \epsilon/2$. By the definition of $\delta$-approximate throttling equilibrium, the total payment of threshold buyer $S(i)$ is at least $(1 -\delta)M|O_i|/2$. Combining this observation with Lemma~\ref{lemma:buyer_payments} and Lemma~\ref{lemma:neighbour_payment_bound}, we get
    \begin{align*}
        \theta_{S(i)} \left[\theta_{T(i)} M |O_i| + 5|O_i|\right] \geq (1 -\delta) \cdot \frac{M|O_i|}{2}
    \end{align*}
and thus (using $\theta_{T(i)}\ge 1/32$ from Lemma \ref{lemma:neighbour_payment_bound}
  and our choice of $M=160/\delta$),
$$  \theta_{S(i)} \geq \frac{1-\delta}{2 \theta_{T(i)} + (10/M)} \geq \frac{1}{2 \theta_{T(i)}} \cdot \frac{1 -\delta}{1+\delta} \ \ \implies\ \  \theta_{T(i)}\theta_{S(i)} \geq \frac{1}{2 (1 + 2 \epsilon)}
   $$ using $\delta\le \epsilon/2$ and $\epsilon< 1$.
    Moreover, note that $\sum_{j \in N_i} x_j > 0.5 + \epsilon$ implies $$\sum_{j \in N_i} 2\left(1 - \theta_{S(j)}\right) > (1+2\epsilon)/2$$ Combining the above statements allows us to bound the total payment of buyer $T(i)$:
    \begin{align*}
        \sum_{j \in N_i} \left(1 - \theta_{S(j)}\right)\theta_{T(i)}\theta_{S(i)} 4 \geq \frac{4}{2 (1 + 2 \epsilon)} \cdot \sum_{j \in N_i} \left(1 - \theta_{S(j)}\right) > \frac{1}{2}.
    \end{align*}
    This yields a contradiction because $T(i)$ has budget  $1/2$. Hence $x_i \leq \epsilon$ when $\sum_{j \in N_i} x_j > 0.5 + \epsilon$.
    Next consider the case of $\sum_{j \in N_i} x_j < 0.5 - \epsilon$. The budget constraint of $S(i)$ and Lemma~\ref{lemma:neighbour_payment_bound} yield 
    \begin{align*}
        \theta_{S(i)} \left[\theta_{T(i)} M |O_i| + |O_i|\right] \leq \frac{M|O_i|}{2}
        \end{align*}
which implies that 
\begin{align}\label{heheeq2}\theta_{S(i)} \left[\theta_{T(i)} M |O_i| + \theta_{T(i)}|O_i|\right] \leq \frac{M|O_i|}{2}\ \ 
         \implies\ \  \theta_{T(i)}\theta_{S(i)} \leq \frac{1}{2(1 + (1/M))}< \frac{1}{2}. 
    \end{align}
By Lemma~\ref{lemma:neighbour_payment_bound}, we have $\theta_{S(j)} \geq (1 - 2 \delta)/2$ and thus, $2(1 - \theta_{S(j)}) \leq 1 + 2\delta$. Multiplying both sides by $(1 - 2\delta)$ yields $2(1 - \theta_{S(j)})(1 - 2\delta) \leq 1 - 4\delta^2 < 1$. In other words, we have
\begin{equation}\label{heheeq3}
2(1 - \theta_{S(j)})(1 - 2\delta) < \min\{2(1 - \theta_{S(j)}), 1\} = x_j
\end{equation}
for every $j\in [n]$. This together with $\sum_{j \in N_i} x_j < 0.5 - \epsilon$ implies $$(1 - 2\delta)\sum_{j \in N_i} 2\left(1 - \theta_{S(j)}\right) < (1-2\epsilon)/2.$$ Therefore, we get that the total payment of $T(i)$ satisfies the following bound
    \begin{align*}
        \sum_{j \in N_i} \left(1 - \theta_{S(j)}\right)\theta_{T(i)}\theta_{S(i)} 4 < \sum_{j \in N_i} 2\left(1 - \theta_{S(j)}\right) < \frac{1 - 2\epsilon}{2(1 - 2\delta)} \leq (1 - \delta) \cdot \frac{1}{2} 
    \end{align*}
    using $\delta\le \epsilon/2$.
    As a consequence of the definition of $\delta$-approximate throttling equilibria, we have that $\theta_{T(i)} \geq 1 -\delta$. Finally, using (\ref{heheeq2}) and (\ref{heheeq3}), we have
    \begin{align*}
        x_i > 2(1 - \theta_{S(i)}) (1 - 2 \delta) \geq 2\left(1 - \frac{1}{2\theta_{T(i)}}\right) (1 - 2 \delta) \geq \frac{(1 - 2\delta)^2}{1 - \delta} > \frac{1 - 4 \delta}{1 - \delta} \geq 1 - \epsilon,
    \end{align*}
    where the last inequality follows from $\delta \leq \epsilon/(3 + \epsilon)$.
\end{proof}

This completes the reduction, and thereby the proof of Theorem~\ref{thm:second-price-PPAD-hard}, because we have shown that for any $\delta$-approximate throttling equilibrium of the throttling game $\I_\G$, the strategy $(x_i)_i$ is an $\epsilon$-approximate equilibrium of the threshold game $\G$.

\paragraph{PPAD Membership of Approximate Second-Price Throttling Equilibria}

Next, we show that the problem of computing a $\delta$-approximate throttling equilibrium belongs to PPAD by showing a reduction to BROUWER: the problem of computing an approximate fixed point of a Lipschitz continuous function from a $n$-dimensional unit cube to itself, which known to be in PPAD~\citep{chen2009settling}. Its proof is motivated by the argument for existence of exact throttling equilibria given in Theorem~\ref{theorem:existence_second_price} and can be found in Appendix~\ref{appendix:PPAD-membership}

\begin{theorem}\label{thm:second-price-PPAD-membership}
	The problem of computing an approximate throttling equilibrium is in PPAD.
\end{theorem}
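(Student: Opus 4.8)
The plan is to reuse the Brouwer map from the proof of Theorem~\ref{theorem:existence_second_price}, mildly regularized so that it is (i) Lipschitz with a Lipschitz constant of polynomial bit-length and (ii) has the property that its \emph{approximate} fixed points are honest $\delta$-approximate throttling equilibria in the sense of Definition~\ref{def:throt_eq} --- in particular \emph{exactly} budget-feasible, not merely approximately so. Write $D_i(\theta) \coloneqq \sum_j p(1,\theta_{-i})_{ij}$ for buyer $i$'s total payment at full participation (a polynomial in $\theta$; by \eqref{eq:exp_pay} one has $\sum_j p(\theta)_{ij} = \theta_i D_i(\theta)$), set $\gamma \coloneqq \delta/2$, and define $F:[0,1]^n\to[0,1]^n$ by $F_i(\theta) \coloneqq (1-\gamma)B_i / \max\{D_i(\theta),\,(1-\gamma)B_i\}$, which coincides with $\min\{(1-\gamma)B_i/D_i(\theta),\,1\}$ and equals $1$ when $D_i(\theta)=0$.

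First I would check that $F$ is a legal BROUWER instance. Each $D_i$ is a polynomial of degree at most $n$ whose coefficients are sums of products of at most $n$ bids, so on $[0,1]^n$ it is bounded by $\bar D \coloneqq \sum_{j,\ell} b_{\ell j}$ and is Lipschitz with a constant $L$ of polynomial bit-length. Since the denominator $\max\{D_i(\theta),(1-\gamma)B_i\}$ is Lipschitz and bounded below by $(1-\gamma)B_i \ge B_i/2 > 0$, its reciprocal --- and hence $F_i$ --- is Lipschitz with constant $O(L/B_i)$, which has polynomial bit-length; this is exactly where the $\max$-regularization matters, since the naive map $\theta \mapsto B_i/D_i(\theta)$ has unbounded slope near $D_i(\theta)=0$. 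Hence $(F,\lambda,\eta)$, with $\lambda$ equal to (an upper bound on) this Lipschitz constant, is computable from the throttling game in polynomial time for any accuracy $\eta$ of polynomial bit-length.

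Next I would show that any $\eta$-approximate fixed point $\theta$ of $F$, i.e.\ $\|F(\theta)-\theta\|_\infty \le \eta$, is a $\delta$-approximate throttling equilibrium, provided $\eta \le \tfrac{1}{2}\min_i\min\{\delta,\ (\delta/2)B_i/\bar D\}$. For budget feasibility: $\theta_i \le F_i(\theta)+\eta \le (1-\gamma)B_i/D_i(\theta) + \eta$, so $\sum_j p(\theta)_{ij} = \theta_i D_i(\theta) \le (1-\gamma)B_i + \eta\bar D \le B_i$ by the choice of $\gamma$ and $\eta$. For the ``not too much unnecessary throttling'' condition: suppose $\theta_i D_i(\theta) < (1-\delta)B_i$, so $\theta_i < (1-\delta)B_i/D_i(\theta)$. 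If $(1-\gamma)B_i/D_i(\theta)\ge 1$, then $F_i(\theta)=1$ and $\theta_i \ge 1-\eta \ge 1-\delta$; otherwise $\theta_i \ge F_i(\theta)-\eta = (1-\gamma)B_i/D_i(\theta) - \eta$, which combined with $\theta_i < (1-\delta)B_i/D_i(\theta)$ forces $(\delta-\gamma)B_i/D_i(\theta) < \eta$, i.e.\ $D_i(\theta) > (\delta/2)B_i/\eta \ge \bar D$ --- impossible. So the second case never occurs and $\theta_i \ge 1-\delta$, as required.

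Finally I would invoke the fact that BROUWER --- finding an $\eta$-approximate fixed point of a Lipschitz map from the unit cube to itself, with the Lipschitz constant and accuracy part of the input --- lies in PPAD (via Sperner's lemma; see \citealt{chen2009settling}), so the polynomial-time reduction just described places the computation of $\delta$-approximate throttling equilibria in PPAD. I expect the only real delicacy to be bookkeeping rather than ideas: one has to keep $\lambda$ of polynomial bit-length (forcing the regularization of the denominator away from $0$) while also keeping $\eta$ small enough to absorb \emph{both} the regularization slack $\gamma B_i$ and the budget overshoot $\eta \bar D$ (forcing $\gamma \approx \delta$ and $\eta$ exponentially small yet of polynomial bit-length), and then checking that these choices, together with the circuit for $F$, are all produced in polynomial time.
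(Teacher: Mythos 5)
Your proposal is correct and follows essentially the same route as the paper's proof: the same $(1-\delta/2)$-scaled, max-regularized version of the existence map, a polynomial-bit-length Lipschitz bound, a reduction to BROUWER (in PPAD via \citealt{chen2009settling}), and a verification that suitably accurate approximate fixed points satisfy both conditions of Definition~\ref{def:throt_eq}. The only differences are cosmetic (regularizing the denominator at $(1-\delta/2)B_i$ rather than $B_i/2$, and an additive rather than relative error analysis of the approximate fixed point), and your chosen accuracy, like the paper's $\delta\underline{B}/(4m\overline{b})$, is of polynomial bit-length rather than genuinely exponentially small.
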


\subsection{NP-hardness of Revenue Maximization under Throttling}

To further strengthen our hardness result, next we establish the NP-hardness of computing the revenue-maximizing approximate throttling equilibrium. With revenue being one of the primary concerns of advertising platforms, this result provides further evidence of the computational difficulties which plague throttling equilibria in second-price auctions. We begin by defining the decision version of the revenue-maximization problem.

\begin{definition}[REV]
    Given a throttling game $G$ and target revenue $R$ as input, decide if there exists a $\delta$-approximate throttling equilibrium of $G$, for any $\delta \in [0,1)$, with revenue greater than or equal to $R$.
\end{definition}

Note that we allow for arbitrarily bad approximations to the throttling equilibrium by allowing $\delta$ to be any number in $[0,1)$. Theorem~\ref{thm:revenue_np_hard} states the problem of finding the revenue maximizing approximate throttling equilibrium is NP-hard. Its based on a reduction from 3-SAT to REV and has been relegated to Appendix~\ref{appendix:NP-hard}.
\begin{theorem}\label{thm:revenue_np_hard}
	REV is NP-hard.
\end{theorem}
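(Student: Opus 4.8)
The plan is to prove NP-hardness of REV by a reduction from 3-SAT. Given a 3-SAT formula $\phi$ with variables $z_1,\dots,z_r$ and clauses $C_1,\dots,C_s$, I would construct a throttling game $G_\phi$ together with a target revenue $R$ such that $G_\phi$ has a $\delta$-approximate throttling equilibrium (for some $\delta$) with revenue at least $R$ if and only if $\phi$ is satisfiable. The key design idea is to encode each Boolean variable $z_k$ via a pair of ``variable buyers'' whose throttling parameters act as a switch: in any equilibrium one of the pair is throttled down near $0$ and the other sits near $1$, and the choice of which is free — precisely mirroring a truth assignment. This is the same mechanism that underlies the multiplicity of second-price throttling equilibria, so I would first isolate a small ``choice gadget'' (a handful of buyers and goods whose bids force exactly two equilibrium configurations for a given parameter) and show it has the desired behaviour using the budget-feasibility and no-unnecessary-throttling conditions of Definition~\ref{def:throt_eq}.

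Next I would build a ``clause gadget'': for each clause $C_j$ a good (or small set of goods) on which the three literal-buyers corresponding to the literals of $C_j$ place bids, arranged so that a fixed amount of extra revenue can be extracted on that good exactly when at least one of those three literal-buyers is in its ``true'' (high-$\theta$) state; if all three are ``false'' the second-price revenue on that good drops by a fixed gap. The target $R$ is then set to the sum of (a baseline revenue that is always collected regardless of the assignment) plus the maximum clause-gadget revenue, so that meeting $R$ forces every clause gadget to be in its ``satisfied'' state simultaneously. I would verify (i) that every truth assignment satisfying $\phi$ yields an honest throttling equilibrium (all budget constraints tight or $\theta=1$, no unnecessary throttling) achieving revenue exactly $R$, and (ii) conversely that any $\delta$-approximate throttling equilibrium with revenue $\ge R$ must place each variable gadget cleanly in one of its two states and each clause gadget in its satisfied state, from which one reads off a satisfying assignment. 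Care is needed to choose the bid magnitudes and budgets (and a small enough but constant-or-polynomial $\delta$, exploiting that REV quantifies over all $\delta\in[0,1)$) so that the $\delta$-slack in Definition~\ref{def:throt_eq} cannot be abused to blur the discrete states — this is where the constants from the PPAD reduction (large separating constant $M$, tie-breaking conventions) would be reused.

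The main obstacle I expect is the consistency constraint: a literal-buyer for $z_k$ and one for $\bar z_k$ must be forced into opposite states, and this must hold robustly under $\delta$-approximation and despite the fact that a buyer's throttling parameter is pinned only through her aggregate payment across all goods she bids on. The clean way to handle this is to route each variable's two states through a single shared ``reciprocal''-style good (as in the PPAD construction) so that the two buyers' parameters are forced to satisfy $\theta_{k}^{+}\cdot\theta_{k}^{-}\approx\text{const}$ together with lower bounds $\theta_k^{\pm}\ge c$, squeezing them into complementary high/low regimes; then each must be ``spent out'' against a private budget whose size decides which of $\{$near-$0$, near-$1\}$ it occupies. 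A secondary subtlety is ensuring the baseline revenue is genuinely assignment-independent so that the only way to reach $R$ is via the clause gadgets — this requires that goods outside the clause gadgets have their winner and second price fixed regardless of the variable buyers' states, which one arranges by adding dominating dummy bidders with slack budgets. Once these gadgets are in place, the equivalence ``$\phi$ satisfiable $\iff$ revenue $R$ attainable at some $\delta$-approximate throttling equilibrium'' follows by the case analysis sketched above, and since the construction is clearly polynomial in $|\phi|$, this establishes that REV is NP-hard.
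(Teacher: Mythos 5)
Your overall strategy matches the paper's: a reduction from 3-SAT with a pair of ``literal buyers'' per variable tied together through a reciprocal good, clause goods on which the literal buyers bid underneath a dominating unbounded buyer, and a target revenue $R$ that is attainable exactly when every clause good collects its maximum second-price payment. However, there is a genuine gap in the mechanism you use to force each variable gadget into one of exactly two discrete states. The reciprocal constraint $\theta_i^{+}\theta_i^{-}\approx \mathrm{const}$ together with lower bounds $\theta_i^{\pm}\geq c$ does \emph{not} squeeze the pair into complementary high/low regimes: the symmetric point $\theta_i^{+}=\theta_i^{-}=\sqrt{\mathrm{const}}$ (and a whole continuum around it) satisfies these conditions, so your claimed ``choice gadget whose bids force exactly two equilibrium configurations'' does not follow from budget feasibility and no-unnecessary-throttling alone. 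Worse, you cannot lean on the no-unnecessary-throttling condition at all in the converse direction: REV existentially quantifies over $\delta\in[0,1)$, so a YES-witness may be a $\delta$-approximate equilibrium with $\delta$ arbitrarily close to $1$, for which condition 2 of Definition~\ref{def:throt_eq} is essentially vacuous. (You cannot ``choose a small enough $\delta$''; the quantifier is not yours to set.) Hence the soundness argument may only use budget feasibility plus the revenue target.

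The paper closes exactly this hole with an additional \emph{binary gadget}: for each variable two goods $\S_i,\T_i$ on which $V_i^{+}$ and $V_i^{-}$ bid $1$ and the unbounded buyer bids $2$, so these goods contribute revenue $\theta_i^{+}+\theta_i^{-}\leq \theta_i^{+}+1/(2\theta_i^{+})\leq 3/2$, with equality only at the endpoints $\theta_i^{+}\in\{1/2,1\}$. The target $R=n+m+3n/2$ is the sum of the separate upper bounds on the reciprocal goods, clause goods, and binary goods, all derived from budget feasibility alone, so hitting $R$ forces every variable pair to an endpoint configuration and every clause good to pay exactly $1$ (requiring some literal buyer at $\theta=1$), from which a satisfying assignment is read off; the forward direction exhibits an exact equilibrium with revenue exactly $R$. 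So contrary to your plan of making the baseline revenue ``assignment-independent,'' the crucial trick is to make part of the baseline depend on the variable gadgets in a way that is maximized only at the two Boolean states. Without such a component (or an equivalent device that relies only on budget constraints), your reduction's soundness direction fails.
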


\subsection{An Algorithm for Second-Price Throttling Equilibria with Two Buyers Per Good}

Next, we contrast the hardness results of the previous subsection with an algorithm for the case when each good receives at most two non-zero bids. Since goods with only one positive bid never result in a payment, without loss of generality, we can assume that every good has exactly two buyers with positive bids. More precisely, in this subsection, we will assume that $|\{i \in [n] \mid b_{ij}>0\}| = 2$ for all $j \in [m]$. This special case demarcates the boundary of tractability for computing throttling equilibria in second-price auctions: Our PPAD-hardness result (Theorem~\ref{thm:second-price-PPAD-hard}) holds for the slightly more general case of each good receiving at most three positive bids. We begin by describing the algorithm (Algorithm~\ref{alg:two_buyer}).

\begin{algorithm}[H] 
   \caption{Algorithm for the Two Buyer Case}
   \label{alg:two_buyer}
   %\textbf{Round}$(\alpha^*,x^*)$
    \begin{algorithmic}\vspace{0.08cm}
            \item[\textbf{Input:}] Throttling game $\left(n, m, (b_{ij}), (B_i)\right)$ and parameter $\gamma > 0$
            \item[\textbf{Initialize:}] $\theta_i = \min\{B_i/ (2\sum_j b_{ij}), 1\}$ for all $i \in [n]$
            \item[\textbf{While}] there exists a buyer $i \in [n]$
            such that $\theta_i < 1 - \gamma$ and $\sum_j p(\theta)_{ij} < (1 - \gamma)^3 B_i$:
            \begin{enumerate}
                \item For all $i \in [n]$ such that $\theta_i < 1 - \gamma$ and $\sum_j p(\theta)_{ij} < (1 - \gamma)^2 B_i$, set $\theta_i \leftarrow \theta_i/(1 - \gamma)$
                \item For all $i \in [n]$ such that $\sum_j p(\theta)_{ij} > B_i$, set $\theta_i \leftarrow (1 - \gamma)\theta_i$ \vspace{0.1cm}  
            \end{enumerate}
            \item[\textbf{Return:}] $\theta$
    \end{algorithmic}
 \end{algorithm}%~\captionof{figure}{The rounding algorithm for

The following theorem, whose proof can be found in Appendix~\ref{appendix:two-buyer}, establishes the correctness and polynomial runtime of Algorithm~\ref{alg:two_buyer}.

\begin{theorem}\label{thm:two-buyer}
    Algorithm~\ref{alg:two_buyer} returns a $(1 - 3 \gamma)$-approximate throttling equilibrium in time which is polynomial in the size of the instance and $1/\gamma$.
\end{theorem}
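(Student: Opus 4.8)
\textbf{Proof proposal for Theorem~\ref{thm:two-buyer}.}

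The plan is to mirror the structure of the proof of Theorem~\ref{thm:first-price-dynamics}, but now carefully handle the fact that the second-price payment $p(\theta)_{ij}$ is no longer monotone in each individual throttling parameter: increasing $\theta_i$ can \emph{decrease} a different buyer's payment if buyer $i$ outbids the winning bidder on some good, which is precisely why Algorithm~\ref{alg:two_buyer} needs the extra downward step~(2). I would first use the two-buyer-per-good restriction to get an explicit, tractable form for the payments. On a good $j$ with exactly two positive bidders $i$ (the higher) and $\ell$ (the lower), the only nonzero payment is $p(\theta)_{ij} = \theta_i\theta_\ell b_{\ell j}$, and $p(\theta)_{\ell j}=0$. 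Thus buyer $i$'s total expenditure $\sum_j p(\theta)_{ij}$ is a sum of terms each of the form $\theta_i \theta_{k} b$ where $k$ ranges over buyers she outbids on some good — in particular it is $\theta_i$ times a quantity that is \emph{nondecreasing} in every other $\theta_k$. This is the key simplification that step~(1) exploits.

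The core of the argument has three parts. First, a ``bounded violation'' invariant: I would show by induction on iterations that after step~(2), every buyer satisfies $\sum_j p(\theta)_{ij} \le B_i$, and more generally that at the start of every iteration the overspend is mild — say $\sum_j p(\theta)_{ij} \le B_i/(1-\gamma)$ — because step~(2) scales down overspenders by $(1-\gamma)$ and, by the structure above, scaling $\theta_i$ down by $(1-\gamma)$ scales $i$'s payment down by at least $(1-\gamma)$ (other buyers' downward moves in the same round only help $i$). Combined with the initialization $\theta_i = \min\{B_i/(2\sum_j b_{ij}),1\}$, which is budget-feasible by a crude bound, this pins the final output to satisfy condition~1 of a $(1-3\gamma)$-approximate equilibrium. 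Second, the ``not too much unnecessary throttling'' condition: at termination the while-loop guard fails, so every buyer with $\theta_i < 1-\gamma$ has $\sum_j p(\theta)_{ij} \ge (1-\gamma)^3 B_i$; unwinding the definition, any buyer spending strictly below $(1-3\gamma)B_i \le (1-\gamma)^3 B_i$ must have $\theta_i \ge 1-\gamma \ge 1-3\gamma$, which is exactly condition~2. The reason step~(1) uses the looser trigger $(1-\gamma)^2 B_i$ while the loop guard uses $(1-\gamma)^3 B_i$ is to create a buffer absorbing the downward corrections of step~(2), and I would make this slack accounting explicit.

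Third — and this is the part I expect to be the main obstacle — the \emph{termination in polynomial time}. Unlike the first-price dynamics, where each buyer's throttling parameter only ever increases (so a simple potential argument works), here step~(2) can push parameters back down, so one must rule out cycling or slow progress. I would aim for a potential-function argument: e.g. track $\Phi(\theta) = \sum_i \log(1/\theta_i)$ or the number of buyers currently below $1-\gamma$, and argue that the combined effect of steps~(1) and~(2) makes net progress each round. The delicate point is that step~(2) could in principle undo step~(1)'s gains; the resolution should be that a buyer triggered in step~(2) was overspending, hence (given the invariant) was \emph{not} triggered in step~(1) that same round since $(1-\gamma)^2 B_i < B_i$, so the two operations never fight over the same buyer in a single iteration, and the downward corrections are themselves bounded in total number because once the bounded-violation invariant kicks in overspends are small. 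Making the potential decrease by a fixed multiplicative factor $\Omega(\gamma)$ per iteration — while accounting for both the upward and downward moves and the interaction across goods with shared bidders — is the technical crux; from there the bound $O\!\big(\tfrac{n}{\gamma}\log(1/c)\big)$ on the iteration count, with $c = \min_i \min\{B_i/(2\sum_j b_{ij}),1\}$, follows as in Theorem~\ref{thm:first-price-dynamics}, and each iteration is clearly polynomial-time since it only evaluates the closed-form payments and performs arithmetic.
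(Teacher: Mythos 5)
Your correctness analysis is essentially the paper's: the two-buyer payment formula $p(\theta)_{ij}=\theta_i\theta_k b_{kj}$, the observation that a buyer hit by step (2) cannot have been raised in step (1) of the same round (its start-of-round spend already exceeded $(1-\gamma)^2B_i$), and the resulting budget invariant after step (2) are exactly Lemma~\ref{lemma:alg_budget}, and reading the failed loop guard together with $(1-\gamma)^3\ge 1-3\gamma$ as the no-unnecessary-throttling condition is also the paper's step. One small caution: state the invariant as $\sum_j p(\theta)_{ij}\le B_i$ after step (2); your hedged variant ``$\le B_i/(1-\gamma)$ at the start of each round'' would not deliver condition 1 of Definition~\ref{def:throt_eq}, which requires exact budget feasibility of the output.

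The genuine gap is exactly where you flag the ``technical crux'': termination. Same-round disjointness of steps (1) and (2) does not prevent a buyer from being raised in one round and lowered in a later round, and neither of your candidate potentials is monotone: in a round with one increment and several decrements, $\sum_i\log(1/\theta_i)$ can go up, and step (2) can push a buyer from above $1-\gamma$ back below it; the claim that downward corrections are ``bounded in total number because overspends are small'' is asserted, not proved, and decrements can in principle recur throughout the run. The paper closes this hole with Lemma~\ref{lemma:alg_progress}, a ``progress is never lost'' statement: once a buyer satisfies ``$\sum_j p(\theta)_{ij}\ge(1-\gamma)^3B_i$ or $\theta_i\ge 1-\gamma$'', it still satisfies it at the end of every subsequent iteration. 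Since a buyer decremented in step (2) had start-of-round spend above $(1-\gamma)^2B_i$ and hence already satisfied this condition, any buyer currently violating it (and each iteration at least one such buyer witnesses the loop guard and is raised in step (1)) has never been decremented; its parameter therefore grows from at least $c=\min_i\min\{B_i/(2\sum_j b_{ij}),1\}$ by a factor $1/(1-\gamma)$ each time it is chosen, and pigeonholing over the $n$ buyers yields $T\le n\log(1/c)/\gamma$ just as in Theorem~\ref{thm:first-price-dynamics}. Without this monotonicity lemma (or some substitute playing its role), your sketch does not rule out slow progress or cycling, so the polynomial-time claim is not established.
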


\section{Comparing Pacing and Throttling}\label{sec:TE-PE}

In this section, we compare two of the most popular budget management strategies: multiplicative pacing and throttling. First, we restate the definition of pacing equilibrium, as it appears in \citet{conitzer2018multiplicative, conitzer2019pacing}. Under pacing, each buyer $i$ has a pacing parameter $\alpha_i$ and, she bids $\alpha_i b_{ij}$ on good $j$. Let $p_j(\alpha)$ denote the price on good $j$ when all of the buyers use pacing, i.e., $p_j(\alpha)$ is the highest (second-highest) element among $\{\alpha_i v_{ij}\}_i$ for first-price (second-price) auctions. Then, a tuple $((\alpha_i), (x_{ij}))$ of pacing parameters and allocations $x_{ij}$ is called a pacing equilibrium if the following hold:
\begin{itemize}[noitemsep]
    \item[(a)] Only buyers with the highest bid win the good: $x_{ij} > 0$ implies $\alpha_i v_{ij}= \max_i \alpha_i b_{ij}$.\vspace{0.1cm}
    \item[(b)] Full allocation of each good with a positive bid:  $\max_i \alpha_i b_{ij} > 0$  implies $\sum_{i\in [n]} x_{ij} = 1$.\vspace{0.1cm}%\footnote{Xi: Should we add $\sum_i x_{ij}\le 1$ for every good $j$?}
    \item[(c)] Budgets are satisfied: $\sum_{j\in [m]} x_{ij} p_j(\alpha) \leq B_i$.\vspace{0.1cm}
	\item[(d)] %Complementarity/
	No unnecessary pacing: $\sum_{j\in [m]} x_{ij} p_j(\alpha) < B_i$ implies $\alpha_i=1$.\vspace{0.15cm}
\end{itemize}

\paragraph{Comparing Pacing and Throttling in First-Price Auctions}

We begin with a comparison of pacing equilibria and throttling equilibria for first-price auctions. In \citet{conitzer2019pacing}, the authors show that a unique pacing equilibrium always exists in first-price auctions and characterize it as the largest element in the collection of all budget-feasible vectors of pacing parameters. In Theorem~\ref{thm:existence-first-price}, we show the analogous result for throttling using similar techniques. However, unlike pacing equilibrium, which is known to be rational \citet{conitzer2019pacing}, there exist throttling games where the throttling equilibrium is irrational as we demonstrate through Example~\ref{example:first-price-irrational}. Furthermore, in \citet{borgs2007dynamics}, the authors develop t\^atonnement-style dynamics similar to those described in Algorithm~\ref{alg:dynamics}, which converge to an approximate pacing equilibrium in polynomial time. In combination with Theorem~\ref{thm:first-price-dynamics}, this provides evidence supporting the tractability of budget management for first-price auctions.

The uniqueness of pacing equilibrium and throttling equilibrium in first-price auctions is conducive to comparison, which we carry out for revenue. More specifically, in Theorem~\ref{thm:revenue-comparison}, we show that the revenue under the pacing equilibrium and the throttling equilibrium are always within a multiplicative factor of 2 of each other. Let REV(PE) and REV(TE) denote the revenue under the unique pacing equilibrium and the unique throttling equilibrium respectively. 

\begin{theorem}\label{thm:revenue-comparison}
    For any throttling game $\left(n, m, (b_{ij}), (B_i)\right)$, the revenue from the pacing equilibrium and the revenue from the throttling equilibrium are always within a factor of 2 of each other, i.e., REV(PE) $\leq 2 \times \text{REV(TE)}$ and REV(TE) $\leq 2 \times \text{REV(PE)}$.
\end{theorem}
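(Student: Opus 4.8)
The plan is to compare the two equilibria good-by-good, exploiting the fact that both are characterized as maximal budget-feasible vectors (Theorem~\ref{thm:existence-first-price} for throttling, and the analogous result of \citet{conitzer2019pacing} for pacing). Let $\theta^*$ be the throttling equilibrium and $(\alpha^*, x^*)$ the pacing equilibrium. First I would establish a per-good bound by splitting the buyers into two groups: those who are ``saturated'' (spend their whole budget, equivalently have throttling/pacing parameter $<1$) and those who are ``unsaturated'' ($\theta_i^* = 1$ or $\alpha_i^* = 1$). In a first-price auction the revenue on good $j$ equals the expected winning bid under throttling, and the (fractional) winning bid under pacing. The key observation is that for a buyer $i$ with $\theta_i^* = 1$, her bid is never shaded, so she participates in and competes for good $j$ exactly as in the unmodified auction; the same is true of a pacing-unsaturated buyer with $\alpha_i^* = 1$.

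Next I would bound the total revenue lost by each mechanism relative to the ``full'' (unmanaged) first-price revenue on the subset of goods won by unsaturated buyers, and separately bound the revenue extracted from saturated buyers. Concretely: the total revenue of either equilibrium is at least the total amount spent by the saturated buyers, which is exactly $\sum_{i \text{ saturated}} B_i$. So $\text{REV(TE)} \ge \sum_{i : \theta_i^* < 1} B_i$ and $\text{REV(PE)} \ge \sum_{i : \alpha_i^* < 1} B_i$. To get the factor $2$, I would argue that for any good $j$, the revenue extracted by one mechanism is at most the revenue extracted by the other plus the budget of some buyer who is saturated in the other; summing and using the budget bound then gives the claim. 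An alternative, and I suspect cleaner, route is: show $\text{REV(TE)} \le \text{REV(PE)} + \sum_{i : \alpha_i^*<1} B_i \le \text{REV(PE)} + \text{REV(PE)} = 2\,\text{REV(PE)}$, by arguing that any revenue under throttling beyond what pacing collects must come from a buyer who is pacing-saturated (otherwise that buyer has $\alpha_i^* = 1$, bids truthfully, and pacing would already have extracted at least as much from her on that good), and symmetrically for the reverse direction.

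The main obstacle I anticipate is the ``truthful buyer dominates'' step: making precise that if buyer $i$ has $\theta_i^* = 1$ (resp.\ $\alpha_i^* = 1$) then the revenue she generates for the seller under throttling (resp.\ pacing) on each good is at least what the other mechanism generates from her there. For throttling this is delicate because even an unthrottled buyer can lose a good to a throttled competitor who happens to participate, so the comparison must be made in expectation and must account for the competitor's participation probability versus pacing's deterministic shading; one has to rule out the scenario where throttling's randomness causes the seller to collect strictly less from good $j$ than pacing does while no saturated buyer is involved. I would handle this by a careful case analysis on which buyer wins good $j$ in each equilibrium — whether it is the same buyer, whether the winner is saturated, and whether a saturated buyer is ``displaced'' by randomness — and charge any shortfall to the budget of a saturated buyer, so the total shortfall is at most $\sum_{i \text{ saturated}} B_i \le \text{REV}$ of that mechanism. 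Finally I would note that the two bounds are not simultaneously tight (hence ``incomparable directly''), and that the factor $2$ cannot be improved, deferring the tightness examples to the appendix.
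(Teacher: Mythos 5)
Your proposal follows essentially the same route as the paper's proof: lower-bound each equilibrium's revenue by the total budget of its own budget-constrained (``saturated'') buyers, and then charge the other equilibrium's revenue either to those budgets or, on goods involving a buyer with $\theta_i=1$ (resp.\ $\alpha_i=1$), to the fact that her unshaded, always-submitted bid forces the first mechanism's first-price revenue on that good to be at least as large. The ``displacement by randomness'' obstacle you anticipate is in fact vacuous in the first-price setting: since an unthrottled buyer's bid $b_{ij}$ is always present, the winning bid on that good can only be at least $b_{ij}\ge \alpha_i b_{ij}$, so no shortfall ever needs to be charged and the case analysis you sketch collapses to the paper's argument.
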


\begin{proof}
    Consider a throttling game $\left(n, m, (b_{ij}), (B_i)\right)$. Let $\theta = (\theta_i)_i$ be the unique throttling equilibrium (TE) and $\alpha = (\alpha_i)_i$ be the unique pacing equilibrium (PE) for this game. We will use $p_j(\theta)$ and $p_j(\alpha)$ to denote the (expected) payment made to the seller on good $j$ under the TE and PE respectively. Then, REV(TE) $= \sum_j p_j(\theta)$ and REV(PE) $= \sum_j p_j(\alpha)$.
    
    First, we show that REV(PE) $\leq 2 \times \text{REV(TE)}$. Let $N \coloneqq \{i \in [n] \mid \theta_i = 1\}$ be the set of buyers who are not budget constrained under the TE. Moreover, define
    \begin{align*}
        M \coloneqq \{j \in [m] \mid \exists\ i\in N \text{ such that } i \text{ wins a non-zero fraction of } j \text{ under the PE } \alpha \}
    \end{align*}
    Note that, since $\theta_i = 1$ and $\alpha_i \leq 1$ for all $i \in N$, we get that the TE yields a higher revenue for the seller on all goods in the set $M$, i.e., $p_j(\theta) \geq p_j(\theta)$ for all $j \in N$. Therefore, REV(TE) $\geq \sum_{j \in M} p_j(\theta) \geq \sum_{j \in M} p_j(\alpha)$. Furthermore, the definition of throttling equilibrium implies that every buyer $i \notin N$ spends her entire budget $B_i$ under the TE. Hence, by our choice of $M$, we get $\sum_{j \notin M} p_j(\alpha) \leq \sum_{i \notin N} B_i \leq$ REV(TE). Combining the two statements yields REV(PE) $\leq 2 \times \text{REV(TE)}$, as desired.
    
    Next, we complete the proof by showing that REV(TE) $\leq 2 \times \text{REV(PE)}$. Let $S = \{i \in [n] \mid \alpha_i = 1\}$ be the set of buyers who are not budget constrained under the PE. Note that, for all $i \in S$ and $j \in [m]$, buyer $i$ bids $b_{ij}$ under the PE, which implies $p_j(\alpha) \geq \max_{i \in S} b_{ij}$ for all goods $j \in [m]$. Therefore, for any good $j \in [m]$, the total payment made by buyers in the set $S$ under the TE is at most $p_j(\alpha)$. As a consequence, the total payment made by buyers in $S$ under the TE is at most REV(PE). Furthermore, the buyers not in $S$ completely spend their budget under the TE so the total payment made by buyers not in $S$ under the TE is also at most REV(PE). Hence, we have the desired inequality REV(TE) $\leq 2 \times \text{REV(PE)}$.
\end{proof}

In Appendix~\ref{appendix:TE-PE}, we give examples to demonstrate that REV(TE) can be arbitrarily close to $2 \times \text{REV(PE)}$, and REV(PE) can be arbitrarily close to $(4/3) \times \text{REV(TE)}$. In other words, for Theorem~\ref{thm:revenue-comparison}, the inequality REV(TE) $\leq 2 \times \text{REV(PE)}$ is tight and the inequality REV(PE) $\leq 2 \times \text{REV(TE)}$ is not too loose.

\paragraph{Comparing Pacing and Throttling for Second-Price Auctions}
This subsection is devoted to the comparison of pacing equilibria and throttling equilibria in second-price auctions. We begin by noting that, in stark contrast to first-price auctions, there could be infinitely many throttling equilibria for second-price auctions as the following example demonstrates.

\begin{example}
    There are 2 goods and 2 buyers. The bids are given by $b_{11} = b_{22} = 2$, $b_{12} = b_{21} = 1$, and the budgets are given by $B_1 = B_2 = 1/2$. Then, it is straightforward to check that any pair of throttling parameters $\theta_1, \theta_2 \in [1/2, 1]$ such that $\theta_1 \theta_2 = 1/2$ forms a throttling equilibrium. 
\end{example}

\cite{conitzer2018multiplicative} demonstrate that multiplicity (although only finitely-many different equilibria) also shows up for pacing equilibria in second-price auctions, which in combination with the multiplicity of throttling equilibria bodes unfavorably for potential comparisons of revenue. The similarities do not end with multiplicity: the problems of computing an approximate pacing equilibrium and computing an approximate throttling equilibrium are both PPAD-complete \citep{chen2021complexity}. As a consequence, we get that, unlike first-price auctions, no dynamics can converge to an approximate equilibrium in polynomial time for second-price auctions under either budget-management approach (assuming PPAD-hard problems cannot be solved in polynomial time). Furthermore, finding the revenue maximizing throttling equilibrium and finding the revenue maximizing pacing equilibrium are both NP-hard problems~\citep{conitzer2018multiplicative}. However, unlike throttling equilibria, a rational pacing equilibrium always exists \citep{chen2021complexity}.

\section{Price of Anarchy}

In this section, we study the efficiency of throttling equilibria in first-price and second-price auctions. We will use liquid welfare \citep{dobzinski2014efficiency} to measure efficiency. It is an alternative to social welfare which is more suitable for settings with budget constraints, and it reduces to social welfare when budgets are infinite.

\begin{definition}\label{definition:liquid-welfare}
    For an allocation $y = \{y_{ij}\}$, where $y_{ij} \in [0,1]$ denotes the probability of allocating good $j$ to buyer $i$, its liquid welfare $\lw(x)$ is defined as
\begin{align*}
    \lw(y) = \sum_{i=1}^n \min\left\{ \sum_{j=1}^m b_{ij} y_{ij}, B_i \right\}
\end{align*}
\end{definition}
\begin{remark}
    Liquid welfare is traditionally defined as the amount of revenue that can be extracted from budget-constrained buyers with full knowledge of their values. If buyer $i$ was assumed to have value $v_{ij}$ for good $j$, this is given by 
    \begin{align*}
        \sum_{i=1}^n \min\left\{ \sum_{j=1}^m v_{ij} y_{ij}, B_i \right\} \,.
    \end{align*}
    However, since our model does not assume a valuation structure, we define $\lw(y)$ to capture the amount of revenue that can be extracted from budget-constrained buyers with full knowledge of their bids if no buyer could be charged more than her bid for any good. It reverts to the traditional definition when $b_{ij} = v_{ij}$.
\end{remark}

Let $y(\theta)$ denote the allocation when the buyers use the throttling parameters $\theta = (\theta_1, \dots, \theta_n)$, and let $\Theta^*$ be the set of all throttling equilibria. Price of Anarchy \citep{koutsoupias2009worst}, which we define next, is the ratio of the worst-case liquid welfare of throttling equilibria to the best-possible liquid welfare that can be attained by any allocation. It measures the worst-case loss in efficiency incurred due the strategic behavior of agents when compared to the optimal outcome that could be achieved by a central planner.

\begin{definition}
    The Price of Anarchy (PoA) of throttling equilibria for liquid welfare is given by
    \begin{align*}
        \text{PoA} = \frac{\sup_{y \in (\Delta^n)^m} LW(y)}{\inf_{\theta \in \Theta^*} \lw(y(\theta))}
    \end{align*}
\end{definition}

We begin by establishing an upper bound on the Price of Anarchy for both first-price and second-price auctions. Its proof critically leverages the no-unnecessary-throttling condition of throttling equilibria, and is inspired by the Price of Anarchy result of \citet{balseiro2021contextual} for pacing equilibria.

\begin{theorem}\label{thm:poa}
    For both first-price and second-price auctions, we have $\text{PoA} \leq 2$.
\end{theorem}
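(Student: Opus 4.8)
The plan is to show that the liquid welfare of any throttling equilibrium $\theta$ captures at least half of the optimal liquid welfare $\lw(y^*)$, where $y^*$ is the welfare-maximizing allocation. I would first fix an arbitrary throttling equilibrium $\theta$ and partition the buyers into two groups: the set $N = \{i : \theta_i = 1\}$ of buyers who do not throttle at all, and the complementary set $\bar N = \{i : \theta_i < 1\}$ of buyers who throttle. By the no-unnecessary-throttling condition of Definition~\ref{def:exact_throt_eq}, every buyer $i \in \bar N$ must be spending her entire budget, i.e.\ $\sum_j p(\theta)_{ij} = B_i$. Hence the revenue, and therefore the liquid welfare of the equilibrium allocation $y(\theta)$, is at least $\sum_{i \in \bar N} B_i$, since $\lw(y(\theta)) = \sum_i \min\{\sum_j b_{ij} y(\theta)_{ij}, B_i\} \geq \sum_{i \in \bar N} B_i$ (the payment of a buyer is at most her bid-weighted allocation in both auction formats).

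Next I would bound the contribution of buyers in $\bar N$ to the optimal allocation: $\sum_{i \in \bar N} \min\{\sum_j b_{ij} y^*_{ij}, B_i\} \leq \sum_{i \in \bar N} B_i \leq \lw(y(\theta))$. It remains to bound the contribution of buyers in $N$ to the optimum, namely $\sum_{i \in N} \min\{\sum_j b_{ij} y^*_{ij}, B_i\}$, by $\lw(y(\theta))$. The key observation here is that a buyer $i \in N$ participates with probability $1$, so on any good $j$ she wins it (and pays her bid in the first-price case, or at least the relevant second-price amount) whenever no higher bidder participates. The crux is a good-by-good comparison: for each good $j$, the expected payment $p_j(\theta)$ collected at equilibrium should dominate $\sum_{i \in N} b_{ij} y^*_{ij}$ up to the appropriate accounting, essentially because a buyer in $N$ who is allocated good $j$ in the optimum either is the highest bidder on $j$ among $N$ (in which case the equilibrium extracts payment comparable to $b_{ij}$ from whoever wins $j$, since all of $N$ participates with probability one), or she is outbid within $N$ by someone who, participating surely, already generates revenue on $j$.

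The main obstacle I expect is making this last good-by-good argument precise and uniform across both auction formats — in the second-price auction the winner pays the second-highest participating bid, not her own, so one has to argue carefully that the presence of the sure participants in $N$ forces the equilibrium price on good $j$ to be at least $\max_{i \in N, y^*_{ij}>0} (\text{second bid below } b_{ij})$ or handle the top bidder on $j$ separately. I would structure it as: let $j$ be a good and let $w(j)$ be the highest bidder in $N$ on $j$; since $\theta_{w(j)}=1$, buyer $w(j)$ wins $j$ unless a buyer outside $N$ with a higher bid participates, and in either case the realized payment on $j$ is at least $b_{w(j)\,j}$ times (probability $w(j)$ is not outbid) — but more robustly, one observes $p_j(\theta) \geq$ (something proportional to) the highest bid in $N$ on $j$, because even if a higher outside bidder sometimes wins, she pays at least $b_{w(j)\,j}$. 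Summing $\sum_j p_j(\theta) \geq \sum_j \max_{i\in N} b_{ij} \mathbf{1}[y^*_{ij}>0] \geq \sum_{i\in N}\sum_j b_{ij} y^*_{ij} \geq \sum_{i \in N}\min\{\sum_j b_{ij}y^*_{ij}, B_i\}$ then gives $\lw(y(\theta)) \geq \sum_{i\in N}\min\{\cdots\}$, and adding the two halves yields $\lw(y^*) \leq 2\,\lw(y(\theta))$, hence $\text{PoA}\leq 2$. The delicate point throughout is the interchange of "who participates" with "who is allocated in $y^*$," which I would handle by always charging the optimum's contribution from a buyer in $N$ against the equilibrium revenue on the \emph{same} good, using that sure participation of $N$ makes the equilibrium price on that good robustly large.
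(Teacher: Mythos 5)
Your decomposition into $N=\{i:\theta_i=1\}$ and $\bar N=\{i:\theta_i<1\}$, and the treatment of $\bar N$ (full budget spent, so $\lw(y(\theta))\ge\sum_{i\in\bar N}B_i$, which absorbs the optimum's contribution from $\bar N$) is sound, and your handling of the $N$-part is valid for \emph{first-price} auctions: there, since the highest $N$-bidder $w(j)$ on good $j$ participates surely, whoever wins $j$ pays her own bid, which is at least $b_{w(j)j}$, so $p_j(\theta)\ge\max_{i\in N}b_{ij}\mathbf{1}[y^*_{ij}>0]$ and the chain you write goes through. This is a genuinely different (and for first-price, more elementary) route than the paper, which proves a single buyer-by-buyer inequality (Lemma~\ref{lemma:poa}) covering both formats at once.

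However, for \emph{second-price} auctions the step you lean on is false, and the gap is not a technicality. If $w(j)\in N$ is herself the highest bidder on good $j$, she wins and pays only the second-highest participating bid, which can be far below (even zero compared to) $b_{w(j)j}$; your ``more robustly, $p_j(\theta)\ge$ the highest bid in $N$ on $j$'' only covers the realizations where an outside bidder outbids $w(j)$. The paper's own tight instance (Example~\ref{example:SP-poa-tight}) kills any attempt to charge the optimum's $N$-contribution purely against equilibrium \emph{revenue}: there all buyers have $\theta_i=1$, the equilibrium revenue is $m$, yet $\sum_{i\in N}\min\{\sum_j b_{ij}y^*_{ij},B_i\}=2m-1$, so the inequality $\sum_j p_j(\theta)\ge\sum_{i\in N}\min\{\sum_j b_{ij}y^*_{ij},B_i\}$ you need simply fails. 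What saves the theorem (and what the paper's proof does) is to credit, for a buyer $i$ with $\theta_i=1$, her \emph{own bid-weighted equilibrium allocation capped at her budget}, not just payments: pointwise, if $i$ does not win good $j$ then the price exceeds her bid, so $b_{ij}y^*_{ij}\le \E[b_{ij}y_{ij}(X)]+\E[p_j(X)]\,y^*_{ij}$; summing and then bounding total expected revenue by the equilibrium liquid welfare yields the factor $2$. Your argument needs this either/or split (``$i$ wins $j$ at equilibrium, charge her equilibrium value'' versus ``price on $j$ exceeds $b_{ij}$, charge revenue''); replacing the revenue-only charge with it would repair the second-price half.
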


Next, we show that the upper bound on the PoA established in Theorem~\ref{thm:poa} is tight for both first-price and second-price auctions. We do so by demonstrating particular instances for which the bound is tight, starting with the second-price auction format.

\begin{example}\label{example:SP-poa-tight}
    Consider a second-price auction with $m+1$ buyers and $m$ goods for some $m \in \mathbb{Z}_+$. Each of the first $m$ buyers bid $1$ for the $m$ goods respectively and have a budget of $\infty$, i.e., for $i \in [m]$, we have
    \begin{align*}
        b_{ij} =
        \begin{cases}
            1 &\text{if } i = j\\
            0 &\text{if } i \neq j
        \end{cases}\,,
    \end{align*}
    and $B_i = \infty$ (any $B > 2m$ would suffice). The last buyer has bid $b_{m+1,j} = m$ for each of the goods $j \in [m]$ and has a budget of $B_{m+1} = m + \epsilon$ for some small $\epsilon > 0$. In any throttling equilibrium $\theta \in \Theta$, we have $\theta_i = 1$ for all $i \in [m]$ because of the no-unnecessary-throttling condition. Since the sum of all the second-highest bids is $m$ and buyer $m+1$ has the highest bid for every good, she cannot possibly spend her entire budget of $B_{m+1} = m + \epsilon$ and we must also have $\theta_{m+1} = 1$ by the no-unnecessary throttling condition. Therefore, there is a unique throttling equilibrium $\theta$ such that $\theta_i = 1$ for all $i \in [m+1]$ and it has liquid welfare given by
    \begin{align*}
        \lw(y(\theta)) = \left( \sum_{i=1}^m \min\left\{ y_{ii}(\theta), B_i \right\}  \right) + \min\left\{ \sum_{j=1}^m m \cdot y_{m+1,j}(\theta), m + \epsilon \right\} = m+ \epsilon
    \end{align*}
    because $y_{m+1,j}(\theta) = 1$ for all $j \in [m]$. On the other hand, consider the allocation $y$ such that $y_{ii} = 1$ for all $i \in [m-1]$ and $y_{m+1, m} = 1$. It has liquid welfare given by
    \begin{align*}
        \lw(y) = \left( \sum_{i=1}^{m-1} \min\left\{ y_{ii}, B_i \right\}  \right) + \min\left\{ m \cdot y_{m+1,m}, m + \epsilon \right\} = m -1 + m = 2m-1 \,.
    \end{align*}
    Hence, the PoA is at least $(2m-1)/(m+\epsilon)$. As $m$ and $\epsilon$ were arbitrary, we can consider the limit when $m \to \infty$ and $\epsilon \to 0$, which yields the required lower bound of $\text{PoA} \geq 2$.
\end{example}

Observe that in the previous example none of the buyers were throttled ($\theta_i = 1$), which indicates that the lower bound is driven more by the second-price auction format than the specific budget management method, and applies to other methods like pacing. Next, we show that our bound is tight for first-price auctions.
%prove $\text{PoA} \leq 1/2$ for first-price auctions.

\begin{example}\label{example:FP-poa-tight}
    Consider a first-price auction with $m+1$ buyers and $m+1$ goods, for some $m \in \mathbb{Z}_+$. Each of the first $m$ buyers bid $1$ for the first $m$ goods respectively and bid $m$ on good $m+1$, and have a budget of $1$, i.e., for each $i \in [m]$, we have
    \begin{align*}
        b_{ij} =
        \begin{cases}
            1 &\text{if } j = i\\
            m &\text{if } j = m+1\\
            0 &\text{otherwise}
        \end{cases}\,,
    \end{align*}
    and $B_i = 1$. Moreover, buyer $m+1$ has value $b_{m+1, m+1} = m$ for the $(m+1)$-th good and $b_{m+1, j} = 0$ for all $j \in [m]$, with $B_{m+1} = \infty$.

    Consider a throttling equilibrium $\theta \in \Theta$. We begin by showing that $\theta_i < 1$ for all $i \in [m]$. For contradiction, suppose not. Let $i$ be the smallest index such that $\theta_i = 1$. Then, buyer $i$ spends 1 on good $i$ and spends $m \cdot \prod_{k=1}^{i-1} (1 - \theta_k) > 0$ on good $m+1$ (we use the lexicographic tie-breaking rule), which makes her total expenditure strictly greater than her budget of $B_i = 1$, thereby yielding the required contradiction. Hence, $\theta_i < 1$ for all buyers $i \in [m]$, and consequently, the no-unnecessary-throttling condition implies that their total expected expenditure is exactly 1, i.e., the following equivalent statements hold
    \begin{align}\label{eqn:SP-poa-inter-1}
        \theta_i \cdot 1 + \left( \prod_{k=1}^{i-1} (1 - \theta_k) \right) \cdot \theta_i \cdot m = 1 \quad \iff \quad \theta_i = \frac{1}{1 + \left( \prod_{k=1}^{i-1} (1 - \theta_k) \right) \cdot m} \,.
    \end{align}
    Moreover, since their expenditure is $B_i=1$, that is also their contribution towards the liquid welfare.
    Let $g(i) \coloneqq \prod_{k=1}^{i} (1 - \theta_k)$ denote the probability that the first $i$ buyers do not participate. Next, observe that $\theta_{m+1} =1$ because of the no-unnecessary-throttling condition and $B_{m+1} = \infty$. Therefore, due to the lexicographic tie-breaking rule, buyer $m$ wins good $m+1$ with probability $g(m) = \prod_{k=1}^{m} (1 - \theta_k)$. Hence, the liquid welfare of $\theta$ is given by
    \begin{align*}
        \lw(y(\theta)) = m \cdot 1 + g(m) \cdot m\,
    \end{align*}
    On the other hand, the allocation $y$ which awards good $i$ to buyer $i$ for all $i \in [m+1]$ has $\lw(y) = m + m = 2m$. Consequently, we have
    \begin{align*}
        \text{PoA} \geq \frac{2m}{m + g(m) \cdot m} = \frac{2}{1 + g(m)}\,.
    \end{align*}
    
    To show $\text{PoA} \geq 2$, it suffices to show $\lim_{m \to \infty} g(m) = 0$, which is what we do next. First, observe that \eqref{eqn:SP-poa-inter-1} implies the following recursion for $g(\cdot)$:
    \begin{align*}
        g(i) = (1 - \theta_i) g(i-1) = \frac{\left( \prod_{k=1}^{i-1} (1 - \theta_k) \right) \cdot m}{1 + \left( \prod_{k=1}^{i-1} (1 - \theta_k) \right) \cdot m} \cdot g(i-1) = \frac{g(i-1)^2 \cdot m}{1 + g(i-1) \cdot m}\,.
    \end{align*}
    We will inductively show that $g(i) \leq 1 - i/(m + \sqrt{m})$. Set $b = 1/(m+\sqrt{m})$ The base case $i = 1$ follows because $\theta_1 = 1/(1 + m)$ (see \eqref{eqn:SP-poa-inter-1}). Suppose  $g(i-1) \leq 1 - (i-1)/(m + \sqrt{m})$ for some $i \in [m]$. Then, we have
    \begin{align*}
        g(i) = \frac{g(i-1)^2 \cdot m}{1 + g(i-1) \cdot m} = \frac{m}{\frac{1}{g(i-1)^2} + \frac{m}{g(i-1)}} \leq \frac{m}{\frac{1}{(1 - b(i-1))^2} + \frac{m}{1 - b(i-1)}} = \frac{m \cdot (1 - bi + b)^2}{1 + m (1 - bi + b)} \,.
    \end{align*}
    To complete the induction, it suffices to show:
    \begin{align*}
        & \frac{m \cdot (1 - bi + b)^2}{1 + m (1 - bi + b)} \leq 1 - bi\\
        \iff & m(1 + b^2 i^2 + b^2 -2bi + 2b - 2b^2i) \leq 1 - bi + m(1 + b^2i^2 - 2bi + b - b^2i)\\
        \iff & 1 - bi + m(b^2i - b - b^2) \geq 0\\
        \iff & 1 - m \left( \frac{1}{m + \sqrt{m}} + \frac{1}{(m + \sqrt m)^2} \right) - i \left( \frac{1}{m+\sqrt m} - \frac{m}{(m + \sqrt m)^2} \right) \geq 0\\
        \iff & 1 - m \left( \frac{1}{m + \sqrt{m}} + \frac{1}{(m + \sqrt m)^2} \right) - i \left( \frac{\sqrt m}{(m + \sqrt m)^2} \right) \geq 0\\
    \end{align*}
    To see why the last inequality in the above equivalence chain holds, observe that:
    \begin{align*}
        &1 - m \left( \frac{1}{m + \sqrt{m}} + \frac{1}{(m + \sqrt m)^2} \right) - i \left( \frac{\sqrt m}{(m + \sqrt m)^2} \right)\\
        \geq & 1 - m \left( \frac{1}{m + \sqrt{m}} + \frac{1}{(m + \sqrt m)^2} \right) - m \left( \frac{\sqrt m}{(m + \sqrt m)^2} \right)\\
        = & \frac{(m + \sqrt m)^2 - m(m+\sqrt m) - m - m \sqrt m}{(m + \sqrt m)^2}\\
        = & \frac{m^2 + m + 2 m\sqrt m - m^2 - m \sqrt m - m - m\sqrt m}{(m + \sqrt m)^2}\\
        = & 0
    \end{align*}
    which completes the induction. Hence, $g(m) \leq 1 - m/(m + \sqrt m)$ and $\lim_{m \to \infty} g(m) = 0$, as required.
\end{example}

\section{Conclusion and Future Work}

We defined the notion of a throttling equilibrium and studied its properties for both first-price and second-price auctions. Through our analysis of computational and structural properties, we found that throttling equilibria in first-price auctions satisfy the desirable properties of uniqueness and polynomial-time computability. 
In contrast, we showed that for second-price auctions, equilibrium multiplicity may occur, and computing a throttling equilibrium is PPAD hard.
This disparity between the two auction formats is reinforced when we compare throttling and pacing: our results show that the properties of throttling equilibrium across the two formats have a striking similarity to the properties of first-price versus second-price pacing equilibrium. Finally, we also showed that the Price of Anarchy of throttling equilibria for liquid welfare is bounded above by 2 for both first-price and second-price auctions, and that this bound is tight for both auction formats. Altogether, this provides a comprehensive analysis of the equilibria which arise from the use of throttling as a method of budget management.
% The tractability of throttling in first-price auctions suggests a work-around for the intractability of throttling in second-price auctions: Instead of charging the winner the highest other bid \emph{from a participating buyer} that is smaller than her bid, if the winner is charged the highest other bid that is smaller than her bid (even if it is from a buyer who has been throttled out), then, for throttling, the second-price auction would behave like a first-price auction, inheriting its tractability in the process.

There are many interesting directions for future work, such as what happens when a combination of pacing and throttling-based buyers exist in the market, whether the combination of throttling and pacing behaves well for second-price auctions, whether second-price throttling equilibria can be computed efficiently under some natural assumptions on the bids, and whether the tractability of budget management in first-price auctions holds more generally beyond throttling and pacing.

% \section{Other stuff to think about}
% How does the revenue compare to pacing (especially for first price auctions)?

% Mixture of pacing and throttling can yield lower revenue than using either pacing or throttling

% Observation: If we decide the price deterministically to the be the highest bid less than a buyer's bid then the dynamics converge. Also, FPPE have higher revenue than SPPE, does that extend to SPTE and FPTE

% Stochastic setting where the bids are drawn iid show convergence of first price auction dynamics

% Is reporting truthfully optimal when nothing is known about the other bidders?

% (To do: simulations)

% Write the model with the stochastic interpretation to motivate throttling better.

% Comments about additions to the pacing paper:
% Approximating SPPE with FPPE?

{
\small
\bibliographystyle{plainnat}
\bibliography{refs}

\begin{thebibliography}{29}
\providecommand{\natexlab}[1]{#1}
\providecommand{\url}[1]{\texttt{#1}}
\expandafter\ifx\csname urlstyle\endcsname\relax
  \providecommand{\doi}[1]{doi: #1}\else
  \providecommand{\doi}{doi: \begingroup \urlstyle{rm}\Url}\fi

\bibitem[Abhishek and Hosanagar(2013)]{abhishek2013optimal}
Vibhanshu Abhishek and Kartik Hosanagar.
\newblock Optimal bidding in multi-item multislot sponsored search auctions.
\newblock \emph{Operations Research}, 61\penalty0 (4):\penalty0 855--873, 2013.

\bibitem[Agarwal et~al.(2014)Agarwal, Ghosh, Wei, and You]{agarwal2014budget}
Deepak Agarwal, Souvik Ghosh, Kai Wei, and Siyu You.
\newblock Budget pacing for targeted online advertisements at linkedin.
\newblock In \emph{Proceedings of the 20th ACM SIGKDD international conference
  on Knowledge discovery and data mining}, pages 1613--1619, 2014.

\bibitem[Azar et~al.(2017)Azar, Feldman, Gravin, and Roytman]{azar2017liquid}
Yossi Azar, Michal Feldman, Nick Gravin, and Alan Roytman.
\newblock Liquid price of anarchy.
\newblock In \emph{International Symposium on Algorithmic Game Theory}, pages
  3--15. Springer, 2017.

\bibitem[Balseiro and Gur()]{informsarticle}
Santiago Balseiro and Yonatan Gur.
\newblock Online advertising: Competing effectively when budget is limited.
\newblock
  \url{https://www.informs.org/Blogs/ManSci-Blogs/Management-Science-Review/Online-Advertising-Competing-Effectively-when-Budget-is-Limited}.
\newblock Accessed: 2021-06-30.

\bibitem[Balseiro et~al.(2017)Balseiro, Kim, Mahdian, and
  Mirrokni]{balseiro2017budget}
Santiago Balseiro, Anthony Kim, Mohammad Mahdian, and Vahab Mirrokni.
\newblock Budget management strategies in repeated auctions.
\newblock In \emph{Proceedings of the 26th International Conference on World
  Wide Web}, pages 15--23, 2017.

\bibitem[Balseiro et~al.(2021)Balseiro, Kim, Mahdian, and
  Mirrokni]{balseiro2021budget}
Santiago Balseiro, Anthony Kim, Mohammad Mahdian, and Vahab Mirrokni.
\newblock Budget-management strategies in repeated auctions.
\newblock \emph{Operations Research}, 2021.

\bibitem[Balseiro and Gur(2019)]{balseiro2019learning}
Santiago~R Balseiro and Yonatan Gur.
\newblock Learning in repeated auctions with budgets: Regret minimization and
  equilibrium.
\newblock \emph{Management Science}, 65\penalty0 (9):\penalty0 3952--3968,
  2019.

\bibitem[Balseiro et~al.(2015)Balseiro, Besbes, and
  Weintraub]{balseiro2015repeated}
Santiago~R Balseiro, Omar Besbes, and Gabriel~Y Weintraub.
\newblock Repeated auctions with budgets in ad exchanges: Approximations and
  design.
\newblock \emph{Management Science}, 61\penalty0 (4):\penalty0 864--884, 2015.

\bibitem[Balseiro et~al.(2022)Balseiro, Kroer, and
  Kumar]{balseiro2021contextual}
Santiago~R. Balseiro, Christian Kroer, and Rachitesh Kumar.
\newblock Contextual standard auctions with budgets: Revenue equivalence and
  efficiency guarantees.
\newblock In David~M. Pennock, Ilya Segal, and Sven Seuken, editors, \emph{{EC}
  '22: The 23rd {ACM} Conference on Economics and Computation, Boulder, CO,
  USA, July 11 - 15, 2022}, page 476. {ACM}, 2022.
\newblock \doi{10.1145/3490486.3538258}.
\newblock URL \url{https://doi.org/10.1145/3490486.3538258}.

\bibitem[Borgs et~al.(2007)Borgs, Chayes, Immorlica, Jain, Etesami, and
  Mahdian]{borgs2007dynamics}
Christian Borgs, Jennifer Chayes, Nicole Immorlica, Kamal Jain, Omid Etesami,
  and Mohammad Mahdian.
\newblock Dynamics of bid optimization in online advertisement auctions.
\newblock In \emph{Proceedings of the 16th international conference on World
  Wide Web}, pages 531--540, 2007.

\bibitem[Charles et~al.(2013)Charles, Chakrabarty, Chickering, Devanur, and
  Wang]{charles2013budget}
Denis Charles, Deeparnab Chakrabarty, Max Chickering, Nikhil~R Devanur, and Lei
  Wang.
\newblock Budget smoothing for internet ad auctions: a game theoretic approach.
\newblock In \emph{Proceedings of the fourteenth ACM conference on Electronic
  commerce}, pages 163--180, 2013.

\bibitem[Chen and Teng(2009)]{chen2009spending}
Xi~Chen and Shang-Hua Teng.
\newblock Spending is not easier than trading: on the computational equivalence
  of fisher and arrow-debreu equilibria.
\newblock In \emph{International Symposium on Algorithms and Computation},
  pages 647--656. Springer, 2009.

\bibitem[Chen et~al.(2009)Chen, Deng, and Teng]{chen2009settling}
Xi~Chen, Xiaotie Deng, and Shang-Hua Teng.
\newblock Settling the complexity of computing two-player nash equilibria.
\newblock \emph{Journal of the ACM (JACM)}, 56\penalty0 (3):\penalty0 1--57,
  2009.

\bibitem[Chen et~al.(2021)Chen, Kroer, and Kumar]{chen2021complexity}
Xi~Chen, Christian Kroer, and Rachitesh Kumar.
\newblock The complexity of pacing for second-price auctions.
\newblock In \emph{Proceedings of the 22nd ACM Conference on Economics and
  Computation}, 2021.

\bibitem[Conitzer et~al.(2018)Conitzer, Kroer, Sodomka, and
  Stier-Moses]{conitzer2018multiplicative}
Vincent Conitzer, Christian Kroer, Eric Sodomka, and Nicol{\'a}s~E Stier-Moses.
\newblock Multiplicative pacing equilibria in auction markets.
\newblock In \emph{International Conference on Web and Internet Economics},
  2018.

\bibitem[Conitzer et~al.(2019)Conitzer, Kroer, Panigrahi, Schrijvers, Sodomka,
  Stier-Moses, and Wilkens]{conitzer2019pacing}
Vincent Conitzer, Christian Kroer, Debmalya Panigrahi, Okke Schrijvers, Eric
  Sodomka, Nicolas~E Stier-Moses, and Chris Wilkens.
\newblock Pacing equilibrium in first-price auction markets.
\newblock In \emph{Proceedings of the 2019 ACM Conference on Economics and
  Computation}. ACM, 2019.

\bibitem[Daskalakis et~al.(2009)Daskalakis, Goldberg, and
  Papadimitriou]{daskalakis2009complexity}
Constantinos Daskalakis, Paul~W Goldberg, and Christos~H Papadimitriou.
\newblock The complexity of computing a {Nash} equilibrium.
\newblock \emph{SIAM Journal on Computing}, 39\penalty0 (1):\penalty0 195--259,
  2009.

\bibitem[Dobzinski and Leme(2014)]{dobzinski2014efficiency}
Shahar Dobzinski and Renato~Paes Leme.
\newblock Efficiency guarantees in auctions with budgets.
\newblock In \emph{International Colloquium on Automata, Languages, and
  Programming}, pages 392--404. Springer, 2014.

\bibitem[Facebook-Guide()]{facebookguide}
Facebook-Guide.
\newblock Your guide to facebook bid strategy.
\newblock
  \url{https://www.facebook.com/gms_hub/share/biddingstrategyguide_final.pdf}.
\newblock Accessed: 2021-06-30.

\bibitem[Gaitonde et~al.(2022)Gaitonde, Li, Light, Lucier, and
  Slivkins]{gaitonde2022budget}
Jason Gaitonde, Yingkai Li, Bar Light, Brendan Lucier, and Aleksandrs Slivkins.
\newblock Budget pacing in repeated auctions: Regret and efficiency without
  convergence.
\newblock \emph{arXiv preprint arXiv:2205.08674}, 2022.

\bibitem[Gummadi et~al.(2012)Gummadi, Key, and Proutiere]{gummadi2012repeated}
Ramakrishna Gummadi, Peter Key, and Alexandre Proutiere.
\newblock Repeated auctions under budget constraints: Optimal bidding
  strategies and equilibria.
\newblock In \emph{the Eighth Ad Auction Workshop}, 2012.

\bibitem[Karande et~al.(2013)Karande, Mehta, and
  Srikant]{karande2013optimizing}
Chinmay Karande, Aranyak Mehta, and Ramakrishnan Srikant.
\newblock Optimizing budget constrained spend in search advertising.
\newblock In \emph{Proceedings of the sixth ACM international conference on Web
  search and data mining}, pages 697--706, 2013.

\bibitem[Koutsoupias and Papadimitriou(2009)]{koutsoupias2009worst}
Elias Koutsoupias and Christos Papadimitriou.
\newblock Worst-case equilibria.
\newblock \emph{Computer science review}, 3\penalty0 (2):\penalty0 65--69,
  2009.

\bibitem[Mehta(2013)]{mehta2013online}
Aranyak Mehta.
\newblock Online matching and ad allocation.
\newblock 2013.

\bibitem[Mehta et~al.(2007)Mehta, Saberi, Vazirani, and
  Vazirani]{mehta2007adwords}
Aranyak Mehta, Amin Saberi, Umesh Vazirani, and Vijay Vazirani.
\newblock Adwords and generalized online matching.
\newblock \emph{Journal of the ACM (JACM)}, 54\penalty0 (5):\penalty0 22--es,
  2007.

\bibitem[Papadimitriou and Peng(2021)]{PB2021}
Christos Papadimitriou and Binghui Peng.
\newblock Public goods games in directed networks.
\newblock In \emph{Proceedings of the 22nd ACM Conference on Electronic
  Commerce}, 2021.

\bibitem[Papadimitriou(1994)]{papadimitriou1994complexity}
Christos~H Papadimitriou.
\newblock On the complexity of the parity argument and other inefficient proofs
  of existence.
\newblock \emph{Journal of Computer and system Sciences}, 48\penalty0
  (3):\penalty0 498--532, 1994.

\bibitem[Vazirani and Yannakakis(2011)]{vazirani2011market}
Vijay~V Vazirani and Mihalis Yannakakis.
\newblock Market equilibrium under separable, piecewise-linear, concave
  utilities.
\newblock \emph{Journal of the ACM (JACM)}, 58\penalty0 (3):\penalty0 1--25,
  2011.

\bibitem[Xu et~al.(2015)Xu, Lee, Li, Qi, and Lu]{xu2015smart}
Jian Xu, Kuang-chih Lee, Wentong Li, Hang Qi, and Quan Lu.
\newblock Smart pacing for effective online ad campaign optimization.
\newblock In \emph{Proceedings of the 21th ACM SIGKDD International Conference
  on Knowledge Discovery and Data Mining}, pages 2217--2226, 2015.

\end{thebibliography}
}

\newpage

\begin{center}
    \LARGE 
    Electronic Companion:\\
    Throttling Equilibria in Auction Markets\\
    \large
    
    \author{Xi Chen, Christian Kroer, Rachitesh Kumar}
\end{center}

\appendix

\section{Appendix: Examples of Irrational Throttling Equilibria}\label{appendix:irrationality}

\paragraph{First-Price Auctions:} First, we give an example for which the unique \emph{first-price} throttling equilibrium is irrational.

\begin{example}\label{example:first-price-irrational}
	Define a throttling game as follows:
	 There are 2 goods and 2 buyers, i.e., $m = 2$ and $n = 2$;
	 $b_{11} = b_{12} =  2$ and $b_{21} = 1, b_{22} = 3$;
    $B_1 = 2$ and $B_2 = 1$.
	%\end{itemize}
	Suppose, in equilibrium, the buyers use the throttling parameters $\theta_1$ and $\theta_2$. Then the payment of buyer 1 and buyer 2 are given by $2\theta_1 + 2(1 - \theta_2)\theta_1$ and $3\theta_2 + (1 -  \theta_1) \theta_2$ respectively. Therefore, for this game, in any throttling equilibrium, we have $0 < \theta_1, \theta_2 < 1$ and $\theta_3 = 1$, which implies
	  $  2\theta_1 + 2(1 - \theta_2)\theta_1 = 2$ and $3\theta_2 + (1 -  \theta_1) \theta_2 = 1
	$.
	Substituting $\theta_1 = 1/(2 - \theta_2)$ from the first equation into the second   yields
	\begin{align*}
 	   3\theta_2 + \theta_2 \cdot \frac{1 - \theta_2}{2 - \theta_2} = 1 
	\end{align*}
	which implies $4 \theta_2^2 - 7 \theta_2 + 1 = 0$. As $\theta_2 < 1$, Solving the quadratic gives $\theta_2 = (7 - \sqrt{33})/8$.
\end{example}

\paragraph{Second-Price Auctions:} Next, we give an example for which all \emph{second-price} throttling equilibria are irrational.

\begin{example}\label{example:irrational_eq}
	Define a throttling game as follows:
	\begin{itemize}
    	\item There are 4 goods and 3 buyers, i.e., $m = 4$ and $n = 3$
    	\item $b_{11} = b_{12} = 2$, $b_{14} = 1$, $b_{23} = b_{24} = 4$, $b_{22} = 1$, $b_{31} = 1$ and $b_{33} = 2$
    	\item $B_1 = B_2 = 1$ and $B_3 = \infty$
	\end{itemize}
	For this game, in any throttling equilibrium, we have $0 < \theta_1, \theta_2 < 1$ and $\theta_3 = 1$. Hence, if $\theta$ is a throttling equilibrium, then it satisfies
	$	    \theta_1 + \theta_1 \theta_2 = 1$ and $2\theta_2 + \theta_2 \theta_1 = 1 $.
	Substituting $\theta_1 = 1/(1 + \theta_2)$ from the first equation into the second equation yields
	\begin{align*}
 	   2\theta_2 + \theta_2 \cdot \frac{1}{1 + \theta_2} = 1 
	\end{align*}
	which further implies $2 \theta_2^2 + 2 \theta_2 - 1 = 0$. As $\theta_2 > 0$, solving the quadratic gives $\theta_2 = (\sqrt{3} - 1)/2$.
\end{example}
\section{Appendix: Missing Proofs}

\subsection{Proof of Theorem~\ref{thm:second-price-PPAD-membership}}\label{appendix:PPAD-membership}

	Consider a throttling game $\left(n, m, (b_{ij}), (B_i)\right)$ and an approximation parameter $\delta \in (0,1/2)$. Define $f: [0,1]^n \to [0,1]^n$ as
		\begin{align*}
    		f_i(\theta) = \min\left\{ \frac{(1 - \delta/2)B_i}{\sum_j p(1, \theta_{-i})_{ij}}, 1 \right\} = \min\left\{ \frac{(1 - \delta/2)B_i}{\max\{\sum_j p(1, \theta_{-i})_{ij}, B_i/2\} }, 1 \right\} \quad \forall \theta \in [0,1]^n 
		\end{align*}
	
	First, we prove that $f$ is $L$-Lipschitz continuous with Lipschitz constant $L = 2mn \overline{B} \underline{B}^{-2} \overline{b}$, where $\overline{b} = \max_{i,j} b_{ij}$, $\overline{B} = \max_{i} B_i$. To achieve this, we will repeatedly use the following facts about Lipschitz functions. For Lipschitz continuous functions $f$ and $g$ with Lipschitz constants $L_1$ and $L_2$ respectively,
	\begin{itemize}
		\item $f + g$ is $L_1 + L_2$-Lipschitz continuous
		\item If $f$ and $g$ are bounded above by $M$, then $f \cdot g$ is $M(L_1 + L_2)$-Lipschitz continuous
		\item If $f$ is bounded below by $c$, then $1/f$ is $L_1/c^2$-Lipschitz continuous
		\item For a constant $C$, $\max\{f, C\}$ and $\min\{f,C\}$ are both $L_1$-Lipschitz continuous
	\end{itemize}
	Observe that
	\begin{align*}
	    p(1, \theta_{-i})_{ij} =\sum_{\ell: b_{\ell j} < b_{ij}}  b_{\ell j} \theta_\ell  \prod_{k \neq i: b_{kj} > b_{\ell j}} (1 - \theta_k)
	\end{align*}
	Therefore, for all $i \in [n]$, $\theta \mapsto p(1, \theta_{-i})_{ij}$ is $(2n\overline{b})$-Lipschitz continuous, which further implies that $\theta \mapsto \sum_{j} p(1, \theta_{-i})_{ij}$ is $2mn\overline{b}$-Lipschitz continuous. Finally, due to the second equality in the definition of $f$, we get that $f$ is $(2mn \overline{B} \underline{B}^{-2} \overline{b})$-Lipschitz continuous.
	
	Since BROUWER is in PPAD~\citep{chen2009settling}, to complete the proof, it suffices to show that a $(\delta \underline{B}/4m \overline{b})$-approximate fixed point $\theta^*$ of $f$, i.e, $\theta^*$ such that $\|f(\theta^*) - \theta^*\|_\infty \leq \delta \underline{B}/4m \overline{b}$, is a $\delta$-approximate throttling equilibrium. First, note that $p(1, \theta_{-i})_{ij} \leq \overline{b}$ for all $i \in [n], j \in [m]$. Therefore, $f(\theta)_i \geq \underline{B}/2m \overline{b}$ for all $i \in [n]$. Hence, for $i \in [n]$, we have
	\begin{align*}
		\biggr\lvert 1 - \frac{\theta_i^*}{f_i(\theta^*)} \biggr\rvert \leq 	\frac{\delta \underline{B}}{f_i(\theta^*) \cdot 4m \overline{b}} \leq \frac{\delta}{2}
	\end{align*}
	As a consequence, we get $\theta_i^* \leq (1 + \delta/2) f_i(\theta^*)$ and $ \theta^* \geq (1- \delta/2) f_i(\theta^*)$. The first inequality implies  which in turn implies
	\begin{align*}
		\sum_j p(\theta^*)_{ij} = \theta^*_i \cdot \sum_j p(1, \theta^*)_{ij} \leq (1 + \delta/2)(1 - \delta/2)B_i \leq B_i
	\end{align*}
	 and the second one implies that if $\theta^*_i < 1 - \delta/2$, then
	 \begin{align*}
	 	\sum_j p(\theta^*)_{ij} = \theta^*_i \cdot \sum_j p(1, \theta^*)_{ij} \geq (1 - \delta/2)^2 B_i \geq (1 - \delta)B_i
	 \end{align*}
	 Hence, $\theta^*$ is a $\delta$-approximate throttling equilibrium, thereby completing the proof.\qed

\subsection{Proof of Theorem~\ref{thm:revenue_np_hard}}\label{appendix:NP-hard}

Consider an instance of 3-SAT with variables $\{x_1, \dots, x_n\}$ and clauses $\{C_1, \dots, C_m\}$. Our goal is to define an instance $\mathcal{I}$ of REV (a throttling game $G$ and a  target revenue $R$) which always has the same solution (Yes or No) as the 3-SAT instance, and has a size of the order $\text{poly}(n,m)$. We do so next, starting with an informal description to build intuition. To better understand the core motivations behind the gadgets, we will restrict our attention to exact throttling equilibria ($\delta = 0$) in the informal discussion that follows. As we will see in the formal proof, the target revenue $R$ can be chosen carefully to ensure that only exact throttling equilibria can achieve the revenue $R$. 

\textbf{Reciprocal Gadget:} Fix $i \in [n]$. Corresponding to variable $x_i$, there are two goods $\mathbb{A}_i$ and $\mathbb{B}_i$, and two buyers $V_i^+$ and $V_i^-$ in the throttling game $G$. Each buyer bids 1 for one of the goods and bids 2 for the other, with both buyers bidding differently on each good. Furthermore, we set the budgets of both buyers to be 1/2, and ensure that they do not spend any non-zero amount on goods other than $\A_i$ and $\B_i$. In equilibrium, this forces the throttling parameter of $V_i^+$ (which we denote by $\theta_i^+$) to be half of the reciprocal of the throttling parameter of $V_i^-$ (which we denote by $\theta_i^-$) and vice-versa. As a consequence, both throttling parameters lie in the interval $[1/2,1]$.

\textbf{Binary Gadget:} For each variable $x_i$, there are two additional goods $\S_i$ and $\T_i$, which receive a bid of 1 from buyers $V_i^+$ and $V_i^-$ respectively. The throttling game $G$ also has one unbounded buyer $U$ who has an infinite budget, and bids 2 on both goods $\S_i$ and $\T_i$. By the definition of throttling equilibria (Definition~\ref{def:exact_throt_eq}), the throttling parameter of $U$ is always 1 in equilibrium. Therefore, buyer $U$ wins both $\S_i$ and $\T_i$ with probability one, and pays $\theta_i^+ + \theta_i^-  = \theta_i^+ + 1/2\theta_i^+$ for it. Finally, observe that $t \mapsto t + 1/2t$, when restricted to $t \in [1/2,1]$, is maximized at $t = 1$ or $t = 1/2$. Therefore, by appropriately choosing the target revenue $R$, we can ensure that revenue $R$ is only achieved by throttling equilibria in which exactly one of the following holds: $(\theta_i^+ = 1, \theta_i^- = 1/2)$ or $(\theta_i^+ = 1/2, \theta_i^- = 1)$. This allows us to interpret $\theta_i^+ = 1$ as setting $x_i = 1$ and $\theta_i^- = 1$ as setting $x_i = 0$.

\textbf{Clause Gadget:} For each clause $C_j$, there is a good $\C_j$. If $C_j$ contains a non-negated literal $x_i$, then buyer $V_i^+$ bids 1 on good $\C_j$, and if it contains a negated literal $\neg x_i$, then buyer $V_i^-$ bids 1 on good $\C_j$. Furthermore, the unbounded buyer $U$ bids 2 on good $\C_j$, thereby always winning it. Hence, the total payment on good $\C_j$ is 1 if some literal is satisfied (corresponding throttling parameter is 1), and is 1/2 if no literal is satisfied (corresponding throttling parameters are 1/2). The rest of the reduction boils down to choosing $R$ appropriately.

\vspace{1em}

\begin{proof}[Proof of Theorem~\ref{thm:revenue_np_hard}]
Guided by the informal intuition described above, we proceed with the formal definition of the instance $\mathcal{I}$, which involves specifying the throttling game $G$ and the target revenue $R$. The throttling game $G$ consists of the following goods:
\begin{itemize}
    \item \textbf{Reciprocal Gadget:} For each variable $x_i$, there are two goods $\mathbb{A}_i$ and $\mathbb{B}_i$.
    \item \textbf{Binary Gadget:} For each variable $x_i$, there are two binary goods $\S_i$ and $\T_i$.
    \item \textbf{Clause Gadget:} For each clause $C_j$, there is a good $\mathbb{C}_j$.
\end{itemize}
Moreover, $G$ has the following set of buyers:
\begin{itemize}
    \item Corresponding to each variable $x_i$, there are two buyers $V_i^+$ and $V_i^-$ with non-zero bids only for the following goods:
    \begin{itemize}
        \item $b(V_i^+, \A_i) = 2$ and $b(V_i^+, \B_i) = 1$
        \item $b(V_i^-, \A_i) = 1$ and $b(V_i^-, \B_i) = 2$
        \item $b(V_i^+, \S_i) = 1$
        \item $b(V_i^-, \T_i) = 1$
        \item $b(V_i^+, \C_j) = 1$ if $x_i$ is a literal in $C_j$
        \item $b(V_i^-, \C_j) = 1$ if $\neg x_i$ is a literal in $C_j$
    \end{itemize}
    Moreover, the budget of both $V_i^+$ and $V_i^-$ is $1/2$ for all $i \in [n]$.
    \item There is one unbounded buyer $U$ with $b(U, \C_j) = 2$ for all $j \in [m]$ and $b(U, \S_i) = b(U, \T_i) = 2$ for all $i \in [n]$. Moreover, $U$ has a budget of $\infty$.
\end{itemize}

Set the target revenue to be $R = n + m + (3n/2)$. Suppose there exists a $\delta$-approximate throttling equilibrium $\Theta$, for some $\delta \in [0,1)$, with revenue greater than or equal to $R$. Let $\theta_i^+$ and $\theta_i^-$ denote the throttling parameters of $V_i^+$ and $V_i^-$ in $\Theta$. Then, $\theta_i^+ \theta_i^- \leq 1/2$ by virtue of the budget constraints. Therefore, the revenue from goods $\{\A_i\}_{i=1}^n \cup \{\B_i\}_{i=1}^n$ is at most $n$. Furthermore, it is easy to see that the revenue from goods $\{\C_j\}_{j=1}^m$ is at most $m$. Additionally, the total payment by buyer $U$ on goods $\S_i$ and $\T_i$ is at most $\theta_i^+ + \theta_i^- \leq \theta_i^+ + (1/2\theta_i^+)$. Note that $\theta_i^+ + (1/2\theta_i^+)$ is maximized at $\theta_i^+ = 1/2$ or $\theta_i^+ = 1$, with a value of $\theta_i^+ + (1/2\theta_i^+) = 3/2$. Therefore, the revenue from goods $\{\S_i\}_{i=1}^n \cup \{\T_i\}_{i=1}^n$ is at most $3n/2$. Hence, the total payment made on all the goods is at most $R$. 

For the total revenue under $\Theta$ to be greater than or equal to $R$, the revenue from $\{\S_i\}_{i=1}^n \cup \{\T_i\}_{i=1}^n$ must be at least $3n/2$ and the revenue from $\{\C_j\}_{j=1}^m$ must be at least $m$. Hence, under $\Theta$, buyer $U$ has a throttling parameter of $1$, and for each $i \in [n]$, either $(\theta_i^+ = 1, \theta_i^- = 1/2)$ or $(\theta_i^+ = 1/2, \theta_i^- = 1)$. Furthermore, the payment made by buyer $U$ on $\C_j$ is 1 for every $j \in [m]$. This allows us to assign values to the variables as follows: set $x_i = 1$ if $\theta_i^+ = 1$ and $x_i = 0$ if $\theta_i^- = 1$. With this assignment of the variables, each clause is satisfied since the payment made by buyer $U$ on $\C_j$ is 1 for all $j \in [m]$. Hence, we have shown that if there exists a $\delta$-approximate throttling equilibrium with revenue $R$ or greater, then there exists a satisfying assignment for the 3-SAT instance.

Conversely, note that if there exists a satisfying assignment for the 3-SAT instance, then setting $\theta_i^+ = 1$, $\theta_i^- = 1/2$ if $x_i = 1$ and $\theta_i^+ = 1/2$, $\theta_i^- = 1$ if $x_i = 0$ yields a throttling equilibrium with revenue equal to $R$. To complete the proof, observe that the size of the instance $|\I| = \text{poly}(n,m)$.
\end{proof}
\subsection{Proof of Theorem~\ref{thm:two-buyer}}\label{appendix:two-buyer}

In this appendix, we analyze the correctness and runtime of Algorithm~\ref{alg:two_buyer}. To do so, we will make repeated use of the following crucial observation:
\begin{align} \label{two_buyer_payment}
    p(\theta)_{ij} =
    \begin{cases}
        \theta_i \theta_k b_{kj} &\text{if } b_{ij} > b_{kj} > 0\ \text{for some } k \in [n]\\
        0 &\text{otherwise}
    \end{cases}
\end{align}
In particular, this observation implies that $p(1, \theta_i)$ is a linear function of $\theta$.

The following lemma makes a step towards the proof of correctness of the algorithm by showing that the budget constraints are always satisfied.

\begin{lemma}\label{lemma:alg_budget}
    At the start of each iteration of the while loop, we have $\sum_j p(\theta)_{ij} \leq B_i$ for all $i \in [n]$.
\end{lemma}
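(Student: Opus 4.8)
The plan is to follow the template of the proof of Lemma~\ref{lemma:no_budget_violation} and argue by induction on the number of completed iterations of the while loop, the new wrinkle being that each iteration now contains both an upward update (step~1) and a downward correction (step~2). Throughout I would use the closed-form payment expression \eqref{two_buyer_payment}: writing $k(j)$ for the unique other bidder on good $j$, buyer $i$'s total payment is $\sum_j p(\theta)_{ij} = \theta_i \sum_{j:\, b_{ij}>b_{k(j)j}>0} \theta_{k(j)} b_{k(j)j}$, a quantity that is coordinatewise nondecreasing in $\theta$ and obeys $p(\theta)_{ij}\le \theta_i b_{ij}$. This monotonicity --- which is precisely the property that fails when a good receives three or more bids --- is the engine of the whole argument.

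The base case is the initialization: since $p(\theta)_{ij}\le \theta_i b_{ij}$ we get $\sum_j p(\theta)_{ij}\le \theta_i\sum_j b_{ij}\le B_i/2\le B_i$ from the choice $\theta_i=\min\{B_i/(2\sum_j b_{ij}),1\}$. For the inductive step, fix an iteration, let $\theta$ be the throttling vector at its start (so $\sum_j p(\theta)_{ij}\le B_i$ for all $i$ by hypothesis), let $\theta'$ be the vector after step~1, and let $\theta''$ be the vector after step~2; the goal is $\sum_j p(\theta'')_{ij}\le B_i$ for all $i$. Step~1 scales each coordinate by $1$ or by $1/(1-\gamma)$. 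If buyer $i$ is updated in step~1 then its pre-update payment was below $(1-\gamma)^2B_i$, and both $\theta_i$ and every relevant $\theta_{k(j)}$ grow by a factor at most $1/(1-\gamma)$, so $\sum_j p(\theta')_{ij} < (1-\gamma)^{-2}(1-\gamma)^2 B_i = B_i$; in particular such a buyer is \emph{not} over budget after step~1. If buyer $i$ is not updated in step~1 then $\theta'_i=\theta_i$ while the opponents' parameters grow by at most $1/(1-\gamma)$, giving $\sum_j p(\theta')_{ij}\le (1-\gamma)^{-1} B_i$.

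Step~2 then scales each coordinate by $1$ or by $(1-\gamma)$, scaling down exactly those buyers $i$ with $\sum_j p(\theta')_{ij}>B_i$ (payments here evaluated at $\theta'$, since step~2 acts on the already-updated vector). If $i$ is not scaled in step~2 then $\sum_j p(\theta')_{ij}\le B_i$, and since step~2 only weakly decreases the opponents' parameters, monotonicity gives $\sum_j p(\theta'')_{ij}\le \sum_j p(\theta')_{ij}\le B_i$. If $i$ is scaled in step~2 then, by the previous paragraph, $i$ was not updated in step~1, so $\sum_j p(\theta')_{ij}\le (1-\gamma)^{-1}B_i$; multiplying $\theta'_i$ by $(1-\gamma)$ and weakly decreasing the opponents' parameters yields $\sum_j p(\theta'')_{ij}\le (1-\gamma)(1-\gamma)^{-1}B_i = B_i$. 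This exhausts all cases and closes the induction.

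I expect the main obstacle to be controlling the interaction between the increasing step~1 and the decreasing step~2: a priori a buyer might be both flagged for an increase by step~1 and left over budget after step~1, so that step~2's correction of that buyer, combined with step~2's effect on her opponents, could threaten feasibility elsewhere. The resolution, isolated above, is that the step~1 guard $\sum_j p(\theta)_{ij}<(1-\gamma)^2 B_i$ carries exactly enough slack: even after $i$'s own parameter and all her relevant opponents' parameters are inflated by $1/(1-\gamma)$, buyer $i$ remains strictly below $B_i$. Hence the set of buyers updated in step~1 is disjoint from the set corrected in step~2, the two updates never fight each other, and everything else is routine factor-chasing powered by the coordinatewise monotonicity of $\sum_j p(\cdot)_{ij}$.
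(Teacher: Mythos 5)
Your proof is correct and follows essentially the same route as the paper's: induction over iterations, using the two-bidder payment formula to bound the growth factor in step~1 by $1/(1-\gamma)^2$ (resp.\ $1/(1-\gamma)$ for buyers not updated in step~1), observing that any buyer over budget after step~1 must not have been updated in step~1, and letting the $(1-\gamma)$ downscale of step~2 restore feasibility. The only difference is presentational — you derive the disjointness of the step-1-updated and step-2-corrected sets directly, while the paper argues the contrapositive — so no further comment is needed.
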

\begin{proof}
    We will use induction on the number of iterations of the while loop to prove this lemma. By our choice of initialization of $\theta$, the budget constraints are satisfied before the first iteration of the while loop. Suppose the constraints are satisfied before the start of the $t$-th iteration and the value of $\theta$ at that stage is $\theta^{(0)}$. We will use $\theta^{(1)}$ and $\theta^{(2)}$ to the denote the value of $\theta$ after step 1 and step 2 of the $t$-th iteration respectively. Consider a buyer $i$ such that $\sum_j p(\theta^{(1)})_{ij} > B_i$. By equation~\ref{two_buyer_payment}, we get $$B_i < \sum_j p(\theta^{(1)})_{ij} \leq \left(\sum_j p(\theta^{(0)})_{ij}\right)/(1 - \gamma)^2$$ which further implies $\sum_j p(\theta^{(0)})_{ij} > (1 - \gamma)^2 B_i$. Therefore, the throttling parameter of buyer $i$ was not changed in step 1 of the $t$-th iteration, i.e., $\theta_i^{(0)} = \theta_i^{(1)}$. As a consequence, we get
    \begin{align*}
        \sum_j p(\theta^{(1)})_{ij} \leq \left(\sum_j p(\theta^{(0)})_{ij}\right)/(1 - \gamma)
    \end{align*}
    After step 2 of the $t$-th iteration, we get $\theta^{(2)} = (1 - \gamma) \theta^{(1)}$. Hence,
    \begin{align*}
        \sum_j p(\theta^{(2)})_{ij} \leq (1 - \gamma)\sum_j p(\theta^{(1)})_{ij} \leq \left(\sum_j p(\theta^{(0)})_{ij}\right) \leq B_i
    \end{align*}
    where the last inequality follows from our inductive hypothesis. As $\theta^{(2)}$ is the value of $\theta$ after the $t$-th iteration, the lemma follows by induction.
\end{proof}

The next lemma establishes that the algorithm never loses any progress, i.e., any buyer who satisfies the `Not too much unnecessary throttling condition' of Definition~\ref{def:throt_eq} at the beginning of some iteration of the while loop continues to do so at the end of it.

\begin{lemma}\label{lemma:alg_progress}
    If $\sum_j p(\theta)_{ij} \geq (1 - \gamma)^3 B_i$ or $\theta_i \geq 1 - \gamma$ at the start of some iteration of the while loop, then $\sum_j p(\theta)_{ij} \geq (1 - \gamma)^3 B_i$ or $\theta_i \geq 1 - \gamma$ at the end of that iteration.
\end{lemma}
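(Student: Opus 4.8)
The plan is to track the state of a single buyer $i$ through the two update steps of one iteration. Write $\theta^{(0)}$ for the throttling vector at the start of the iteration, $\theta^{(1)}$ for its value after step~1, and $\theta^{(2)}$ for its value at the end (after step~2); observe that $\theta^{(1)}_k\ge\theta^{(0)}_k$ for every $k$, with $\theta^{(1)}_i\in\{\theta^{(0)}_i,\,\theta^{(0)}_i/(1-\gamma)\}$, while $\theta^{(2)}_k\ge(1-\gamma)\theta^{(1)}_k$ for every $k$, with $\theta^{(2)}_i\in\{\theta^{(1)}_i,\,(1-\gamma)\theta^{(1)}_i\}$. The only structural input needed is \eqref{two_buyer_payment}: it gives $\sum_j p(\theta)_{ij}=\theta_i\cdot\sum_j p(1,\theta_{-i})_{ij}$, where $\theta_{-i}\mapsto\sum_j p(1,\theta_{-i})_{ij}$ is a nonnegative linear form (each good on which $i$ has the strictly larger bid contributes $\theta_k b_{kj}$ for the unique other bidder $k$, and every other good contributes $0$). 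Hence this form is coordinatewise nondecreasing and homogeneous of degree one, so $\tilde\theta_{-i}\ge c\,\theta_{-i}$ coordinatewise for some $c\in(0,1]$ implies $\sum_j p(1,\tilde\theta_{-i})_{ij}\ge c\sum_j p(1,\theta_{-i})_{ij}$.

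I would first establish an intermediate claim: after step~1, either $\theta^{(1)}_i\ge1-\gamma$ or $\sum_j p(\theta^{(1)})_{ij}\ge(1-\gamma)^2B_i$. If $\theta^{(0)}_i\ge1-\gamma$, the step-1 update guard fails for $i$, so $\theta^{(1)}_i=\theta^{(0)}_i\ge1-\gamma$. Otherwise $\theta^{(0)}_i<1-\gamma$, so the hypothesis forces $\sum_j p(\theta^{(0)})_{ij}\ge(1-\gamma)^3B_i$; if $i$ is not updated in step~1 then the guard failing forces $\sum_j p(\theta^{(0)})_{ij}\ge(1-\gamma)^2B_i$ and, since $\theta^{(1)}_i=\theta^{(0)}_i$ while $\theta^{(1)}_{-i}\ge\theta^{(0)}_{-i}$, we get $\sum_j p(\theta^{(1)})_{ij}\ge\sum_j p(\theta^{(0)})_{ij}\ge(1-\gamma)^2B_i$; if $i$ is updated then $\theta^{(1)}_i=\theta^{(0)}_i/(1-\gamma)$ and $\theta^{(1)}_{-i}\ge\theta^{(0)}_{-i}$, so $\sum_j p(\theta^{(1)})_{ij}\ge\tfrac{1}{1-\gamma}\sum_j p(\theta^{(0)})_{ij}\ge(1-\gamma)^2B_i$.

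The second stage pushes this through step~2, splitting on the step-2 guard $\sum_j p(\theta^{(1)})_{ij}>B_i$. If it holds, then $\theta^{(2)}_i=(1-\gamma)\theta^{(1)}_i$ and $\theta^{(2)}_{-i}\ge(1-\gamma)\theta^{(1)}_{-i}$, so $\sum_j p(\theta^{(2)})_{ij}\ge(1-\gamma)^2\sum_j p(\theta^{(1)})_{ij}>(1-\gamma)^2B_i\ge(1-\gamma)^3B_i$ — done without even invoking the intermediate claim. If it fails, then $\theta^{(2)}_i=\theta^{(1)}_i$, and the intermediate claim yields one of two cases: if $\theta^{(1)}_i\ge1-\gamma$ then $\theta^{(2)}_i\ge1-\gamma$ and we are done; if $\sum_j p(\theta^{(1)})_{ij}\ge(1-\gamma)^2B_i$ then, using $\theta^{(2)}_{-i}\ge(1-\gamma)\theta^{(1)}_{-i}$, we get $\sum_j p(\theta^{(2)})_{ij}\ge(1-\gamma)\sum_j p(\theta^{(1)})_{ij}\ge(1-\gamma)^3B_i$.

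I do not anticipate a genuine obstacle; the content is mostly bookkeeping. The two things to get right are (i) the structural observation that two-buyer payments form a coordinatewise-monotone form, homogeneous of degree one, in the opponents' throttling parameters, and (ii) keeping the powers of $(1-\gamma)$ aligned — the crucial point being that step~2 lowers $\theta_i$ only when $\sum_j p(\theta)_{ij}>B_i$ strictly, which leaves exactly the slack needed to absorb the two extra factors of $(1-\gamma)$ coming from lowering both $\theta_i$ and the opponents' parameters.
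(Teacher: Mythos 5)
Your proof is correct and takes essentially the same route as the paper's: a bookkeeping case analysis that tracks buyer $i$ through the two update steps of an iteration, using the two-buyer payment identity \eqref{two_buyer_payment} to argue that step~1 can only increase $\sum_j p(\theta)_{ij}$ (or, when $i$ is updated, increase it by at least a factor $1/(1-\gamma)$) while step~2 decreases it by at most the appropriate power of $(1-\gamma)$. Your organization --- an intermediate claim about the state after step~1 followed by a split on the step-2 guard --- is if anything slightly more careful than the paper's three cases on the initial payment level, since it explicitly covers the scenario in which buyer $i$ itself is scaled down in step~2 and the scenario where $\theta_i \geq 1-\gamma$ with small payment.
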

\begin{proof}
    Consider an iteration of while loop which starts with $\theta = \theta^{(0)}$. We will use $\theta^{(1)}$ and $\theta^{(2)}$ to the denote the value of $\theta$ after step 1 and step 2 of this iteration. If $\sum_j p(\theta^{(0)})_{ij} \geq (1 - \gamma)^2 B_i$ at the beginning of the iteration, then $\sum_j p(\theta^{(2)})_{ij} \geq (1 - \gamma)^3 B_i$ because
    \begin{align*}
        \sum_j p(\theta^{(1)})_{ij} \geq (1 - \gamma)^2 B_i \quad \text{and} \quad \sum_j p(\theta^{(2)})_{ij} \geq (1 - \gamma) \sum_j p(\theta^{(1)})_{ij}
    \end{align*}
    Suppose $(1 - \gamma)^3 B_i \leq  \sum_j p(\theta^{(0)})_{ij} < (1 - \gamma)^2 B_i$ and $\theta^{(0)}_i < 1 -\gamma$ at the start of the iteration. Then, after step 1, we have $(1 - \gamma)^2 B_i \leq  \sum_j p(\theta^{(1)})_{ij} \leq B_i$. Hence, after step 2, we get $(1 - \gamma)^3 B_i \leq  \sum_j p(\theta^{(3)})_{ij}$. 
    
    Finally, suppose $(1 - \gamma)^3 B_i \leq  \sum_j p(\theta^{(0)})_{ij} < (1 - \gamma)^2 B_i$ and $\theta^{(0)}_i \geq 1 -\gamma$ at the start of the iteration. Then, after step 1, we have $\sum_j p(\theta^{(1)})_{ij} \leq B_i$. Hence, after step 2, we still have $\theta^{(2)}_i \geq (1 - \gamma)$. This completes the proof of the lemma.
\end{proof}

Finally, we combine the above lemmas to establish the correctness and polynomial-runtime of the algorithm.

\begin{proof}[Proof of Theorem~\ref{thm:two-buyer}]
    Let $\theta^*$ be the vector of throttling parameters returned by the algorithm. Lemma~\ref{lemma:alg_budget} implies that $\theta^*$ satisfies the budget constraints of every buyer. Furthermore, upon combining $(1 - \gamma)^3 \geq 1 - 3 \gamma$ with the termination condition of the while loop, we get that either $\theta^*_i \geq 1 - \gamma$ or $\sum_j p(\theta^*)_{ij} \geq (1 - 3\gamma) B_i$ for all $i \in [n]$, which makes $\theta^*$ a $(1 - 3 \gamma)$-approximate throttling equilibrium.
    
    Next, we bound the running time of the algorithm. Define $c = \min_i \min\{B_i/ (2\sum_j b_{ij}), 1\}$. Note that $c \leq \theta_i \leq 1$ for all $i \in [n]$ for the entire run of the algorithm. Based on Lemma~\ref{lemma:alg_progress}, we define
    \begin{align*}
        A(\theta) \coloneqq \{i \in [n] \mid\sum_j p(\theta)_{ij} \geq (1 - \gamma)^3 B_i \text{ or } \theta_i \geq 1 - \gamma
    \end{align*}
    Then Lemma~\ref{lemma:alg_progress} simply states that if $i \in A(\theta)$ at the start of iteration $T$ of the while loop, then $i \in A(\theta)$ at the start of all future iterations $t \geq T$. Moreover, recall that the while loop terminates when $A(\theta) = [n]$. 
    
    Observe that, in each iteration of the while loop, $\theta_i \leftarrow \theta_i/(1- \gamma)$ for some $i \notin A(\theta)$. Hence, the total number of iterations of the while loop $T$ satisfies the following equivalent statements:
    \begin{align*}
        \frac{c}{(1 -\gamma)^{T/n}} \leq 1 \iff T \leq \frac{n \log(1/c)}{\log(1/ (1-\gamma))} \leq \frac{n\log(1/c)}{\gamma}
    \end{align*}
    This completes the proof because each iteration takes polynomially many steps.
\end{proof}
\subsection{Proof of Theorem~\ref{thm:poa}}

Fix a throttling equilibrium $\theta \in \Theta$. Recall that we use $X = (X_1, \dots, X_n)$ to capture the random profile of buyers who participate in the auctions, where $X_i = 1$ if and only if buyer $i$ participates in the auctions, and $\Pr(X_i = 1) = \theta_i$. Let $y_{ij}(X)$ be the indicator random variable which equals 1 if and only if good $j$ is allocated to buyer $i$ under the participation profile $X = (X_1, \dots, X_n)$, and is zero otherwise. Moreover, let $p_{j}(X)$ denote the price of item $j$ under the participation profile $X = (X_1, \dots, X_n)$. Here, the price is the highest/second-highest bid for first-price/second-price auctions respectively, and is interpreted to be 0 if no buyers bid in an auction. Observe that
\begin{align*}
    p_{ij}(\theta) = \E\left[ p_{j}(X) y_{ij}(X) \right]\,.
\end{align*}

 Fix a benchmark allocation $y = \{y_{ij}\}$. We begin by establishing the following lemma, which will play a critical role in the proof of the theorem.
\begin{lemma}\label{lemma:poa}
    For all $i \in [n]$, we have
    \begin{align*}
        \min \left\{ \E\left[\sum_{j=1}^m b_{ij} y_{ij}(X) \right], B_i \right\} \geq \min \left\{\sum_{j=1}^m b_{ij} y_{ij}, B_i \right\} - \E\left[ \sum_{j=1}^m p_{j}(X) y_{ij} \right]\,.
    \end{align*}
\end{lemma}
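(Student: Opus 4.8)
The plan is to prove the inequality by a case analysis on whether or not $\sum_{j} b_{ij} y_{ij} \le B_i$, relying on the no-unnecessary-throttling condition of the throttling equilibrium $\theta$. First, observe that for any participation profile $X$, buyer $i$'s expected payment can be decomposed as $p_{ij}(\theta) = \E[p_j(X) y_{ij}(X)]$, and by definition of a throttling equilibrium we have $\sum_j p_{ij}(\theta) \le B_i$, with equality forced (i.e.\ $\sum_j p_{ij}(\theta) = B_i$) whenever $\theta_i < 1$.

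\emph{Case 1: $\theta_i = 1$.} Then buyer $i$ always participates, so for the benchmark allocation $y$, whenever $y_{ij} > 0$ buyer $i$ is present and either wins good $j$ or is outbid by someone willing to pay at least $p_j(X)$; in either case the price satisfies $p_j(X) \le b_{ij}$ on the event that $y_{ij}(X) = 0$ but $y_{ij}>0$, and more directly one shows $\E[\sum_j b_{ij} y_{ij}(X)] + \E[\sum_j p_j(X) y_{ij}] \ge \sum_j b_{ij} y_{ij}$. The key point is a pointwise (in $X$) comparison: for each good $j$, $b_{ij} y_{ij}(X) + p_j(X) y_{ij} \ge b_{ij} y_{ij}$, because either buyer $i$ wins $j$ under $X$ (left term already $\ge b_{ij} y_{ij}(X) = b_{ij}$, and since $y_{ij}\le 1$ this dominates $b_{ij} y_{ij}$), or buyer $i$ does not win $j$ under $X$, in which case $p_j(X) \ge b_{ij}$ (the winning price must beat $i$'s bid, as $i$ is participating since $\theta_i=1$), so $p_j(X) y_{ij} \ge b_{ij} y_{ij}$. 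Summing over $j$ and taking expectations, then using $\min\{a,B_i\} \ge \min\{a',B_i\} - c$ type manipulations, yields the claim; since $\min\{\cdot,B_i\}$ is $1$-Lipschitz and monotone, $\min\{\E[\sum_j b_{ij}y_{ij}(X)], B_i\} \ge \min\{\sum_j b_{ij} y_{ij} - \E[\sum_j p_j(X)y_{ij}], B_i\} \ge \min\{\sum_j b_{ij}y_{ij},B_i\} - \E[\sum_j p_j(X) y_{ij}]$.

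\emph{Case 2: $\theta_i < 1$.} Then $\sum_j p_{ij}(\theta) = B_i$, so $\E[\sum_j p_j(X) y_{ij}(X)] = B_i$. Combined with $\E[\sum_j p_j(X) y_{ij}] \ge 0$ and $p_j(X) y_{ij}(X) \le p_j(X) \cdot 1$, the right-hand side $\min\{\sum_j b_{ij}y_{ij}, B_i\} - \E[\sum_j p_j(X)y_{ij}] \le B_i - 0 = B_i$ is trivially bounded; but we also need it bounded by the left-hand side. The cleanest route here is: the right-hand side is at most $\sum_j b_{ij}y_{ij} - \E[\sum_j p_j(X)y_{ij}]$, and one shows the pointwise bound $b_{ij}y_{ij}(X) \ge b_{ij}y_{ij} - p_j(X)y_{ij}$ fails in general, so instead bound the right-hand side by $B_i$ using $\sum_j p_j(X)y_{ij} \ge \sum_j p_j(X) y_{ij}(X)$ — wait, this does not hold pointwise either. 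The correct argument in Case 2 is to note the right-hand side is $\le B_i$ directly is false in general too; rather, since $\sum_j p_{ij}(\theta)=B_i$ we get the right side $\le \min\{\sum_j b_{ij}y_{ij},B_i\} \le B_i = \E[\sum_j p_j(X)y_{ij}(X)] \le \E[\sum_j b_{ij} y_{ij}(X)]$ (the last step because the price paid is at most the winner's bid), and this chain gives the right side $\le \E[\sum_j b_{ij}y_{ij}(X)]$; combined with the right side $\le B_i$ from monotonicity of $\min$, we conclude the right side $\le \min\{\E[\sum_j b_{ij}y_{ij}(X)], B_i\}$, which is exactly the claim.

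The main obstacle I anticipate is getting Case 1 exactly right: the pointwise inequality $p_j(X) \ge b_{ij}$ on the event that $i$ participates but does not win good $j$ needs care, since in a first-price auction the price is the winner's bid (which beats $b_{ij}$ by the tie-breaking convention $b_{kj} > b_{ij}$), while in a second-price auction the price is the second-highest bid, which is at least $b_{ij}$ precisely because $i$'s losing bid $b_{ij}$ is among the bids present. Handling both auction formats uniformly, and correctly threading the $1$-Lipschitz/monotonicity properties of $a \mapsto \min\{a, B_i\}$ through the expectations, is where the real work lies; the $\theta_i < 1$ case is comparatively routine once one uses that the budget binds exactly.
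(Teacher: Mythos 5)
Your proposal is correct and takes essentially the same route as the paper: the same case split on $\theta_i$, with your Case~2 being the paper's binding-budget argument for $\theta_i<1$ (namely $\min\{\sum_j b_{ij}y_{ij},B_i\}-\E[\sum_j p_j(X)y_{ij}]\le B_i=\E[\sum_j p_j(X)y_{ij}(X)]\le\E[\sum_j b_{ij}y_{ij}(X)]$) and your Case~1 being the paper's $\theta_i=1$ argument, merely phrased as a pointwise-in-$X$ inequality plus monotonicity/1-Lipschitzness of $a\mapsto\min\{a,B_i\}$ instead of the paper's two expectation inequalities. The hesitant asides in your Case~2 are harmless: the direct bound that the right-hand side is at most $B_i$ is in fact true (the subtracted expectation is nonnegative), and the final chain you settle on is exactly the paper's.
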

\begin{proof}
    We consider two cases. First assume that $\theta_i < 1$. Then, the no-unnecessary-throttling condition implies that $\sum_{j=1}^m p_{ij}(\theta) = B_i$. Now, observe that $y_{ij}(X) > 0$ only if $b_{ij} \geq p_j(X)$. Consequently, we have
    \begin{align*}
        \E\left[ \sum_{j=1}^m b_{ij} y_{ij}(X)  \right] \geq \E\left[ \sum_{j=1}^m p_j(X) y_{ij}(X)  \right] = \sum_{j=1}^m p_{ij}(\theta) =  B_i\,.
    \end{align*}
    Hence, we get 
    \begin{align*}
        \min \left\{ \E\left[\sum_{j=1}^m b_{ij} y_{ij}(X) \right], B_i \right\} &= B_i\\
        &\geq \min \left\{\sum_{j=1}^m b_{ij} y_{ij}, B_i \right\}\\
        &\geq  \min \left\{\sum_{j=1}^m b_{ij} y_{ij}, B_i \right\} - \E\left[ \sum_{j=1}^m p_{j}(X) y_{ij} \right]\,,
    \end{align*}
    thereby establishing the required lemma statement for a buyer $i$ such that $\theta_i < 1$.

    Next, consider a buyer $i$ such that $\theta_i = 1$, i.e., buyer $i$ always participates. Since $p_j(X) > b_{ij}$ whenever $y_{ij}(X) < 1$, we have
    \begin{align*}
        0 \geq \E\left[ (b_{ij} - p_j(X)) (1 - y_{ij}(X)) y_{ij} \right]\,.
    \end{align*}
    Moreover, we also have
    \begin{align*}
        \E\left[ b_{ij} y_{ij}(X) \right] \geq \E\left[ (b_{ij} - p_j(X)) y_{ij}(X) y_{ij}\right]
    \end{align*}
    Adding the two inequalities, we get
    \begin{align*}
        \E\left[ b_{ij} y_{ij}(X) \right] &\geq \E\left[ (b_{ij} - p_j(X)) (1 - y_{ij}(X)) y_{ij} \right] + \E\left[ (b_{ij} - p_j(X)) y_{ij}(X) y_{ij}\right] = \E\left[ (b_{ij} - p_j(X)) y_{ij} \right]\,.
    \end{align*}
    Summing over all goods $j \in [m]$ yields
    \begin{align*}
        \E\left[ \sum_{j=1}^m b_{ij} y_{ij}(X) \right] &\geq \sum_{j=1}^m b_{ij} y_{ij} - \E\left[ \sum_{j=1}^m p_{j}(X) y_{ij} \right] \\
        &\geq \min \left\{\sum_{j=1}^m b_{ij} y_{ij}, B_i \right\} - \E\left[ \sum_{j=1}^m p_{j}(X) y_{ij} \right]
    \end{align*}
    Additionally, we also have
    \begin{align*}
        B_i \geq \min \left\{\sum_{j=1}^m b_{ij} y_{ij}, B_i \right\} \geq \min \left\{\sum_{j=1}^m b_{ij} y_{ij}, B_i \right\} - \E\left[ \sum_{j=1}^m p_{j}(X) y_{ij} \right]
    \end{align*}
    Therefore,
    \begin{align*}
        \min \left\{ \E\left[\sum_{j=1}^m b_{ij} y_{ij}(X) \right], B_i \right\} \geq \min \left\{\sum_{j=1}^m b_{ij} y_{ij}, B_i \right\} - \E\left[ \sum_{j=1}^m p_{j}(X) y_{ij} \right]\,.
    \end{align*}
    This concludes the lemma by establishing it for buyers $i$ with $\theta_i = 1$.
\end{proof}

With Lemma~\ref{lemma:poa} in hand, we are ready to prove the theorem. First, note that
\begin{align*}
    \sum_{i=1}^m \min \left\{ \E\left[\sum_{j=1}^m b_{ij} y_{ij}(X) \right], B_i \right\} &\geq \sum_{i=1}^m \min \left\{\sum_{j=1}^m b_{ij} y_{ij}, B_i \right\} - \sum_{i=1}^m \E\left[ \sum_{j=1}^m p_{j}(X) y_{ij} \right]\\
    &= \sum_{i=1}^m \min \left\{\sum_{j=1}^m b_{ij} y_{ij}, B_i \right\} - \E\left[ \sum_{j=1}^m p_{j}(X) \sum_{i=1}^m y_{ij} \right]\\
    &= \sum_{i=1}^m \min \left\{\sum_{j=1}^m b_{ij} y_{ij}, B_i \right\} - \E\left[ \sum_{j=1}^m p_{j}(X) \right]\\
    &\geq \sum_{i=1}^m \min \left\{\sum_{j=1}^m b_{ij} y_{ij}, B_i \right\} - \E\left[ \sum_{j=1}^m p_{j}(X) \sum_{i=1}^m y_{ij}(X) \right]\\
    &= \sum_{i=1}^m \min \left\{\sum_{j=1}^m b_{ij} y_{ij}, B_i \right\} - \sum_{i=1}^m \E\left[ \sum_{j=1}^m p_{j}(X)  y_{ij}(X) \right]
\end{align*}
where the second inequality follows from the observation that a good is always allocated whenever it has a positive bid, i.e., $\sum_{i=1}^m y_{ij}(X) = 1$ whenever $p_j(X) > 0$. Hence, if we can show that
\begin{align}\label{eqn:poa-required}
    \sum_{i=1}^m \min \left\{ \E\left[\sum_{j=1}^m b_{ij} y_{ij}(X) \right], B_i \right\} \geq \sum_{i=1}^m \E\left[ \sum_{j=1}^m p_{j}(X)  y_{ij}(X) \right] \,,
\end{align}
we will get
\begin{align*}
    &\sum_{i=1}^m \min \left\{ \E\left[\sum_{j=1}^m b_{ij} y_{ij}(X) \right], B_i \right\} \geq \sum_{i=1}^m \min \left\{\sum_{j=1}^m b_{ij} y_{ij}, B_i \right\} - \sum_{i=1}^m \min \left\{ \E\left[\sum_{j=1}^m b_{ij} y_{ij}(X) \right], B_i \right\} \\
    \iff &\sum_{i=1}^m \min \left\{ 2 \cdot \E\left[\sum_{j=1}^m b_{ij} y_{ij}(X) \right], B_i \right\} \geq  \sum_{i=1}^m \min \left\{\sum_{j=1}^m b_{ij} y_{ij}, B_i \right\}\,
\end{align*}
and thereby complete the proof, because the benchmark allocation $y$ and the throttling equilibrium $\theta$ are both arbitrary. In the remainder, we establish \eqref{eqn:poa-required}.

Since $y_{ij}(X) > 0$ only when $b_{ij} \geq p_j(X)$, we have
\begin{align*}
    \E\left[\sum_{j=1}^m b_{ij} y_{ij}(X) \right] \geq \E\left[\sum_{j=1}^m p_j(X) y_{ij}(X) \right]\,.
\end{align*}
Moreover, the budget constraint of buyer $i$ implies
\begin{align*}
    B_i \geq \E\left[\sum_{j=1}^m p_j(X) y_{ij}(X) \right]\,.
\end{align*}
Combining the two inequalities, we get:
\begin{align*}
    \min \left\{ \E\left[\sum_{j=1}^m b_{ij} y_{ij}(X) \right], B_i \right\} \geq \E\left[ \sum_{j=1}^m p_{j}(X)  y_{ij}(X) \right]\,.
\end{align*}
Summing over all buyers $i \in [n]$ yields \eqref{eqn:poa-required}, as required.

\section{Appendix: Examples for Section~\ref{sec:TE-PE}}\label{appendix:TE-PE}

First, we provide an example to show that the inequality REV(TE) $\leq 2 \times \text{REV(TE)}$ is tight.

\begin{example}
    Consider the throttling game in which there is 1 good and 2 buyers. The bids are given by $b_{11} = 1/\epsilon$, $b_{21} = 1 - \epsilon$ for $\epsilon > 0$ and the budgets are given by $B_1 = 1$, $B_2 = \infty$. Then, in the unique pacing equilibrium, we have $\alpha_1 = \epsilon$ and $\alpha_2 = 1$, whereas in the unique throttling equilibrium, we have $\theta_1 = \epsilon$ and $\theta_2 = 1$. Hence, REV(PE) = 1 and REV(TE) = $1 + (1 - \epsilon)^2$. Since, this is true for arbitrarily small $\epsilon$, we get that the inequality REV(TE) $\leq 2 \times \text{REV(TE)}$ is tight established in Theorem~\ref{thm:revenue-comparison} is tight.
\end{example}

Next, we give a family of examples for which REV(PE) is arbitrarily close to $(4/3) \times \text{REV(TE)}$.

\begin{example}
    Consider a throttling game with 2 goods and 2 buyers. Fix $\epsilon> 0$. The bids are given by $b_{11} = 1 + \epsilon$, $b_{12} = 1$ and $b_{21} = 1$. Moreover, the budgets are given by $B_1 = 1 - \epsilon$ and $B_2 = \infty$. Then, the unique pacing equilibrium is given by $\alpha_1 = 1 - \epsilon$, $\alpha_2 = 1$, and the unique throttling equilibrium is given by $\theta_1 = (1 - \epsilon)/(2 + \epsilon)$, $\theta_2 = 1$. Since $\epsilon$ was arbitrary, we can take it to be arbitrarily small. In which case, we get REV(PE) $\simeq$ 2 and REV(PE) $\simeq$ 1.5, as desired.
\end{example}

\end{document}